\newcommand{\cK}{\mathcal{K}}
\newcommand{\cB}{\mathcal{B}}
\newcommand{\cH}{\mathcal{H}}
\newcommand{\cE}{\mathcal{E}}
\newcommand{\cD}{\mathcal{D}}
\newcommand{\cO}{\mathcal{O}}
\newcommand{\cF}{\mathcal{F}}
\newcommand{\sfh}{\mathsf{h}}
\newcommand{\sh}{\mathsf{h}}
\newcommand{\bS}{\mathbf{S}}
\newcommand{\bsh}{\cB(L^{2,s}(\bR^d),\sh)}
\newcommand{\bN}{\mathbf{N}}
\newcommand{\ip}[2]{\langle{#1},{#2}\rangle}
\newcommand{\pac}{P_{\rm ac}(H)}
\newtheorem{theorem}{Theorem}[section]
\newtheorem{proposition}[theorem]{Proposition}
\newtheorem{lemma}[theorem]{Lemma}
\theoremstyle{definition}
\newtheorem{assumption}[theorem]{Assumption}
\newtheorem{remark}[theorem]{Remark}
\newtheorem{definition}[theorem]{Definition}
\numberwithin{equation}{section}
\newcommand{\bR}{\mathbf R}
\newcommand{\bC}{\mathbf C}
\newcommand{\jap}[1]{\langle{#1}\rangle}
\newcommand{\ket}[1]{\vert{#1}\rangle}
\newcommand{\bra}[1]{\langle{#1}\vert}
\newcommand{\ve}{\varepsilon}
\DeclareMathOperator{\re}{Re}
\DeclareMathOperator{\im}{Im}
\newcommand{\abs}[1]{\lvert{#1}\rvert}
\newcommand{\norm}[1]{\lVert{#1}\rVert}
\newcommand{\rg}{\rho_{\gamma}}
\newcommand{\rmg}{\rho_{-\gamma}}
\begin{document}

\title{Metastable states when the Fermi Golden Rule constant
vanishes\footnote{\copyright~2013 by the authors. This paper may be reproduced, in its entirety, for non-commercial
purposes.}}
\author{
Horia D. Cornean\\
Department of Mathematical Sciences, Aalborg University\\
 Fr. Bajers Vej
7G, DK-9220 Aalborg \O, Denmark\\
E-mail: \texttt{cornean@math.aau.dk}\\[5pt]
Arne Jensen\\
Department of Mathematical Sciences, Aalborg University\\
 Fr. Bajers Vej
7G, DK-9220 Aalborg \O, Denmark\\
E-mail: \texttt{matarne@math.aau.dk}\\[5pt]
Gheorghe Nenciu\\
 Institute of Mathematics of the Romanian Academy, Research Unit 1\\ Calea Grivitei 21, RO-010702 Bucharest, Romania\\
 E-mail: \texttt{Gheorghe.Nenciu@imar.ro}}
\date{}
\maketitle

\begin{abstract}
Resonances appearing by perturbation of embedded non-degenerate
eigenvalues are studied in the case when the Fermi Golden Rule constant 
vanishes. Under appropriate smoothness properties for the resolvent of the unperturbed
Hamiltonian, it is proved that the first order Rayleigh-Schr\"odinger expansion exists. 
The corresponding metastable states are constructed using this
truncated expansion. We show that their exponential decay law has 
both the decay rate and the error term of order $\ve^4$,  where $\ve$ is the perturbation strength.
\end{abstract}

\section{Introduction}\label{section1}
In this paper we continue our study of the decay laws for resonances produced by perturbation of eigenvalues embedded 
in the continuous spectrum. 
More precisely, one considers an unperturbed Hamiltonian $H$ having a
non-degenerate eigenvalue $E_0$ embedded in its continuous spectrum. The degenerate case is by far 
more complicated and will be not discussed in this paper; we send the
reader 
to 
\cite{JN4,Or,Hu,MS}
and 
references therein for the  results known in this case.

Our  problem is to study the fate of the unperturbed state $\Psi_0$
corresponding to $E_0$ when adding a perturbation $W$  of strength
$\ve \ll 1$ so 
that the Hamiltonian becomes $H_{\ve}=H+\ve W$. The answers
are quite different depending on whether the unperturbed eigenvalue is
situated near an energetic threshold or far 
away inside 
the continuous spectrum. 

In the present paper we 
revisit the case of properly embedded eigenvalues, while the previous papers in this series 
(see \cite{DJN1} and references therein)
were mainly concerned with the threshold case for which there were no 
rigorous results available (however, see \cite{JN4,JN5} and Section 4 in 
\cite{JN2} for results concerning properly embedded eigenvalues). 

The problem of the decay laws for resonances in general and, in
particular, for resonances produced by perturbation of 
eigenvalues embedded in the continuous spectrum, has a distinguished
and ramified history ranging from experimental to rigorous 
levels, see e.g. \cite{CS,FGR,Gr,LZMM,NNP,Da,D,Ex,
N,Nic,CGH,Hu,Or,JN2,MS,SW,KR,H}, and
references given there. As is well known, the notion of `resonance' occurs often. 
It has many definitions and its meaning depends upon the context. 
For
example, in spectral and scattering theory a resonance is a complex number which may be a 
pole in the analytic continuation of the 
resolvent of the corresponding Hamiltonian, or an eigenvalue of the
dilated Hamiltonian. There is a huge literature about the subject,
both at  the mathematical level and at the physical level. We refer to \cite{H} for a comprehensive review.

The scope of our paper is limited. We restrict ourselves to the 
perturbative setting described above and we are only interested in dynamical
aspects in a Hilbert space $\cH$. In what follows, by a resonance (probably `metastable
state' is a better name) we shall understand a {\em pair} $(\Psi_{\ve},E_{\ve})$ such that 
$\Psi_{\ve} \in \cH$, $\|\Psi_{\ve}\|=1$ and $\lim_{\ve\to 0}\|\Psi_{\ve}-\Psi_{0}\|=0$ 
(`resonance eigenfunction'),  
and $E_{\ve} \in \bC$ with $\im E_{\ve} \leq 0$, (`resonance position') 
satisfying with some
accuracy the exponential decay law for the survival amplitude:
\begin{equation}\label{A}
\langle \Psi_{\ve}, e^{-itH_{\ve}} \Psi_{\ve} \rangle \simeq e^{-itE_{\ve}}.
\end{equation}
If the bound state survives after turning on the perturbation, then the resonance pair is given by the  
corresponding bound state eigenfunction and eigenvalue 
of 
$H_{\ve}$ for which  
 equality is realized in \eqref{A}. If the eigenvalue disappears for $\ve> 0$, then the situation is by far less clear. First of all,
as is well known, the semi-boundedness of $H_{\ve}$ forbids the equality in \eqref{A} so we are left with the problem of finding 
$(\Psi_{\ve},E_{\ve})$ such that:
\begin{align}\label{A<}
\sup_{t\geq 0}|\langle \Psi_{\ve}, e^{-itH_{\ve}} \Psi_{\ve} \rangle - e^{-itE_{\ve}}| &\leq \delta (\ve),\\
\lim_{\ve \rightarrow 0}\delta(\ve)=0,\quad 
\lim_{\ve\to 0}\|\Psi_{\ve}-\Psi_{0}\|&=0.\label{Abis}
\end{align}
Clearly, \eqref{A<} and \eqref{Abis} do not define the pair $(\Psi_{\ve},E_{\ve})$ uniquely and  this adds to the difficulty of the subject;
for the moment, the best one can do is to  search for pairs $(\Psi_{\ve},E_{\ve})$ 
leading to a $\delta(\ve)$ as small as possible.

A natural candidate for $\Psi_{\ve}$ is just the unperturbed eigenvector $\Psi_0$. 
With the exception of \cite{Hu} all the existing rigorous results are related to the (quasi-)exponential decay law for
$\langle \Psi_{0}, e^{-itH_{\ve}} \Psi_{0} \rangle $, at least as far as we know.
The story  started in the early days of quantum mechanics with 
the  computation by Dirac  of the decay rate in second order
  time-dependent perturbation theory, leading 
to the well known exponential decay law,
$e^{-2\ve^2\Gamma t}$, for the survival probability. Here $\Gamma$ is
given by the famous Fermi Golden Rule (FGR) constant: 
\begin{equation}\label{Gammy}
\Gamma \sim
|\langle \Psi_0,  W\Psi_{{\rm cont}, E_0}\rangle|^2, 
\end{equation}
 where  $\Psi_{{\rm cont},E_0}$ is a generalized eigenfunction
 corresponding to $E_0$ in the continuous spectrum (assumed to have multiplicity one). The FGR formula
 has been so influential that the common wisdom
 in theoretical physics is that the decay law for the
resonances produced by perturbation of non-degenerate bound states is
exponential. However, since the decay law cannot be exactly exponential 
at the rigorous level (for semi-bounded Hamiltonians), the crucial problem
is the estimation of the errors. This proved to be a hard problem, and
only during the past decades consistent rigorous results have been
obtained. It turns out that (see \cite{CGH,CS,
Hu,JN2, KR, MS, Or, SW} and the references given there) the
decay law is indeed \mbox{(quasi-)exponential}, i.e. exponential up to error terms
vanishing in the limit $\ve \rightarrow 0$, if  the resolvent of
the unperturbed Hamiltonian is sufficiently smooth, when projected onto
the subspace orthogonal to the eigenvalue under consideration. First of all, in the dilation analytic
setting of the Balslev-Combes theory \cite{BC} there is a mathematically
well defined  candidate for the resonance position, $E_{\ve}$,
namely the perturbed eigenvalue of the dilated Hamiltonian. In this context Hunziker \cite{Hu} proved that
\begin{equation}\label{A_0}
|\langle \Psi_0, e^{-itH_{\ve}} \Psi_0 \rangle -e^{-itE_{\ve}}| \lesssim \ve^2.
\end{equation}
Since dilation analyticity is a strong assumption, much effort has
been devoted to the extension of the above result to 
the case when analyticity is relaxed to some smoothness conditions
(see \cite{CGH,CS,
JN2, MS, Or, SW}  and the references given there). More precisely,
if the resolvent of
the unperturbed Hamiltonian is sufficiently smooth, when projected onto
the subspace orthogonal to the spectral subspace corresponding to the eigenvalue under consideration, it  has been proved
that
one can {\em find} $E_{\ve}$, $\im E_{\ve}\leq 0$, such that 
\eqref{A_0} holds true, see Theorem 4.2ii in \cite{JN2} and its slight refinement in the present Section~\ref{section2}.
 Moreover, it turns out that if $\Gamma$ 
given by the FGR is nonzero, then  $\im E_{\ve}=-\ve^2\Gamma +
 \cO(\ve^3)$, which is consistent with the FGR formula.  The discussion
 in the next paragraphs strongly suggests that 
 the error term in \eqref{A_0} is optimal with respect to the power of $\ve$.

The  problem considered here is  whether the error term can
be made \emph{smaller} by choosing a better ansatz for the initial state by replacing 
$\Psi_0$ with a properly chosen, $\ve$-dependent, resonance
eigenfunction. 
In the Balslev-Combes dilation analytic setting this question has been
 addressed already by Hunziker \cite{Hu}. In that context
(see \cite{Si})   the resonance position has a clear-cut
spectral meaning. Hunziker proved that {\em if} $E_0$ {\em is 
isolated} and the formal Rayleigh-Schr\"{o}dinger (R-S) perturbation expansion for the perturbed
eigenfunction is well defined up to order $\ve^N$ (as it is the case for the atomic Stark effect), 
then by using as the resonance eigenfunction the normalized truncated R-S series, one can improve the error term to be of 
order $\ve^{2N+2}$. Here we also have $|\im E_{\ve}| \lesssim
\ve^{2N+2}$. For the embedded case (even in the dilation analytic 
case), the problem of improving the error term by choosing a better
ansatz for the initial state remained open. Due to Hunziker's results,
the natural conjecture is that under appropriate smoothness
conditions, the existence of the formal R-S 
perturbation expansion up to order $N$ should lead to an exponential
decay law with a smaller error term. For the case of embedded
eigenvalues the R-S series generally breaks down already at order $N=1$. Hence there are two
 problems to be solved. The first one is to 
  seek conditions under which the R-S perturbation expansion exists up
  to some order $N\geq 1$ and then to construct the `corrected'
  resonance eigenfunction. 
The second (harder) one is to prove that
under appropriate smoothness conditions, the new error term is indeed smaller.

The main result of the paper is a positive answer to both questions 
 for $N=1$ in the R-S expansion, see Proposition~\ref{bdd} and Theorem~\ref{ve4} below. 
More precisely,  suppose $\Gamma$ as given by the FGR vanishes, while
the second derivative with respect to the
energy of the generalized eigenfunction(s) of 
the unperturbed Hamiltonian exists in a neighborhood of $E_0$ and is $\theta$-H\"older continuous with some
$\theta >0$. Then the formal R-S perturbation expansion 
exists to order $N=1$, and for the corresponding initial value 
one can prove an exponential decay similar to \eqref{A_0} with both decay rate, $\im E_{\ve}$, and  error term of 
order $\ve^4$ or smaller. 

The contents of the paper is as follows. In Section \ref{section2} we give the main
results with an outline of proofs. Section \ref{section3} contains 
the technical details. In Section \ref{section4} we present a class of two channel
Schr\"odinger operators for which our abstract theory applies. In two Appendices
we collect, in a form appropriate for us, some known facts about
H\"older properties of the Cauchy integral transform, and about resolvent smoothness and $\Gamma$-operator for one body
Schr\"odinger operators, respectively.

\section{The results and outline of proofs}\label{section2}
Throughout the paper `$s$ sufficiently small' is a shorthand for
`there exists $s_0 >0$ such that for $0<s<s_0$'. Also,  for $A,B \geq
0$, we write $A\lesssim B$ instead of `there exists a constant  $0<C
<\infty$, independent of $A$ and $B$, such that $A\leq C B$'.

Our results are model independent, in the sense that only the boundedness of the perturbation and some 
smoothness of the resolvent of the unperturbed Hamiltonian are
demanded. For example, in Theorem \ref{ve4} we  require that the second 
derivative with respect to the spectral parameter of the generalized
eigenfunctions of $H$ exists and is H\"older continuous. Therefore we  
develop the theory at the abstract level and verify these assumptions for each concrete application. 

Let $H$ be a self-adjoint 
operator in a Hilbert space $\cH$ and
$E_{0}$ a non-degenerate eigenvalue of $H$, while $P_0$ is the
corresponding orthogonal projection:
$$HP_{0}=E_{0}P_{0},\;\;\dim P_{0} =1,\quad P_0\Psi_0=\Psi_0,\quad
\|\Psi_0\| =1, \quad Q_0=1-P_0.$$ 

Without loss of 
generality we can take $E_0=0$ in what follows. 
We denote by $P(\Delta)$ the spectral measure of $H$. The first basic assumption is that except  eigenvalue zero 
$H$ only has absolutely continuous spectrum in some neighborhood of the origin: 

\begin{assumption}\label{Aac}
There exists $a>0$ such that  $J_a \cap
\sigma_{\rm pp}(H)=\{0\}$ and $J_a \cap
\sigma_{\rm sc}(H)=\emptyset$, where $J_a=(-a,a)$.
\end{assumption}

Then we add a perturbation $W$ of strength $\ve >0$, $\ve \rightarrow 0$, and consider the perturbed operator
\begin{equation}
H_{\ve}=H+\ve W.
\end{equation}
 In order to keep the technicalities at a reasonable level we impose:
\begin{assumption}\label{Bdd}
$W$ is self-adjoint and bounded.
\end{assumption}
Adding some supplementary conditions, one can extend the results of this
paper to the case when $W$ is only relatively compact 
with respect to $H$. The case of singular perturbations is much harder,
and detailed results are only known in specific cases, 
as for example the Stark effect; we shall not consider the case of singular perturbations here.

The most natural candidate for the resonance eigenfunction is $\Psi_0$. Thus the most studied object has been the
amplitude of the survival probability of the unperturbed eigenfunction:
\begin{equation}\label{A^0}
A^0(\ve,t):=\langle \Psi_0,e^{-itH_{\ve}}\Psi_0\rangle.
\end{equation}
Let us first discuss the trivial case when $E_0=0$ is isolated and
lies in
the resolvent set of $Q_0HQ_0$. In this case the natural 
choice for $E_{\ve}$ in \eqref{A_0} is just the perturbed eigenvalue:
\begin{equation}
H_{\ve}\Psi_{\ve}=E_{\ve}\Psi_{\ve},\;\; \| \Psi_{\ve} \| =1,\;\;\|\Psi_{\ve}-\Psi_0\|\lesssim \ve.
\end{equation}
Using  $(e^{\pm itH_{\ve}}-e^{\pm itE_{\ve}})\Psi_{\ve}=0$ we have:
\begin{equation}\label{Psi0Psi}
A^0(\ve,t)-e^{-itE_{\ve}}=\langle (e^{-itH_{\ve}}-e^{-itE_{\ve}})(\Psi_0 -\Psi_{\ve}),\Psi_0 -\Psi_{\ve}\rangle,
\end{equation}
which implies:
\begin{equation}\label{A0reg}
|A^0(\ve,t)-e^{-itE_{\ve}}| \leq 2 \| \Psi_0 -\Psi_{\ve}\|^2 \lesssim \ve^2.
\end{equation}

Due to the following result the
estimate in \eqref{A0reg} is optimal, if $\Psi_0$ is not replaced with a
better choice:   
\begin{proposition}\label{lowb}
 Suppose that for sufficiently small $\ve$ there exists exactly one (possibly
 embedded) eigenvalue $E_{\ve}$, while the singular continuous
 spectrum is empty. Assume that
 there exists $\Psi_1\neq 0$ such that $\|\Psi_{\ve}-\Psi_0-\ve
 \Psi_1\|=o(\ve)$ and $\langle \Psi_0,\Psi_1\rangle =0$. Then 
there exists $C >0$
 such that
 $$\sup_{t\geq 0}|A^0(\ve,t)-e^{-itE_{\ve}}|\geq C\ve^2.$$
\end{proposition}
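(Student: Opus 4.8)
The plan is to read off the lower bound from the \emph{long-time} behaviour of $A^0(\ve,t)$, rather than from a fixed-time estimate. Let $\Psi_\ve$ (with $H_\ve\Psi_\ve=E_\ve\Psi_\ve$ and $\norm{\Psi_\ve}=1$) be the eigenfunction appearing in the statement; since $H_\ve$ is self-adjoint, $E_\ve\in\bR$. By the hypotheses, $\sigma_{\rm sc}(H_\ve)=\emptyset$ and $E_\ve$ is the only eigenvalue of $H_\ve$, hence $\cH=\bC\Psi_\ve\oplus\cH_{\rm ac}(H_\ve)$ with $\cH_{\rm ac}(H_\ve)=\{\Psi_\ve\}^{\perp}$. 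I would then decompose $\Psi_0=c_\ve\Psi_\ve+\phi_\ve$ with $c_\ve=\ip{\Psi_\ve}{\Psi_0}$ and $\phi_\ve\in\cH_{\rm ac}(H_\ve)$, so that $\norm{\phi_\ve}^2=1-\abs{c_\ve}^2$.

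The first step is to compute $A^0(\ve,t)$ in this decomposition. Using $e^{-itH_\ve}\Psi_\ve=e^{-itE_\ve}\Psi_\ve$ and $\ip{\Psi_\ve}{\phi_\ve}=0$, a short computation gives
\begin{equation*}
e^{itE_\ve}A^0(\ve,t)=\abs{c_\ve}^2+\ip{\phi_\ve}{e^{-it(H_\ve-E_\ve)}\phi_\ve}=1-\norm{\phi_\ve}^2+\ip{\phi_\ve}{e^{-it(H_\ve-E_\ve)}\phi_\ve},
\end{equation*}
hence, since $\abs{e^{itE_\ve}}=1$,
\begin{equation*}
\bigl|A^0(\ve,t)-e^{-itE_\ve}\bigr|=\bigl|{-}\norm{\phi_\ve}^2+\ip{\phi_\ve}{e^{-it(H_\ve-E_\ve)}\phi_\ve}\bigr|.
\end{equation*}
Since $\phi_\ve\in\cH_{\rm ac}(H_\ve)$, the spectral measure of $H_\ve$ associated with $\phi_\ve$ is absolutely continuous, so by the Riemann--Lebesgue lemma $\ip{\phi_\ve}{e^{-it(H_\ve-E_\ve)}\phi_\ve}\to 0$ as $t\to+\infty$. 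Letting $t\to+\infty$ in the second identity therefore yields
\begin{equation*}
\sup_{t\geq 0}\bigl|A^0(\ve,t)-e^{-itE_\ve}\bigr|\geq\lim_{t\to+\infty}\bigl|A^0(\ve,t)-e^{-itE_\ve}\bigr|=\norm{\phi_\ve}^2=1-\abs{c_\ve}^2,
\end{equation*}
so it remains to bound $1-\abs{c_\ve}^2$ from below by a positive multiple of $\ve^2$.

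For this last step I would use the hypothesis $\ip{\Psi_0}{\Psi_1}=0$ in two ways. Writing $\Psi_\ve=\Psi_0+\ve\Psi_1+r_\ve$ with $\norm{r_\ve}=o(\ve)$, on the one hand $c_\ve=\ip{\Psi_\ve}{\Psi_0}=1+\ip{r_\ve}{\Psi_0}=1+o(\ve)$, so $\im c_\ve=o(\ve)$; on the other hand, expanding $\norm{\Psi_\ve-\Psi_0}^2=\norm{\ve\Psi_1+r_\ve}^2=\ve^2\norm{\Psi_1}^2+o(\ve^2)$ and comparing it with the identity $\norm{\Psi_\ve-\Psi_0}^2=2-2\re c_\ve$ gives $\re c_\ve=1-\tfrac12\ve^2\norm{\Psi_1}^2+o(\ve^2)$. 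Therefore
\begin{equation*}
\abs{c_\ve}^2=(\re c_\ve)^2+(\im c_\ve)^2=1-\ve^2\norm{\Psi_1}^2+o(\ve^2),
\end{equation*}
so $1-\abs{c_\ve}^2=\ve^2\norm{\Psi_1}^2+o(\ve^2)\geq\tfrac12\ve^2\norm{\Psi_1}^2$ for $\ve$ small enough. Since $\Psi_1\neq 0$, this proves the claim with $C=\tfrac12\norm{\Psi_1}^2$.

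I do not expect a serious obstacle; the computations are short and the only spectral-theory input is the Riemann--Lebesgue lemma. The delicate point is the last step: the orthogonality $\ip{\Psi_0}{\Psi_1}=0$ is what forces $\im c_\ve=o(\ve)$ and removes the potential order-$\ve$ contribution to $\re c_\ve$, which is precisely what makes $\abs{c_\ve}^2\leq 1-c\,\ve^2$. Without it one could be in the degenerate situation $\Psi_1\in\bC\Psi_0$, i.e.\ $\Psi_\ve$ equal to $\Psi_0$ up to a phase, in which case $1-\abs{c_\ve}^2=o(\ve^2)$ and the asserted bound is false. One should also keep in mind that the Riemann--Lebesgue limit is taken along $t\to+\infty$, which is exactly what bounds $\sup_{t\geq 0}$ from below.
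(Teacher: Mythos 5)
Your proof is correct, and it reaches the same lower bound by the same two essential ideas as the paper (send $t\to\infty$, use the Riemann--Lebesgue decay of the absolutely continuous part, and use $\langle\Psi_0,\Psi_1\rangle=0$ to force the leading deficit to be $\ve^2$), but the organization is genuinely cleaner. The paper starts from the Hunziker identity \eqref{Psi0Psi}, expands $\Psi_0-\Psi_\ve=-\ve\Psi_1+o(\ve)$, and then must still split $\Psi_1$ into $P_\ve\Psi_1$ and $Q_\ve\Psi_1$ and control the cross terms by $\cO(\ve)$ before passing to the limit. You instead decompose $\Psi_0=c_\ve\Psi_\ve+\phi_\ve$ directly in the spectral decomposition of $H_\ve$, which makes the long-time limit an \emph{exact} identity $\lim_{t\to\infty}|A^0(\ve,t)-e^{-itE_\ve}|=1-|c_\ve|^2$ with no $\cO(\ve)$ bookkeeping, and isolates the role of the hypothesis $\langle\Psi_0,\Psi_1\rangle=0$ entirely in the elementary estimate $1-|c_\ve|^2=\ve^2\|\Psi_1\|^2+o(\ve^2)$. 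Your remark about why the orthogonality is indispensable (without it $\im c_\ve$ could be of order $\ve$, allowing $1-|c_\ve|^2=o(\ve^2)$) is exactly the right sanity check and is clearer here than in the paper's version. One small point worth noting: you appeal to Riemann--Lebesgue for $\langle\phi_\ve,e^{-it(H_\ve-E_\ve)}\phi_\ve\rangle\to0$; this is the same fact the paper invokes more tersely by saying the $Q_\ve$-term vanishes as $t\to\infty$, and your explicit citation is the more careful formulation.
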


 Note that the existence of  a $\Psi_1$ in the above Proposition
 is guaranteed, 
if $E_0=0$ is isolated and $\Psi_{\ve}$ is obtained by applying to
$\Psi_0$ the
Sz.-Nagy unitary between the unperturbed projection $P_0$ and the perturbed
one $P_{\ve}$ (see~\cite{kato}). In the embedded case proving that the eigenvalue
can survive is highly nontrivial~\cite{AHM}. 
In \cite{FMS} a class of perturbations $W$ is considered such that \emph{if the eigenvalue survives} then the FGR constant must be zero and $\Psi_1$ can be constructed.  

Under the assumptions of Proposition \ref{lowb}, if there is a vector $\Psi^N(\ve)$ with $\|\Psi^N(\ve)\|=1$ and $\|\Psi^N(\ve)-\Psi_{\ve}\|=\mathcal{O}(\ve^N)$, then the quantity 
\begin{equation}\label{A^N}
A^N(\ve,t):=\langle \Psi^N(\ve),e^{-itH_{\ve}}\Psi^N(\ve)\rangle 
\end{equation}
will obey 
$$\sup_{t\geq 0}|A^N(\ve,t)-e^{-itE_{\ve}}|\leq C\ve^{2N}.$$
Of course, when taking an eigenvector $\Psi_{\ve}$ as the initial value, the error term vanishes.

In the nontrivial cases, i.e. when either $E_0$ is embedded in the
continuous spectrum of $H$, or $W$ is singular with respect to $H$ as
in the Stark effect, the generic phenomenon is that 
$E_0$ is moved out of the real axis and becomes 
a resonance, $E_{\ve}$, with $\im E_{\ve}<0$. Again, one expects that 
$\Psi_0$ is a good candidate for the resonance eigenfunction i.e.
\begin{equation}\label{A0r}
|\langle \Psi_0, e^{-itH_{\ve}} \Psi_0 \rangle -e^{-itE_{\ve}}| \leq \delta (\ve),
\end{equation}
uniformly in time, with $\lim_{\ve \rightarrow 0}\delta(\ve)=0$. Proving \eqref{A0r} is much harder; 
in particular, in the general case there is no obvious candidate for
$E_{\ve}$. The situation is fully understood in  the dilation 
analytic case, where an analogue of the 
Kato-Rellich perturbation theory has been developed, see~\cite{Si}. In
particular, the resonance position, $E_{\ve}$, is 
unambiguously defined as an eigenvalue of the 
dilated Hamiltonian. In the dilation analytic setting Hunziker~\cite{Hu} 
proved that \eqref{A0r} holds true with $\delta(\ve) \sim \ve^2$, i.e. the error term has the same size as 
in the isolated eigenvalue case.

For the smooth case, i.e. when the resolvent of $Q_0HQ_0$ has smooth limit values  on the real axis in a neighborhood of $0$, the situation is again satisfactory, as \eqref{A0r}
with $\delta(\ve) \sim \ve^2$ was proved under fairly weak smoothness
conditions (see Assumption \ref{smG} below).

As an example, we give Theorem~\ref{JN06} below. It 
 is a slight 
improvement of Theorem 4.1ii in \cite{JN2}. For related results giving the same size of the error term, see~\cite{CGH,CS}. 

We use a factored form of the perturbation $W$, defined as follows.
\begin{assumption}
Assume there exist a Hilbert space $\cK$ and two bounded operators $A\in\cB(\cH,\cK)$ and $D\in\cB(\cK)$, such that $D$ is a self-adjoint involution and such that
\begin{equation}\label{fact}
W=A^{*}DA
\end{equation}
\end{assumption}

We note that this type of assumption is very flexible, since it allows us in the Schr\"{o}dinger operator case to consider a $W$ which is a sum of a multiplicative potential and a finite rank operator. Using finite rank operators it is easy to construct examples with eigenvalues embedded in the continuous spectrum. See \cite{JN-sinha} for some examples in the threshold case.

 Define
\begin{equation}\label{w}
G(z)=AQ_{0}(H-z)^{-1}Q_{0}A^{*},
\end{equation}
and
\begin{align}\label{F0z}
F^0(z,\ve)&=\ve \langle\Psi_{0},W\Psi_{0}\rangle-z \notag\\
&\quad-\ve^{2}\langle\Psi_{0},A^{*}D\{
G(z)
-\ve G(z)[D+\ve G(z)]^{-1}G(z)\}DA\Psi_{0}\rangle.
\end{align}

Notice that $F^0(z,\ve)=\overline{F^0(\overline z,\ve)}$. 
Then using the Stone formula, the Schur-Livsic-Feshbach-Grushin (SLFG)
formula, and the Kato-Rellich regular 
perturbation theory,  
one obtains the starting  formula for the stationary approach to the
decay law problem (see~\cite{JN2} for details and references):
\begin{equation}\label{Main0}
A^0(\ve,t)=
\lim_{\eta \searrow 0}\frac{1}{2\pi i}\int_{\bR}dx\, e^{-ixt}
\Bigl(\frac{1}{F^0(x+i\eta,\ve)}-\frac{1}{F^0(x-i\eta,\ve)}\Bigr).
\end{equation}

In justifying the r.h.s. of \eqref{F0z}, and  evaluating the r.h.s. of \eqref{Main0}, it is important to
ensure that $G(z)$ is uniformly bounded and smooth 
in the norm topology in the rectangle
\begin{equation}\label{defDa}
D_a=\{z=x+i\eta\in \bC\,|\, x\in J_a= (-a,a),\;0<\eta<1\}.
\end{equation}
Let $\omega\colon[0,\infty)\mapsto [0,\infty)$ be a modulus of continuity, i.e. continuous and increasing,
with $\omega(0)=0$. Let
$\omega_{\theta}(x)=x^{\theta}$,  $\theta \in (0,1)$,  
denote the H\"older modulus of continuity. We denote by $C^{n,\omega}(D_a;B)$ the vector-valued H\"{o}lder-type space. Its norm is given as follows.
Let $f\in C^{n,\omega}(D_a;B)$.
\begin{equation}
\norm{f}_{C^{n,\omega}(D_a;B)}=\sum_{k=0}^n\sup_{z\in D_a}\norm{f^{(k)}(z)}_B
+\sup_{\substack{z_1,z_2\in D_a\\ z_1\neq z_2}}\frac{\norm{f(z_1)-f(z_2)}_B}{\omega(z_1-z_2)}.
\end{equation}
Our condition on the family $F(\cdot,\ve)$ is that
\begin{equation}
\sup_{0<\ve\leq\ve_0}\norm{F(\cdot,\ve)}_{C^{n,\omega}(D_a;B)}<\infty.
\end{equation}
This in particular implies that
\begin{equation}\label{Cnomega}
\sup_{0<\ve\leq \ve_0}\| F^{(n)}(x+i\eta,\ve)-F^{(n)}(y+i\eta,\ve)\|_B \lesssim\omega(|x-y|).
\end{equation}
Our condition has as a consequence that boundary values $F(\cdot+i0,\ve)$ exist and belong to the space $C^{n,\omega}(J_a;B)$, whose definition is an obvious modification of the one for $C^{n,\omega}(D_a;B)$.

\begin{assumption}\label{smG} We have for $G(z)$ given by \eqref{w} that
 \begin{equation}
G(\cdot) \in C^{1,\omega}(D_a; \cB(\cK))\quad\text{and}\quad \int_0^1\frac{\omega(x)}{x}dx <\infty.
\end{equation}
\end{assumption}

\begin{theorem}\label{JN06}
Under  Assumption \ref{smG} and  $0<\ve<\ve_0$ taken  sufficiently
small we have $F^0(\cdot,\ve) \in C^{1,\omega}(D_a;\bC)$. 

In particular, this function has an extension to the real axis with the same
smoothness properties $ F^0(x,\ve) := \lim_{\eta \searrow
  0}F^0(x +
i\eta,\ve) \in C^{1,\omega}(J_a;\bC)$. 

Let $R^0(x,\ve)$ and $I^0(x, \ve)$ be the real and imaginary part of $F^0(x,\ve)$, respectively,
\begin{equation*}
F^0(x,\ve) =: R^0(x,\ve)+ i I^0(x, \ve).
\end{equation*}
For a fixed $\ve$ sufficiently small the equation
\begin{equation}\label{x0}
R^0(x,\ve)=0
\end{equation}
has a unique solution $x^0(\ve)$ in $J_{a/2}$, which obeys the estimate $|x^0(\ve)| \lesssim \ve$.
Define
\begin{equation}\label{E0ve}
E^0_{\ve}:=x^0(\ve)+iI^0(x^0(\ve),\ve).
\end{equation}
Then for sufficiently small $\ve$ we have:
\begin{equation}\label{A0main}
|A^0(\ve, t)-e^{-itE^0_{\ve}}| \lesssim \ve^2.
\end{equation}
\end{theorem}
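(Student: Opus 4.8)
The plan is to start from the exact stationary representation \eqref{Main0} and analyse the boundary-value function $1/F^0(x+i0,\ve)$ as a perturbed Cauchy integral. First I would invoke the already-established part of the theorem: by Assumption~\ref{smG} the geometric-series manipulation in \eqref{F0z} converges for $\ve$ small (since $\|\ve G(z)\|\le C\ve<1/2$ on $D_a$, so $[D+\ve G(z)]^{-1}$ exists and is bounded), and termwise each factor lies in $C^{1,\omega}(D_a;\cB(\cK))$; hence $F^0(\cdot,\ve)\in C^{1,\omega}(D_a;\bC)$ with norm bounded uniformly in $\ve$, and consequently $F^0$ extends to $J_a$ in $C^{1,\omega}(J_a;\bC)$. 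The Dini condition $\int_0^1\omega(x)x^{-1}dx<\infty$ guarantees the boundary values exist and that the Cauchy transform preserves this regularity (Appendix on the Cauchy integral transform). The key structural facts I would then extract are: the real part satisfies $R^0(x,\ve)=\ve\langle\Psi_0,W\Psi_0\rangle-x+\mathcal{O}(\ve^2)$ uniformly, whence $\partial_x R^0=-1+\mathcal{O}(\ve^2)$ and the implicit function theorem gives the unique zero $x^0(\ve)$ in $J_{a/2}$ with $|x^0(\ve)|\lesssim\ve$; and the imaginary part satisfies $I^0(x,\ve)\le 0$ on $J_a$ (this is where self-adjointness of $H$ and the sign of $\im G(x+i0)$ enter — $\im G(x+i0)$ is minus a nonnegative operator by Stone's formula restricted to $Q_0\cH$), with $|I^0(x,\ve)|\lesssim\ve^2$.

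Next I would split the survival amplitude as
\begin{equation*}
A^0(\ve,t)-e^{-itE^0_\ve}=\frac{1}{2\pi i}\int_{J_a}e^{-ixt}\Bigl(\frac{1}{F^0(x+i0,\ve)}-\frac{1}{\overline{F^0(x+i0,\ve)}}\Bigr)dx+(\text{far part})-e^{-itE^0_\ve},
\end{equation*}
where the ``far part'' is the contribution of $\bR\setminus J_a$, which is $\mathcal{O}(\ve^2)$ uniformly in $t$ because there $|F^0(x\pm i\eta,\ve)|\gtrsim|x|$ and the numerator of the difference is $\mathcal{O}(\ve^2)$ (the two boundary values nearly coincide off a neighbourhood of $x^0(\ve)$). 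For the near part one writes $\tfrac{1}{F^0}-\tfrac{1}{\overline{F^0}}=\tfrac{2iI^0}{|F^0|^2}$ and recognises this as a Poisson-type kernel concentrated at $x^0(\ve)$ with width $\sim|I^0|\sim\ve^2$; the idea is to compare $\int_{J_a}e^{-ixt}\tfrac{-I^0(x,\ve)/\pi}{(R^0(x,\ve))^2+(I^0(x,\ve))^2}dx$ with the pure Lorentzian $\int_\bR e^{-ixt}\tfrac{-I^0(x^0(\ve),\ve)/\pi}{(x-x^0(\ve))^2+(I^0(x^0(\ve),\ve))^2}dx=e^{-itE^0_\ve}$, using that $R^0(x,\ve)=-(x-x^0(\ve))+\mathcal{O}(\ve^2|x-x^0|^\theta)$-type remainders coming from the $C^{1,\omega}$ bound \eqref{Cnomega}, and that $I^0(x,\ve)-I^0(x^0(\ve),\ve)=\mathcal{O}(\ve^2\omega(|x-x^0|))$ likewise. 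Each such replacement costs at most $\ve^2$ uniformly in $t\ge 0$ after the standard estimate that $\int|{\rm kernel\ error}|\,dx$ and the $L^1$-bound on the Poisson family control the time-integral; this is exactly the computation carried out for $\delta(\ve)\sim\ve^2$ in Theorem 4.1ii of \cite{JN2}, so I would cite that and only indicate the modification caused by replacing the $C^{1,1}$ or analytic input there with the present $C^{1,\omega}$ input, which is harmless under the Dini condition.

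The main obstacle is the uniform-in-time control of the near-threshold region where $R^0$ vanishes: one must show that the distortion of the Lorentzian profile caused by the $x$-dependence of $R^0$ and $I^0$ (as opposed to their frozen values at $x^0(\ve)$) contributes an error that is genuinely $\mathcal{O}(\ve^2)$ and not merely $o(1)$, and that this holds for all $t\ge 0$ simultaneously — the large-$t$ regime is delicate because the bare exponential $e^{-itE^0_\ve}$ decays on the scale $t\sim\ve^{-2}$ while the error integral must decay at least as fast. The resolution is that the error kernel, being a difference of two functions each bounded by a fixed multiple of the Poisson kernel of width $\sim\ve^2$ and each with $C^{0,\omega}$ numerators, is itself integrable with an $\ve^2$-small $L^1$ norm \emph{and} its Fourier transform in $t$ decays, so the sup over $t$ is controlled; making the $\omega$-dependence explicit here (replacing a Lipschitz estimate by $\omega(|x-x^0|)$ inside the integrals) is the only new point relative to \cite{JN2}. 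Collecting the far part, the profile-distortion estimate, and the exact Lorentzian identity yields \eqref{A0main}.
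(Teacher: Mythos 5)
Your near-field analysis --- locating $x^0(\ve)$ by the implicit function theorem, freezing $R^0$ and $I^0$ at $x^0(\ve)$ to obtain a Lorentzian of width $\sim\ve^2$, and bounding the profile distortion by the $C^{1,\omega}$ modulus under the Dini condition --- matches the paper's Lemma~\ref{Techest} (the chain of estimates \eqref{mainest}--\eqref{Con3}) and is essentially correct in spirit. The gap is in your treatment of the far-field contribution from $\bR\setminus J_a$.

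You assert there that $|F^0(x\pm i\eta,\ve)|\gtrsim|x|$ and that the numerator $|\im F^0(x+i0,\ve)|$ is $\cO(\ve^2)$, and conclude that the far-field integral is $\cO(\ve^2)$ uniformly in $t$. Neither of these facts is available from the hypotheses: Assumption~\ref{smG} controls $G(z)$ only on $D_a$, so for $x\notin J_a$ lying inside $\sigma(H)$ the boundary value $G(x+i0)$ need not exist, and there is no pointwise lower bound on $|F^0|$ nor any $\ve^2$ upper bound on $\im F^0$ at your disposal. The quantity $-\frac{1}{\pi}\,\im F^0(x+i0,\ve)/|F^0(x+i0,\ve)|^2$ is the spectral density of $H_\ve$ in the state $\Psi_0$; showing it carries only $\cO(\ve^2)$ mass away from $x^0(\ve)$ is precisely what must be proved, not assumed. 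The paper resolves this with Hunziker's normalization trick (the final step of the proof of Lemma~\ref{Techest}): this density has a fixed sign, so the far-field integral at any $t\geq 0$ is bounded in absolute value by its value at $t=0$; unitarity gives $A^0(\ve,0)=1$ (condition (iv) of Lemma~\ref{Techest}), which forces the far-field mass at $t=0$ to equal one minus the near-field mass, and the latter is $1-\cO(\ve^2)$ by the Lorentzian estimate you already carried out. This argument requires no information about $F^0$ outside $D_a$ and is the step missing from your proposal.
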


\begin{remark} We would like to mention the following  facts:
\begin{itemize}
\item[(i)] The estimate \eqref{A0main} has exactly the same
form as \eqref{A0reg} 
for the   case of eigenvalues, with  the perturbed eigenvalue
replaced by the `resonance position', $E^0_{\ve}$. In particular, 
 Proposition~\ref{lowb}
shows that in general, the error in \eqref{A0main} cannot be made smaller.

\item[(ii)] The computation of $I^0(x^0(\ve),\ve)$, using $|x^0(\ve)|\lesssim \ve $, leads to
$I^0(x^0(\ve),\ve)= -\ve^2\Gamma_{FGR}+\cO(\ve^3)$, 
with 
\begin{equation}\label{GammaFRG}
\Gamma_{FGR}:=\pi \langle \Psi_0,W\delta(Q_0HQ_0)W\Psi_0\rangle,
\end{equation}
which coincides with the result given by the Dirac computation.  Notice that while in \cite{CGH,CS,
 MS, SW} an FGR condition is required (i.e. either $\Gamma_{FGR}>0$ or at least $I^0(x^0(\ve),\ve) <0$) 
no such 
condition appears in Theorem \ref{JN06}. We also note that Orth \cite{Or} proves that 
$I^0(x^0(\ve),\ve) =0$ if and only if $x^0(\ve)$ is an eigenvalue. The improvement in 
Theorem \ref{JN06} compared 
to Theorem 4.2ii in \cite{JN2} is that the smoothness condition $G(\cdot) \in C^{1,\omega_{\theta}}$ 
for some 
$\theta >0$ is weakened to Assumption \ref{smG} which appears to be optimal.

\item[(iii)] 
In the analytic case, the resonance position is spectrally defined as a pole of the analytically continued 
resolvent and coincides with the zero $z_r=x_r+iy_r$ of the analytic continuation of
$F^0(z,\ve)$. In the smooth case its definition also involves $\Psi_0$
since it is given via the limit values of $F^0(z,\ve)$.
 Comparing the decay law given by Theorem~\ref{JN06} with
the one given by Hunziker in the analytic case, one can show 
that $|E^0_{\ve} -z_r| \lesssim \ve^2|y_r|$,
i.e. up to some order in $\ve$,  $E^0_{\ve}$ is indeed a spectral object of the family
$H_{\ve}$, see~\cite{CGH} and \cite{JN5}.

\item[(iv)] 
As in \cite{JN2}  we consider here only the problem of uniform in time bounds for the error term in the 
exponential decay law for survival probability amplitude. One can consider a more general problem by taking an initial condition
of the form $g_{\ve}(H_{\ve})\Psi_0$ (where $g_{\ve}(\lambda)$ in an appropriate cut off function) and prove that
$$
e^{-itH_{\ve}}g_{\ve}(H_{\ve})\Psi_0= a(\ve)e^{-itE_{\ve}}\Psi_0 + R(\ve, t),
$$ 
where the dispersive part $\langle \Psi_0,R(\ve, t)\rangle$ decays in time polynomially or 
even almost exponentially fast, see \cite{ CGH, Hu,MS,SW,CS,KR}
for details and further references.
\end{itemize}
\end{remark}

We now turn to the main question considered in this paper: Can the error term  be made \emph{smaller} by choosing a better ansatz for 
the resonance eigenfunction? The heuristics put forward above indicates that in the  case of an isolated eigenvalue a better ansatz
should be related to the R-S expansion of the perturbed
eigenfunction. In the dilation analytic case Hunziker proved that if $E_0$ is {\em  isolated} (but becomes a resonance due to the singularity of the perturbation)
and the formal R-S expansion for the perturbed eigenfunction is well defined up to order $\ve^N$, 
then by using as a resonance function the normalized truncated R-S series,
one can prove that both $\im E_{\ve} $ 
and the error term are of order $\ve^{2N+2}$. In particular, for the atomic Stark effect $N$ can be taken arbitrarily 
large. It follows that in this case we have $|\im E_{\ve}| \lesssim \ve^p $
for any integer $p$, which is consistent with the known fact that the 
imaginary part of the resonance position is exponentially
small. Furthermore, the error term can be made smaller than any power of $\ve$.

The problem with the embedded eigenvalues (even in the analytic case)
is that generically, the R-S expansion for the perturbed 
eigenfunction is already ill defined for $N=1$. In order to be able to
make a better ansatz, one needs to find conditions for the existence of 
the R-S expansion up to some order $N \geq 1$. Our first result states that  under appropriate smoothness conditions
the vanishing of the FRG constant 
$\Gamma_{FGR}=\pi \langle \Psi_0,W\delta(Q_0HQ_0)W\Psi_0\rangle =0$ 
insures that the  R-S expansion for the perturbed 
eigenfunction  is well defined for $N=1$, see
Proposition \ref{bdd}. 

The main result of this paper  is a generalization to embedded 
eigenvalues and smooth setting of Hunziker's results for $N=1$, see Theorem \ref{ve4}.

Now we state the smoothness properties we need in order to introduce the Kuroda $\Gamma$ 
operator \cite{kuroda}. Keep in mind that only the smoothness in a neighborhood of the origin matters.

Let $P(\Delta)$ be the spectral measure of $H$, and
$Q(J_a):=P(J_a)-P_0$; in particular, $Q_0=Q(J_a)+P(\bR\setminus
J_a)$. We can write the decomposition:
$$
\cH=P_0\cH \oplus Q(J_a)\cH \oplus P(\bR\setminus J_a)\cH =:P_0\cH \oplus \cH_< \oplus \cH_>.
$$
In order to keep the notation and technicalities at a reasonable level, we supplement Assumption \ref{Aac} by:
\begin{assumption}\label{Aconst}
The multiplicity of the absolutely continuous spectrum of $H$ in $J_a$ is constant.
\end{assumption}
This assumption means that there exists a Hilbert space $\sfh$ and a unitary map
\begin{equation}
\widetilde{\Gamma}\colon Q(J_a)\cH\to L^2(J_a,\sfh),
\end{equation}
such that 
\begin{equation}
(\widetilde{\Gamma} H\widetilde{\Gamma}^{\ast} \psi)(\lambda)=\lambda \psi(\lambda),
\quad \psi(\lambda)\in \sfh, \quad \lambda\in J_a.
\end{equation}
We will always see $\widetilde{\Gamma}$ as a partial isometry extended by zero outside the range of $Q(J_a)$. In all the cases we consider,  $\widetilde{\Gamma}$ is constructed in the following way. We assume that there exists a dense subset $\cD\subset \cH$ and a family of operators 
$$ \Gamma(\lambda):\cD\mapsto \sfh,\quad \lambda\in J_a,$$
such that if $f\in \cD$ then the map 
$$J_a\ni \lambda \mapsto \Gamma(\lambda)f\in \sfh$$ 
is continuous.  For the free Laplacian, $\Gamma(\lambda)$ is constructed with the help of the Fourier transform (see \eqref{B.3}). For Schr\"odinger operators with short range potentials, $\Gamma(\lambda)$ is closely related to the generalized eigenfunctions constructed through a Lippmann-Schwinger type argument starting from the free plane-waves (see \eqref{B8}). 

Then for every function $\phi\in C_0^\infty(J_a)$ and for every $f\in \cD$ we have:
\begin{equation}\label{Gammaop'}
\langle f, \phi(H)f  \rangle_{\cH} = \phi(0)|\langle \Psi_0,f\rangle|^2+\int_{J_a}\phi(\lambda)\langle \Gamma(\lambda)f, \Gamma(\lambda)f \rangle_{\sfh} d\lambda.
\end{equation}
By a standard limiting argument this implies that for every $J\subseteq J_a$ and for every $f\in\cD$ we have
\begin{equation}\label{Gammaop}
||Q(J)f||^2=\langle f, Q(J)f  \rangle = \int_{J}\langle \Gamma(\lambda)f, \Gamma(\lambda)f \rangle_{\sfh} d\lambda.
\end{equation}
The next step is to define 
\begin{equation}\label{Gammachar}
J_a\ni\lambda\mapsto (\widetilde{\Gamma} f)(\lambda):=\Gamma(\lambda)f \in \sfh,\quad f\in\cD,
\end{equation}
which due to \eqref{Gammaop}  can be extended by continuity to a partial  isometry between $\cH$ and  $L^2(J_a,\sfh)$ such that 
$$\int_{J_a}||(\widetilde{\Gamma} \psi)(\lambda)||^2_{\sfh}d\lambda =||Q(J_a)\psi||^2_{\cH}.$$

From now on we will make the following assumption.
\begin{assumption}\label{SM}
We have that $A^{\ast}\cK\subseteq\cD$ and furthermore:
\begin{equation}\label{assum29}
\Gamma(\cdot) A^*\in C^{2,\omega_{\theta}}(J_a;\cB(\cK,\sfh)). 
\end{equation}
\end{assumption}

For $\im z \neq 0$ we define
\begin{align}\label{Sz}
Q_0(H-z)^{-1}Q_0&=Q(J_a)(H-z)^{-1}Q(J_a)\notag\\
&\quad+P(\bR \setminus J_a)(H-z)^{-1}P(\bR \setminus J_a)\notag\\
&=:S_<(z)+S_>(z)=:S(z),
\end{align} 
where these operators act on $\cH$. Notice that $S_>(z)$ is analytic in $|z|<a$.
For every $g\in\cH$ we have
\begin{align}\label{SzGnorm}
S_<(z)g&=\widetilde{\Gamma}^*\Bigl(\frac{1}{\cdot -z}
(\widetilde{\Gamma}g)(\cdot)\Bigr),\notag\\
||S_<(z)g||^2&=\int_{J_a} \frac{1}{|\lambda-z|^2}\|(\widetilde{\Gamma}g)(\lambda)\|_{\sfh}^2d\lambda.
\end{align}

The operator
\begin{equation}
S_<:=Q(J_a)H^{-1}Q(J_a),
\end{equation}
is well defined on the domain:
\begin{equation}\label{DS<}
\cD(S_<):=\bigl\{g\in \cH:\; \int_{J_a}\frac{\| (\widetilde{\Gamma}g)(\lambda)
\|^2_{\sfh}}{\lambda^2}d\lambda < \infty \bigr\}.
\end{equation}

The operator $S_<$ is self-adjoint and unbounded because $0$ belongs to the continuous spectrum of $H$. For any $g\in \cD(S_<)$ we have:
\begin{equation}\label{S<}
S_<g=\widetilde{\Gamma}^*\Bigl(\frac{1}{\cdot}
(\widetilde{\Gamma}g)(\cdot)\Bigr).
\end{equation} 

We now define the operator:
\begin{equation}\label{S}
S:= S_< +S_>(0),
\end{equation}
which is self-adjoint on $\cD(S_<)$. 

Due to \eqref{fact} and Assumption \ref{SM} we know that $\Gamma(\cdot) WP_0\in C^{2,\omega_{\theta}}(J_a;\cB(\cH,\sfh))$ for some $\theta \in (0,1)$. Thus  the Fermi Golden Rule constant 
(see\eqref{GammaFRG}) reads as 
$\Gamma_{FGR}=\pi \|\Gamma(0)W\Psi_0\|_{\sfh}^2$, hence the assumption that $\Gamma_{FGR}=0$ is equivalent with:

\begin{assumption}\label{Gamma(0)}
We have  
$\Gamma(0)WP_0=0$
as an operator from $\cH$ to $\sfh$.
\end{assumption}

Excluding the trivial case when $E_0$ is
isolated, the operator $S$ is self-adjoint but unbounded, i.e. $SWP_0$ is not bounded 
in general, and this is the reason for the breakdown of the R-S expansion in the embedded case. Our first 
simple but important result says that  
 $SWP_0$ remains bounded if   Assumptions~\ref{SM} and~\ref{Gamma(0)} hold true.

\begin{proposition}\label{bdd}
Suppose Assumptions~\ref{SM}  holds true.
Then $ WP_0\cH \subseteq \cD(S_<)$ if and only if Assumption~\ref{Gamma(0)} holds true.
In particular if Assumptions~\ref{SM} and~\ref{Gamma(0)} hold true, then $SWP_0$ is bounded and its adjoint is the 
extension by continuity  of $P_0WS$.
\end{proposition}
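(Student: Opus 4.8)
The plan is to reduce everything to the spectral representation \eqref{Gammaop} and the membership criterion \eqref{DS<}. The key observation is that for a single vector $\Psi=WP_0\phi\in\range(WP_0)$, the condition $\Psi\in\cD(S_<)$ is precisely
\[
\int_{J_a}\frac{\|(\widetilde\Gamma\Psi)(\lambda)\|_\sfh^2}{\lambda^2}\,d\lambda
=\int_{J_a}\frac{\|\Gamma(\lambda)WP_0\phi\|_\sfh^2}{\lambda^2}\,d\lambda<\infty,
\]
so the issue is integrability of $\|\Gamma(\lambda)WP_0\|^2/\lambda^2$ near $\lambda=0$, uniformly on the (one-dimensional) range of $P_0$. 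By \eqref{fact} and Assumption~\ref{SM}, $\lambda\mapsto\Gamma(\lambda)WP_0=\Gamma(\lambda)A^*DA P_0$ lies in $C^{2,\omega_\theta}(J_a;\cB(\cH,\sfh))$; in particular it is $C^1$, indeed $C^{1,\omega_\theta}$, on $J_a$. First I would Taylor-expand at $\lambda=0$:
\[
\Gamma(\lambda)WP_0=\Gamma(0)WP_0+\lambda\,\bigl(\Gamma(\cdot)WP_0\bigr)'(0)+o(\lambda),
\]
with the remainder controlled in operator norm by $\lambda\cdot\omega_\theta(\lambda)=\lambda^{1+\theta}$ uniformly.

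\textbf{The two directions.}
For the ``if'' direction, assume Assumption~\ref{Gamma(0)}, i.e. $\Gamma(0)WP_0=0$. Then the expansion gives $\|\Gamma(\lambda)WP_0\|_{\cB(\cH,\sfh)}\lesssim|\lambda|$ for $|\lambda|$ small, hence $\|\Gamma(\lambda)WP_0\|_\sfh^2/\lambda^2\lesssim\|WP_0\|^2$ is bounded near $0$ and the integral in \eqref{DS<} converges; away from $0$ the integrand is continuous on the compact $\overline{J_{a/2}}$ and on the rest of $J_a$ the denominator is bounded below, so $WP_0\cH\subseteq\cD(S_<)$. For the ``only if'' direction, suppose $WP_0\cH\subseteq\cD(S_<)$ but $\Gamma(0)WP_0\neq0$; pick $\phi$ with $\eta:=\Gamma(0)WP_0\phi\neq0$ in $\sfh$. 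By continuity of $\lambda\mapsto\Gamma(\lambda)WP_0\phi$ there is $r>0$ with $\|\Gamma(\lambda)WP_0\phi\|_\sfh\geq\tfrac12\|\eta\|_\sfh>0$ for $|\lambda|<r$, whence $\int_{J_a}\|\Gamma(\lambda)WP_0\phi\|_\sfh^2/\lambda^2\,d\lambda\geq\tfrac14\|\eta\|_\sfh^2\int_{-r}^r\lambda^{-2}\,d\lambda=\infty$, contradicting $WP_0\phi\in\cD(S_<)$. This proves the equivalence.

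\textbf{Boundedness of $SWP_0$ and its adjoint.}
Assuming now both Assumptions~\ref{SM} and~\ref{Gamma(0)}, the first part gives $WP_0\cH\subseteq\cD(S_<)=\cD(S)$, so $SWP_0$ is everywhere defined. Since $P_0$ is rank one with unit vector $\Psi_0$, we have $SWP_0=(S W\Psi_0)\otimes\overline{\Psi_0}$, a rank-one operator, and its norm is $\|SW\Psi_0\|_\cH<\infty$ by what was just shown; explicitly, using $S=S_<+S_>(0)$ and \eqref{S<},
\[
\|SW\Psi_0\|_\cH^2=\int_{J_a}\frac{\|\Gamma(\lambda)W\Psi_0\|_\sfh^2}{\lambda^2}\,d\lambda+\|S_>(0)W\Psi_0\|_\cH^2<\infty,
\]
the second term being finite because $S_>(0)$ is bounded on $|z|<a$. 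For the adjoint: $S$ is self-adjoint, $SWP_0$ is bounded on all of $\cH$, so for $f\in\cH$ and $g\in\cD(S)$ one has $\ip{SWP_0 f}{g}=\ip{WP_0 f}{Sg}=\ip{f}{P_0WSg}$, showing $P_0WS\subseteq(SWP_0)^*$; since $(SWP_0)^*$ is bounded (adjoint of a bounded operator) and $\cD(S)$ is dense, $(SWP_0)^*$ is the closure of $P_0WS$, i.e. the extension of $P_0WS$ by continuity.

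\textbf{Main obstacle.}
The only genuinely delicate point is the ``only if'' half, namely turning a nonzero value $\Gamma(0)WP_0\neq0$ into a divergent integral; this needs the \emph{continuity} of $\lambda\mapsto\Gamma(\lambda)f$ on $J_a$ (guaranteed by the construction of $\widetilde\Gamma$ via the family $\Gamma(\lambda)$ together with Assumption~\ref{SM}), so that nonvanishing at $0$ propagates to a neighborhood; the $C^{2,\omega_\theta}$ regularity is more than enough here but is what makes the ``if'' direction clean. Everything else is bookkeeping with the spectral representation.
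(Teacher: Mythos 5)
Your proof is correct and follows essentially the same route as the paper: reduce to the integral criterion \eqref{DS<}, use the Lipschitz bound on $\lambda\mapsto\Gamma(\lambda)WP_0$ supplied by Assumption~\ref{SM} to get divergence when $\Gamma(0)WP_0\neq 0$ and convergence when it vanishes, and then use the self-adjointness of $S$ (duality) to identify the adjoint. The rank-one observation $SWP_0=(SW\Psi_0)\otimes\overline{\Psi_0}$ and the spelled-out continuity argument for the ``only if'' direction are small additions to what the paper states tersely, but the underlying argument is identical.
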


Now consider:
\begin{equation}\label{T1}
T_1:=-SWP_0-P_0WS,
\end{equation}
which in regular perturbation theory gives the first order correction
in the expansion of the perturbed eigenprojection, and
\begin{equation}\label{Tve}
T_{\ve}:=P_0+\ve T_1.
\end{equation}
By a simple computation we obtain:
\begin{equation}\label{Deltave}
 T_{\ve}^2-T_{\ve}=\ve^2T_1^2,
\end{equation}
thus $T_{\ve}$ is an `almost orthogonal projection'. There exists a whole family of orthogonal projections which are
close to $T_{\ve}$ in norm, up to errors of order $\ve^2$. The following one is
distinguished by the fact that it is given by the 
following 
algebraic formula, assuming $\| T_{\ve}^2-T_{\ve} \| <\frac{1}{4} $, see \cite{N02} and \cite{M}:
\begin{equation}\label{Pve}
P_{\ve}=T_{\ve}+(T_{\ve}-1/2)[(1+4(T_{\ve}^2-T_{\ve}))^{-\frac{1}{2}}-1].
\end{equation}
Notice that $P_{\ve}-P_0 =\cO(\ve)$, $P_{\ve}-T_{\ve}=\cO(\ve^2)$. Let
now $U_{\ve}$ be the Sz.-Nagy unitary intertwining $P_{\ve}$ 
and $P_0$:
\begin{equation}\label{Uve}
U_{\ve}=\frac{1}{(1-(P_{\ve}-P_0)^2)^{\frac{1}{2}}}(P_{\ve}P_0 +(1-P_{\ve})(1-P_0)),\quad P_{\ve}=U_{\ve}P_0U_{\ve}^*.
\end{equation}	
To establish the improved exponential decay law we take as our ansatz for the initial state:
\begin{equation}\label{Psi1ve}
\Psi^1_{\ve}:=U_{\ve}\Psi_0= \Psi_0 -\ve SW\Psi_0 +\cO(\ve^2).
\end{equation}
In other words, we now have to estimate: 
\begin{equation}\label{A1}
A^1(\ve,t):=\langle \Psi^1_{\ve},e^{-itH_{\ve}}\Psi^1_{\ve}\rangle.
\end{equation}
A remark is in order here. One can equally well use the following (simpler at  first sight) 
ansatz: $\widetilde{\Psi}^1_{\ve}=T_{\ve}\Psi_0/\| T_{\ve}\Psi_0 \|$. The reasons for choosing \eqref{Psi1ve} 
are that the proofs are somewhat simpler, and more importantly, the procedure extends unchanged to the degenerate case.

The aim in what follows is to find $E^1_{\ve}$ such that:
\begin{equation}
|A^1(\ve,t)-e^{-itE^1_{\ve}}|\lesssim \ve^4.
\end{equation}
Using the Stone and SLFG formulae one obtains as in \cite{JN2}:
\begin{equation}\label{Main1}
A^1(\ve,t)=
\lim_{\eta \searrow 0}\frac{1}{2\pi i}\int_{\bR}dx\, e^{-ixt}
\Bigl(\frac{1}{F^1(x+i\eta,\ve)}-\frac{1}{F^1(x-i\eta,\ve)}\Bigr),
\end{equation}
where
\begin{align}\label{Fone}
F^1(z,\ve)&:= \langle\Psi^1_{\ve},H_{\ve}\Psi^1_{\ve}\rangle-z \notag\\
&\qquad-\langle\Psi^1_{\ve},P_{\ve}H_{\ve}Q_{\ve}(Q_{\ve}H_{\ve}Q_{\ve}-z)^{-1}Q_{\ve}H_{\ve}P_{\ve}\Psi^1_{\ve}\rangle
\end{align}
and $Q_{\ve}=1-P_{\ve}$.
In order to follow the line of the proof of Theorem \ref{JN06} in \cite{JN2}, it is convenient to use \eqref{Uve}, 
\eqref{Psi1ve}, and rewrite \eqref{Fone} as
\begin{equation}\label{F^1}
F^1(z,\ve)= \langle\Psi_{0},\widetilde{H}_{\ve}\Psi_{0}\rangle-z
-\langle\Psi_{0},P_{0}\widetilde{H}_{\ve}Q_{0}(Q_{0}\widetilde{H}_{\ve}Q_{0}-z)^{-1}Q_{0}\widetilde{H}_{\ve}P_{0}\Psi_{0}\rangle,
\end{equation}
where
\begin{equation}\label{tHve}
\widetilde{H}_{\ve}=U_{\ve}^*H_{\ve}U_{\ve}.
\end{equation}
Using the notation
\begin{equation}\label{tW}
\widetilde{W}_{\ve}:=\frac{1}{\ve}(\widetilde{H}_{\ve}-H),
\end{equation}
the formula \eqref{F^1} becomes
\begin{align}\label{F^11}
F^1(z,\ve)&=\ve \langle\Psi_{0},\widetilde{W}_{\ve}\Psi_0\rangle-z\notag\\
&\qquad-\ve^2\langle\Psi_{0},P_{0}\widetilde{W}_{\ve}
Q_{0}(Q_{0}(H+\ve\widetilde{W}_{\ve})Q_{0}-z)^{-1}Q_{0}
\widetilde{W}_{\ve}P_{0}\Psi_{0}\rangle.
\end{align}

The following lemma lists some of the important structural properties of $\widetilde{W}_{\ve}$. 

\begin{lemma}\label{W} 
The following results hold for $\widetilde{W}_{\ve}$.
\begin{itemize}
\item[\rm(i)] 
\begin{align}\label{tildeWve3}
\widetilde{W}_{\ve}&= W-P_0WQ_0-Q_0WP_0\notag\\
&\quad+\ve\bigl(-P_0WSW-WSWP_0+SWP_0W+WP_0WS\notag\\
&\quad+P_0WSWP_0-\frac{1}{2}Q_0WP_0WS -\frac{1}{2}SWP_0WQ_0\bigr) + \cO(\ve^2).
\end{align}
In particular
\begin{equation}\label{P0Q0}
D_{\ve}:=\frac{1}{\ve} P_0\widetilde{W}_{\ve}Q_0= -P_0WSWQ_0 -P_0WP_0WS + \cO(\ve)
\end{equation}
is uniformly bounded as $\ve \rightarrow 0$.

\item[\rm(ii)] There exist some operators $  V_{j k}(\ve)$, $1\leq j,k\leq
4$ which are uniformly bounded as $\ve \rightarrow 0$, such that if $\mathcal{V}(\ve):=\begin{bmatrix}
  V_{j k}(\ve)
 \end{bmatrix}$ denotes the obvious operator valued matrix, we have:
\begin{equation}\label{matrixW}
\widetilde{W}_{\ve}=
 \begin{bmatrix}
  SWP_0 &Q_0A^* &P_0WS & P_0
 \end{bmatrix}
\mathcal{V}(\ve)
\begin{bmatrix}
 P_0WS\\
 AQ_0\\
 SWP_0\\
 P_0\\
\end{bmatrix}.
\end{equation}
If we denote by $X\colon\mathcal{H}\to \cH\oplus\cK\oplus\cH\oplus\cH$ the map 
$$X(f):=\begin{bmatrix}
 P_0WSf\\
 AQ_0f\\
 SWP_0f\\
 P_0f\\
\end{bmatrix},$$
then we can write:
\begin{equation}\label{horia1}
\widetilde{W}_{\ve}=X^*\mathcal{V}(\ve)X.
\end{equation}
\end{itemize}
\end{lemma}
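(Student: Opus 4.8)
The plan is to establish Lemma~\ref{W} by a direct, bookkeeping computation based on the explicit algebraic formula \eqref{Pve} for $P_{\ve}$ and the formula \eqref{Uve} for the Sz.-Nagy unitary $U_{\ve}$, expanded to the appropriate order in $\ve$. The crucial input is Proposition~\ref{bdd}, which under Assumptions~\ref{SM} and~\ref{Gamma(0)} guarantees that $SWP_0$ (and hence $P_0WS$, $T_1$, $T_1^2$) are \emph{bounded} operators; this is precisely what makes all the error terms $\cO(\ve^2)$ meaningful at the operator-norm level. I would first record the expansions $P_{\ve}-P_0=\ve T_1+\cO(\ve^2)$ and, more precisely, from \eqref{Pve} and \eqref{Deltave}, $P_{\ve}=T_{\ve}-\ve^2 (T_{\ve}-1/2)T_1^2+\cO(\ve^3)=P_0+\ve T_1+\cO(\ve^2)$, then substitute into \eqref{Uve} to get $U_{\ve}=1+\ve(P_0T_1+T_1 P_0 - T_1)+\cO(\ve^2)$ after using $(1-(P_{\ve}-P_0)^2)^{-1/2}=1+\cO(\ve^2)$ and expanding $P_{\ve}P_0+(1-P_{\ve})(1-P_0)$. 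Since $T_1=-SWP_0-P_0WS$ and $P_0 S=SP_0=0$ (as $S=Q_0(\cdots)Q_0$), one finds $P_0T_1+T_1P_0-T_1$ simplifies; plugging everything into $\widetilde{H}_{\ve}=U_{\ve}^*(H+\ve W)U_{\ve}$ and collecting powers of $\ve$ yields \eqref{tildeWve3}. The half-integer coefficients $-\tfrac12 Q_0WP_0WS$ etc.\ come precisely from the $-\tfrac12$ inside \eqref{Pve}, which is why that particular gauge of orthogonal projection was chosen.

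For part (i), after deriving \eqref{tildeWve3} I would read off \eqref{P0Q0} by left-multiplying with $P_0$ and right-multiplying with $Q_0$: the zeroth-order term $W-P_0WQ_0-Q_0WP_0$ contributes $P_0WQ_0-P_0WQ_0=0$ at order $\ve^0$, so $P_0\widetilde{W}_{\ve}Q_0$ starts at order $\ve$, and the $\cO(\ve)$ bracket contributes $P_0(-WSWP_0-WP_0WS? \ldots)Q_0$; keeping only the terms surviving the $P_0\cdot Q_0$ sandwich and using $P_0SW=0$, $SWP_0\cdot Q_0 = SWP_0$ etc.\ gives $D_{\ve}=-P_0WSWQ_0-P_0WP_0WS+\cO(\ve)$. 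Uniform boundedness is then immediate from boundedness of $SWP_0$, $P_0WS$, and $W$.

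For part (ii), the point is that every monomial appearing in \eqref{tildeWve3} — and in the higher-order remainder, once one keeps track of it — is a product whose left factor lies in the list $\{SWP_0,\,Q_0A^*,\,P_0WS,\,P_0\}$ and whose right factor lies in $\{P_0WS,\,AQ_0,\,SWP_0,\,P_0\}$, with a uniformly bounded operator sandwiched in between. For instance $W=A^*DA=(Q_0A^*+P_0A^*? )\cdots$; more usefully one writes $W$ and each term using $A=AQ_0+AP_0$ and $A^*=Q_0A^*+P_0A^*$, and absorbs scalar/bounded middle pieces (involving $D$, $P_0$, $\langle\Psi_0,W\Psi_0\rangle$, $S_>(0)$, etc.) into the matrix entries $V_{jk}(\ve)$. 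The main obstacle — and the step requiring genuine care rather than routine expansion — is controlling the $\cO(\ve^2)$ remainder in \eqref{tildeWve3}/\eqref{horia1}: one must verify that \emph{all} higher-order terms in the expansion of $U_{\ve}^*H_{\ve}U_{\ve}$, not just the explicitly displayed ones, can be cast in the factored form $X^*\mathcal V(\ve)X$ with $\mathcal V(\ve)$ uniformly bounded, i.e.\ that the unbounded operator $S$ never appears "unshielded" by a $WP_0$ or $P_0W$. This follows because $U_{\ve}-1$ is built entirely from $P_{\ve}-P_0=\cO(\ve)$, which in turn (via $T_1$ and the analytic functional calculus in \eqref{Pve}) is a bounded operator expressible through $SWP_0$ and $P_0WS$; hence each occurrence of $S$ in $\widetilde W_{\ve}$ is automatically flanked by $WP_0$ or $P_0W$, and the factorization \eqref{matrixW} is structurally forced. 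Once this structural claim is established, defining $X$ as in the statement and collecting the bounded middle factors into $\mathcal V(\ve)$ gives \eqref{horia1}.
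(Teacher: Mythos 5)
Your overall plan matches the paper's: expand $\Delta_\ve = P_\ve - P_0$, set $B_\ve := U_\ve - 1$, write $\widetilde W_\ve$ as in \eqref{Wve}, and use boundedness of $SWP_0$ (Proposition~\ref{bdd}) to make sense of everything. However there is a concrete error in your first-order expansion of $U_\ve$. From the paper's identity \eqref{id2},
\[
P_\ve P_0 + (1-P_\ve)(1-P_0) \;=\; 1 + \Delta_\ve(2P_0-1) \;=\; 1 + \ve\, T_1(2P_0-1) + \cO(\ve^2) \;=\; 1 + \ve\bigl(-SWP_0 + P_0 WS\bigr) + \cO(\ve^2),
\]
using $T_1 P_0 = -SWP_0$ and $T_1 Q_0 = -P_0WS$. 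You wrote the first-order coefficient as $P_0 T_1 + T_1 P_0 - T_1$; but $P_0 S = S P_0 = 0$ gives $P_0 T_1 = -P_0 WS$ and $T_1 P_0 = -SWP_0$, hence $P_0 T_1 + T_1 P_0 = T_1$, and your coefficient is identically zero. That would force $U_\ve = 1 + \cO(\ve^2)$, contradicting the expansion $\Psi^1_\ve = \Psi_0 - \ve SW\Psi_0 + \cO(\ve^2)$ of \eqref{Psi1ve} which you immediately invoke. The issue is that $\Delta_\ve$ enters $U_\ve$ multiplied by $(2P_0-1)$ \emph{on one side only}, so the one-sided $2T_1P_0 - T_1$ appears, not the symmetrized $P_0T_1 + T_1P_0 - T_1$.

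A second point: your structural argument for part (ii) explains why $S$ never appears unshielded, but not why the \emph{unbounded operator} $H$ in the terms $B_\ve^* H$, $HB_\ve$, $B_\ve^* H B_\ve$ of \eqref{Wve} is tamed. The paper controls these via the key identity $HT_1 = -Q_0 W P_0$ in \eqref{id3}, which together with \eqref{Pve3} yields the bounded, localized expansions $H\Delta_\ve = HP_\ve$ of \eqref{HPve} and $\Delta_\ve Q_0$ of \eqref{PveQ0}. Without explicitly invoking this mechanism, the assertion that the factorization $X^*\mathcal V(\ve)X$ is ``structurally forced'' is incomplete. Once you correct the $U_\ve$ expansion and add the $H$-taming identities, your bookkeeping argument coincides with the paper's.
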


Using \eqref{P0Q0}, we can express $F^1(z,\ve)$ as:
\begin{align}\label{F^22}
F^1(z,\ve)&=\ve \langle\Psi_{0},\widetilde{W}_{\ve}\Psi_0\rangle-z\notag\\
&\quad-\ve^4\langle\Psi_{0},P_{0}
D_{\ve}Q_{0}(Q_{0}(H+\ve\widetilde{W}_{\ve})Q_{0}-z)^{-1}Q_{0}
D_{\ve}^*P_{0}\Psi_{0}\rangle. 
\end{align}

The next Lemma, whose proof is based on a standard perturbation theory
argument, is a direct consequence of Lemma~\ref{W} and reduces the
smoothness problem of the r.h.s. of \eqref{F^22} to the smoothness of four $\ve$-independent expressions. 
In this Lemma and the next $\cB$ is a shorthand for the four spaces $\cB(\cH)$, $\cB(\cK)$, $\cB(\cH,\cK)$, and $\cB(\cK,\cH)$, as appropriate.
\begin{lemma}\label{P}
Suppose that the following four families of operators 
\begin{align}\label{smth} 
&P_0WSS(z)SWP_0, &&AQ_0S(z)SWP_0,&\notag\\
&P_0WSS(z)Q_0A^*, &&AQ_0S(z)Q_0A^*,& 
\end{align}
belong to $ C^{1,\omega}(D_{a/2};\cB)$. Then: 
\begin{equation}
 \langle\Psi_{0},P_{0} D_{\ve}Q_{0}(Q_{0}(H+\ve\widetilde{W}_{\ve})Q_{0}-z)^{-1}Q_{0}
 D_{\ve}^*P_{0}\Psi_{0}\rangle \in C^{1,\omega}(D_{a/2};\bC).
\end{equation}
\end{lemma}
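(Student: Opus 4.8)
The plan is to reduce the smoothness of the $\ve$-dependent resolvent sandwich in \eqref{F^22} to that of the four $\ve$-independent families listed in \eqref{smth}, using a Neumann-type expansion of $(Q_0(H+\ve\widetilde W_\ve)Q_0-z)^{-1}$ around $(Q_0HQ_0-z)^{-1}$. Concretely, abbreviate $R_0(z):=Q_0(H-z)^{-1}Q_0$ and $R_\ve(z):=(Q_0(H+\ve\widetilde W_\ve)Q_0-z)^{-1}$ (the latter well defined on $\range Q_0$). The second resolvent identity gives, for $\ve$ small enough that $\ve\norm{\widetilde W_\ve}\norm{R_0(z)}<1/2$ uniformly on $D_{a/2}$ — which holds since $\widetilde W_\ve$ is uniformly bounded by Lemma~\ref{W}(ii) and $R_0$ is uniformly bounded on $D_{a/2}$ off the cut, but the danger is precisely near the real axis, so one must instead sandwich the unbounded part — the identity
\begin{equation}\label{neumann-plan}
R_\ve(z)=R_0(z)-\ve R_0(z)Q_0\widetilde W_\ve Q_0 R_\ve(z).
\end{equation}
The key structural input is \eqref{horia1}: $\widetilde W_\ve = X^*\mathcal V(\ve)X$, so each occurrence of $Q_0\widetilde W_\ve Q_0$ between two resolvents factors through the four "atomic" blocks $P_0WS\,S(z)$, $AQ_0\,S(z)$, $S(z)\,Q_0A^*$, $S(z)\,SWP_0$ (recall on $\range Q_0$ one has $R_0(z)=S(z)$ by \eqref{Sz}, and $Q_0\Psi_0=0$ kills the $P_0$-components of $X$ sitting at the very ends). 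Iterating \eqref{neumann-plan} and collecting, the quantity in question becomes a norm-convergent series each of whose terms is a finite product, in the appropriate $\cB$-spaces, of the four families from \eqref{smth} interspersed with the uniformly bounded, $\ve$-dependent but $z$-independent operators $V_{jk}(\ve)$, $SWP_0$, $P_0WS$, $P_0$, $AQ_0$, $Q_0A^*$ (the last two contracting the free $\cK$-indices of $\mathcal V(\ve)$).

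Granting that reduction, the conclusion follows from two soft facts about the H\"older space $C^{1,\omega}$. First, $C^{1,\omega}(D_{a/2};\cB)$ is a Banach algebra under composition of operator-valued functions (with the mixed-space products making sense): if $f\in C^{1,\omega}(D_{a/2};\cB(\cK_2,\cK_3))$ and $g\in C^{1,\omega}(D_{a/2};\cB(\cK_1,\cK_2))$ then $fg\in C^{1,\omega}(D_{a/2};\cB(\cK_1,\cK_3))$ with $\norm{fg}\lesssim\norm{f}\,\norm{g}$ — this is just the Leibniz rule for the first derivative plus the elementary estimate $\norm{f(z_1)g(z_1)-f(z_2)g(z_2)}\le\norm{f(z_1)}\,\norm{g(z_1)-g(z_2)}+\norm{f(z_1)-f(z_2)}\,\norm{g(z_2)}$ and $\omega$ increasing. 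Second, right- or left-multiplication by a fixed bounded operator (here the $V_{jk}(\ve)$ and the constant blocks) preserves $C^{1,\omega}$ with norm controlled by the operator norm, uniformly in $\ve$. Hence each term of the Neumann series lies in $C^{1,\omega}(D_{a/2};\bC)$ (after finally pairing with $\Psi_0$, $P_0\Psi_0=\Psi_0$) with norm bounded by a geometric factor $(C\ve)^n$ times a fixed constant depending only on the $C^{1,\omega}$-norms of the four families in \eqref{smth}; since $C^{1,\omega}(D_{a/2};\bC)$ is complete, the series converges there and the sum is the asserted element of $C^{1,\omega}(D_{a/2};\bC)$.

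The main obstacle, and where care is genuinely needed rather than routine, is the very first step: controlling $R_\ve(z)$ and the products in \eqref{neumann-plan} \emph{uniformly up to the real axis} in $D_{a/2}$. The operator $Q_0\widetilde W_\ve Q_0$ is bounded, but $S(z)$ restricted to $\range Q(J_a)$ blows up like $(\mathrm{dist}(z,J_a))^{-1}$, so one cannot simply write $\norm{\ve R_0(z)Q_0\widetilde W_\ve Q_0}<1$. The resolution is to never leave a bare $R_0(z)$ adjacent to a bare $\widetilde W_\ve$: using \eqref{horia1} one always pairs $S(z)$ with one of $P_0WS$, $SWP_0$, $Q_0A^*$, $AQ_0$ on each side, i.e. one works with the four \emph{bounded} families of \eqref{smth} as the genuine building blocks of the Neumann expansion, so the smallness parameter multiplying each extra block is $\ve$ times the $C^{1,\omega}(D_{a/2})$-norm (in particular the sup-norm) of those bounded families — which is finite by hypothesis. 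Making this regrouping precise (showing that after expanding $\widetilde W_\ve$ via \eqref{horia1} every factor $S(z)$ indeed sits between two of the allowed "collar" operators, including the two at the extreme ends where $Q_0\Psi_0=0$ removes the would-be dangerous $P_0$-components of $X$) is the technical heart; everything downstream is the Banach-algebra bookkeeping described above.
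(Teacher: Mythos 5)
Your proposal follows the same route as the paper's proof. The paper writes the resummed form of your Neumann series as a single second‑resolvent‑type identity, namely
\[
R_\ve(z)=R_0(z)-\ve\,R_0(z)Q_0X^*\bigl\{1+\ve\,\mathcal V_\ve\,XQ_0R_0(z)Q_0X^*\bigr\}^{-1}\mathcal V_\ve\,XQ_0R_0(z),
\]
and observes that $XQ_0R_0(z)Q_0X^*$ is precisely the $4\times4$ block matrix whose nonzero entries are the four families in \eqref{smth}; smallness of $\ve\,\mathcal V_\ve\,XQ_0R_0(z)Q_0X^*$ uniformly on $D_{a}$ then replaces your term-by-term geometric estimate, and a standard resolvent identity transfers the $C^{1,\omega}$-smoothness to the inverse. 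The one spot where your phrasing is a little loose is the handling of the two outer factors: what kills the dangerous terms there is not $Q_0\Psi_0=0$ per se, but that $P_0D_\ve Q_0=\ve^{-1}(P_0X^*)\mathcal V_\ve(XQ_0)$ automatically ends (on the $Q_0$-side) in either $P_0WS$ or $AQ_0$, so that $P_0D_\ve Q_0\,S(z)$ is already one of the admissible left halves of \eqref{smth} — the paper records this as ``$P_0D_\ve Q_0$ contains localizing factors''. With that correction your reduction matches the paper's.
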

The next lemma shows that the hypothesis of Lemma~\ref{P} are indeed satisfied in the case when $\omega=\omega_\theta$:
\begin{lemma}\label{smooth}
Suppose Assumptions~\ref{SM} and~\ref{Gamma(0)} hold true. Then the
four families of Lemma \ref{P} belong to $C^{1,\omega_{\theta}}(D_{a/2};\cB)$.
\end{lemma}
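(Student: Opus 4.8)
\textbf{Proof proposal for Lemma~\ref{smooth}.}

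The plan is to reduce each of the four families in \eqref{smth} to a composition of operators of the form $\Gamma(\cdot)A^{*}$, $\Gamma(\cdot)WP_0$, and the Cauchy-type integrals coming from $S(z)$, and then to track the H\"older exponents through these operations. Recall from \eqref{Sz} that $S(z)=S_<(z)+S_>(z)$ with $S_>(\cdot)$ analytic in $|z|<a$; hence $S_>$ contributes no loss of smoothness and may be handled trivially. The essential object is therefore $S_<(z)$, whose action is given by \eqref{SzGnorm} as multiplication by $(\lambda-z)^{-1}$ in the spectral representation $\widetilde{\Gamma}$. Concretely, for $f,g$ in the relevant auxiliary spaces one has, for $\im z\neq0$,
\begin{equation*}
A Q_0 S(z) Q_0 A^{*} = AQ_0 S_>(z)Q_0 A^{*} + \int_{J_a}\frac{(\Gamma(\lambda)A^{*})^{*}(\Gamma(\lambda)A^{*})}{\lambda-z}\,d\lambda,
\end{equation*}
and analogous formulas for the other three families, with $\Gamma(\lambda)A^{*}$ replaced by $\Gamma(\lambda)WP_0$ on one or both sides. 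By \eqref{fact} and Assumption~\ref{SM} we have $\Gamma(\cdot)A^{*}\in C^{2,\omega_\theta}(J_a;\cB(\cK,\sfh))$, and by Proposition~\ref{bdd} together with Assumption~\ref{Gamma(0)} we get $\Gamma(0)WP_0=0$; moreover $\Gamma(\cdot)WP_0\in C^{2,\omega_\theta}(J_a;\cB(\cH,\sfh))$. The point of the vanishing at $\lambda=0$ is that the numerator in the integrand, when it contains at least one factor $\Gamma(\lambda)WP_0$, vanishes to first order at the singularity of $S_<$, so the would-be non-integrable contribution is tamed; this is exactly the mechanism behind Proposition~\ref{bdd} and is what makes all four families — including the most singular one, $P_0WSS(z)SWP_0$, where one picks up two powers of $S_<$ — well defined.

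First I would fix the two middle families (the ones with exactly one $S_<$ factor): after the reduction above they are Cauchy integrals $\int_{J_a}\Phi(\lambda)(\lambda-z)^{-1}\,d\lambda$ with $\Phi\in C^{2,\omega_\theta}(J_a;\cB)$ and, in the case of the mixed terms, $\Phi$ carrying the extra decay from $\Gamma(0)WP_0=0$. Here I invoke the H\"older mapping properties of the Cauchy (Plemelj) transform recorded in the appendix: if $\Phi\in C^{1,\omega_\theta}$ on $J_a$ then its Cauchy transform lies in $C^{1,\omega_\theta}(D_{a/2};\cB)$, with the endpoint behaviour of $J_a$ versus $J_{a/2}$ absorbing the boundary terms. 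Since $C^{2,\omega_\theta}\subset C^{1,\omega_\theta}$ this immediately gives the claim for the two middle families and for $AQ_0S(z)Q_0A^{*}$. The purely regular family $AQ_0S(z)Q_0A^{*}$ is in fact the easiest since no vanishing is needed: $\Gamma(\cdot)A^{*}\in C^{2,\omega_\theta}$ suffices.

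The main obstacle is the doubly singular family $P_0WSS(z)SWP_0$, where the operator $S_<$ appears once in $S(z)$ and once on each side through the outer $S$'s in $SWP_0$ (recall $SWP_0$ is bounded by Proposition~\ref{bdd} precisely because $\Gamma(0)WP_0=0$). I would write $SWP_0=\widetilde{\Gamma}^{*}\bigl(\lambda^{-1}(\widetilde{\Gamma}WP_0)(\lambda)\bigr)+S_>(0)WP_0$ using \eqref{S<}–\eqref{S}, so that $P_0WSS(z)SWP_0$ becomes, modulo the analytic $S_>$ pieces, an integral
\begin{equation*}
\int_{J_a}\frac{1}{\lambda}\,(\Gamma(\lambda)WP_0)^{*}\,\frac{1}{\lambda-z}\,(\Gamma(\lambda)WP_0)\,\frac{1}{\lambda}\,d\lambda .
\end{equation*}
The integrand has a genuine double zero supplied by $\Gamma(0)WP_0=0$ against the $\lambda^{-2}$, so $\psi(\lambda):=\lambda^{-2}(\Gamma(\lambda)WP_0)^{*}(\Gamma(\lambda)WP_0)$ is well defined; the delicate point is that one needs $\psi\in C^{1,\omega_\theta}$ near $\lambda=0$, and this is exactly where the \emph{second} derivative in Assumption~\ref{SM} is used: a second-order Taylor expansion of $\Gamma(\lambda)WP_0$ at $0$ (with $\omega_\theta$-H\"older remainder on the second derivative) shows that $\lambda^{-1}\Gamma(\lambda)WP_0$ is $C^{1,\omega_\theta}$, hence $\psi$ is $C^{1,\omega_\theta}$ on $J_a$. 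Then the same Cauchy-transform estimate from the appendix applies and yields $P_0WSS(z)SWP_0\in C^{1,\omega_\theta}(D_{a/2};\cB)$. Assembling the four cases completes the proof. Throughout, the only non-routine ingredients are (a) the quantitative H\"older bound for the Cauchy transform on a shrunk interval, quoted from the appendix, and (b) the Taylor-with-remainder argument that trades one order of vanishing at $\lambda=0$ for one derivative of $\Gamma(\cdot)WP_0$, which is why $C^{2,\omega_\theta}$ rather than $C^{1,\omega_\theta}$ is assumed in Assumption~\ref{SM}.
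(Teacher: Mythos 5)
Your proof follows essentially the same route as the paper: reduce to $S_<$ and $S_<(z)$, pass to the spectral representation so the families become Cauchy integrals, use $\Gamma(0)WP_0=0$ together with the extra derivative supplied by Assumption~\ref{SM} to trade one order of vanishing at $\lambda=0$ for one order of singularity and conclude the resulting density is $C^{1,\omega_\theta}(J_a)$, and finish with the Cauchy-transform estimate of Appendix~\ref{appendix1}. The only noteworthy presentational difference is that you work out the doubly singular family $P_0WSS(z)SWP_0$ in detail (the paper picks the representative $P_0WS_<S_<(z)Q_0A^*$ and declares the others ``can be treated similarly''), while you leave implicit the explicit cutoff $\chi$ step the paper uses to reduce to a compactly supported density before invoking Proposition~\ref{G}, which is stated only for such densities; this is a minor technical detail, not a gap.
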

A consequence of Lemmas~\ref{P} and~\ref{smooth} is that the map
$F^1(\cdot,\ve)$ lies in $C^{1,\omega_{\theta}}(D_{a/2})$. Now we can formulate the 
main result of the paper, showing that  the replacement of $\Psi_0$ with $\Psi^1_{\ve}$ leads to 
an improvement of the error term in the exponential decay law, namely the error term is of order at most $\ve^4$.
\begin{theorem}\label{ve4}
Suppose Assumptions~\ref{SM} and~\ref{Gamma(0)} hold true. 
Then for sufficiently small $\ve$ we have
$$F^1(\cdot,\ve) \in C^{1,\omega_{\theta}}(D_{a/2};\bC).$$
In particular, it has well defined limit values
$$ F^1(x,\ve) := \lim_{\eta \searrow 0}F^1(x + i\eta,\ve) \in C^{1,\omega_{\theta}}(J_{a/2};\bC).$$
Let $R^1(x,\ve)$ and $I^1(x, \ve)$ be the real and imaginary part of $F^1(x,\ve)$, respectively:
\begin{equation}
F^1(x,\ve) =: R^1(x,\ve)+ i I^1(x, \ve).
\end{equation}
For a fixed $\ve$ the equation
\begin{equation}\label{x1}
R^1(x,\ve)=0
\end{equation}
has a unique solution $x^1(\ve)$ in $J_{a/2}$, with $|x^1(\ve)| \lesssim \ve$ .
Define
\begin{equation}\label{E1ve}
E^1_{\ve}:=x^1(\ve)+iI^1(x^1(\ve),\ve).
\end{equation}
Then for sufficiently small $\ve$ we have
\begin{equation}\label{A1main}
|A^1(\ve, t)-e^{-itE^1_{\ve}}| \lesssim \ve^4.
\end{equation}
\end{theorem}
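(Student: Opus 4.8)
The plan is to follow the same line of argument as in the proof of Theorem~\ref{JN06}, using the representation \eqref{Main1} for $A^1(\ve,t)$ and the formula \eqref{F^22} for $F^1(z,\ve)$, but now exploiting the extra factor $\ve^4$ that appears in \eqref{F^22} thanks to Proposition~\ref{bdd} (which makes $SWP_0$ bounded) and the structural identity \eqref{P0Q0}. First I would record that by Lemmas~\ref{P} and~\ref{smooth} the map $z\mapsto F^1(z,\ve)$ belongs to $C^{1,\omega_\theta}(D_{a/2};\bC)$ uniformly for small $\ve$, hence has boundary values $F^1(x,\ve)\in C^{1,\omega_\theta}(J_{a/2};\bC)$; this is the content of the first two displayed assertions of the theorem and requires no new work beyond quoting those lemmas. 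Writing $F^1=R^1+iI^1$, the next step is the analysis of the real equation $R^1(x,\ve)=0$: from \eqref{F^22} one sees $R^1(x,\ve)=-x+\ve\langle\Psi_0,\widetilde W_\ve\Psi_0\rangle+\cO(\ve^4)$ with $\partial_x R^1(x,\ve)=-1+\cO(\ve^4)$ uniformly on $J_{a/2}$, so the implicit function theorem (or a direct contraction-mapping argument) gives a unique zero $x^1(\ve)$ with $|x^1(\ve)|\lesssim\ve$, and one defines $E^1_\ve$ as in \eqref{E1ve}.

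The heart of the matter is the bound \eqref{A1main}. Here I would substitute $z=x\pm i\eta$ into \eqref{Main1} and pass to the limit $\eta\searrow0$, which is legitimate because $F^1(\cdot,\ve)$ is $C^{1,\omega_\theta}$ up to the real axis and, crucially, because $|F^1(x,\ve)|$ is bounded below: on $J_{a/2}$ one has $R^1(x,\ve)=-(x-x^1(\ve))(1+\cO(\ve^4))$, so $|F^1(x,\ve)|\gtrsim|x-x^1(\ve)|$, while outside $J_{a/2}$ the term $-z$ dominates. The standard device, exactly as in \cite{JN2} and in the proof of Theorem~\ref{JN06}, is to write $1/F^1(x,\ve)$ as $1/F^1(x,\ve)+1/(x-E^1_\ve)-1/(x-E^1_\ve)$ using a smooth cutoff supported near $x^1(\ve)$: the piece $-1/(x-E^1_\ve)$ reproduces $e^{-itE^1_\ve}$ via a residue/contour computation (pushing the contour into $\im z<0$, where $E^1_\ve$ is the only singularity since $\im E^1_\ve=I^1(x^1(\ve),\ve)<0$ for the relevant range — or $\le0$, which still suffices), and the remainder
\begin{equation*}
g(x,\ve):=\frac{1}{F^1(x,\ve)}-\frac{1}{x-E^1_\ve}
\end{equation*}
must be shown to satisfy $\|g(\cdot,\ve)\|_{\text{suitable}}\lesssim\ve^4$ in a norm strong enough that its inverse Fourier transform is bounded in time, e.g. an $L^1$-type or $C^{1,\omega_\theta}$-type estimate on a fixed interval plus decay at infinity. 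Writing $g(x,\ve)=\dfrac{(x-E^1_\ve)-F^1(x,\ve)}{F^1(x,\ve)(x-E^1_\ve)}$, the numerator is $F^1(x,\ve)-(x-E^1_\ve)$ up to sign, and near $x^1(\ve)$ this equals $\bigl(R^1(x,\ve)+x-x^1(\ve)\bigr)+i\bigl(I^1(x,\ve)-I^1(x^1(\ve),\ve)\bigr)$; the first bracket is $\cO(\ve^4)(x-x^1(\ve))$ by the computation of $\partial_xR^1$, and the second is controlled by the $C^{1,\omega_\theta}$ modulus of $I^1$ together with $|I^1(x^1(\ve),\ve)|\lesssim\ve^4$ — this last fact being where Assumption~\ref{Gamma(0)} enters decisively, since $\Gamma_{FGR}=0$ kills the $\ve^2$ term and the $\ve^4$ prefactor in \eqref{F^22} supplies the rest.

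The main obstacle, as I see it, is the quantitative estimate $|I^1(x^1(\ve),\ve)|\lesssim\ve^4$ together with the matching bound on the remainder $g$ in a norm that survives the inverse Fourier transform uniformly in $t$. Getting $I^1(x^1(\ve),\ve)=\cO(\ve^4)$ is not automatic from $C^{1,\omega_\theta}$ smoothness alone: one has $I^1(x,\ve)=\im\bigl(-\ve^4\langle\Psi_0,P_0D_\ve Q_0(Q_0(H+\ve\widetilde W_\ve)Q_0-x-i0)^{-1}Q_0D_\ve^*P_0\Psi_0\rangle\bigr)$, and one must check that evaluating this at $x=x^1(\ve)$, with $|x^1(\ve)|\lesssim\ve$, indeed gives a genuinely negative (or at worst $\cO(\ve^4)$) imaginary part — that is, that the resolvent $(Q_0HQ_0-x^1(\ve)-i0)^{-1}$ has a bounded imaginary part at the relevant energy, which follows from Assumption~\ref{Aac} and the $C^{1,\omega_\theta}$ limiting absorption built into Lemma~\ref{smooth}, but must be spelled out. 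Once that is in place, the contour-deformation/Fourier estimate is routine and runs verbatim as in \cite{JN2}, Theorem~4.2, with every occurrence of $\ve^2$ replaced by $\ve^4$; I would therefore present the proof as: (1) quote Lemmas~\ref{P}, \ref{smooth} for smoothness; (2) solve $R^1=0$ by contraction; (3) establish the lower bound on $|F^1|$ and the $\cO(\ve^4)$ bound on the numerator of $g$, with special care for $I^1(x^1(\ve),\ve)$; (4) deform the contour and apply the standard Fourier/Cauchy-transform estimate from Appendix~A to conclude \eqref{A1main}.
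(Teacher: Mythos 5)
Your overall skeleton matches the paper's: quote Lemmas~\ref{P} and~\ref{smooth} for the $C^{1,\omega_\theta}(D_{a/2})$ regularity of $F^1$, solve $R^1=0$ by a contraction/implicit-function argument, and then compare $1/F^1$ with a Lorentzian whose residue produces $e^{-itE^1_\ve}$. What you have not recognized is that the paper deliberately isolates this entire quantitative step as Lemma~\ref{Techest}, stated abstractly for a function of the form $F(z,\ve)=a(\ve)-z-\gamma(\ve)f(z,\ve)$, and the proof of Theorem~\ref{ve4} consists of nothing more than checking its hypotheses for \eqref{F^22} with $a(\ve)=\ve\langle\Psi_0,\widetilde W_\ve\Psi_0\rangle$ and $\gamma(\ve)=\ve^4$; conditions (i)--(iii) follow because $1/F^1(z,\ve)=\langle\Psi_0,(H+\ve\widetilde W_\ve-z)^{-1}\Psi_0\rangle$ is a Herglotz-type resolvent expression, (iv) is the Stone-formula normalization $A^1(\ve,0)=1$, and (v) is exactly Lemmas~\ref{P}/\ref{smooth}. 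Your step (2) and the bound $|x^1(\ve)|\lesssim\ve$ are then part (a) of that lemma, and \eqref{A1main} is part (b).

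Two points in your step (3)--(4) are genuinely problematic. First, the norm you suggest is not strong enough: writing $g$ as the difference between $1/F^1$ and the Lorentzian (and note the sign -- since $\partial_xR^1\approx-1$ one has $F^1(x,\ve)\approx-(x-E^1_\ve)$, so the right comparison is $1/F^1+1/(x-E^1_\ve)$, not $1/F^1-1/(x-E^1_\ve)$), the pointwise bound is $|g(x,\ve)|\lesssim\gamma(\ve)\,|x-x^1(\ve)|/(|x-x^1(\ve)|^2+\Gamma^2)$ with $\Gamma=|\im E^1_\ve|$, whose $L^1$ norm over a fixed interval is $\gamma(\ve)\log(1/\Gamma)$, i.e.\ $\ve^4\log(1/\ve)$ rather than $\ve^4$. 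The paper avoids this log loss by decomposing
\begin{equation*}
\frac{1}{F}-\frac{1}{L}=\frac{L-F}{L^2}+\frac{(L-F)^2}{L^2F},
\end{equation*}
where $L(x,\ve,\eta)=-(x-x(\ve,\eta))-i\Gamma(\ve,\eta)$: the quadratic piece is integrated in $L^1$ and gives the harmless $\gamma^2\log(1+\gamma^{-2})$ in \eqref{Con1}, while the linear piece $\frac{L-F}{L^2}$ is split into the term proportional to $(x-x(\ve,\eta))/L^2$ -- whose Fourier integral is bounded uniformly in $t$ by a genuine oscillatory/contour cancellation argument (\eqref{est1}--\eqref{est3}), not an $L^1$ bound -- and a H\"older remainder controlled by the integrability condition $\int\omega(x)/x\,dx<\infty$ (\eqref{Con3}). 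Second, you say nothing about the contribution of $\bR\setminus J_{\ve,\eta}$; the paper controls this tail with Hunziker's trick, i.e.\ using that condition (iv) (the Stone normalization at $t=0$) forces $\limsup_{\eta\searrow0}|I_2(\ve,\eta,t)|\lesssim\gamma(\ve)$. Without both of these devices your estimate gives at best $\ve^4\log(1/\ve)$, not the claimed $\ve^4$.

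Finally, the issue you flag as the main obstacle is actually trivial: from \eqref{F^22} one has $F^1(x,\ve)=\ve\langle\Psi_0,\widetilde W_\ve\Psi_0\rangle-x-\ve^4\langle\cdots\rangle$, and since $\widetilde W_\ve$ is self-adjoint (as $U_\ve^*H_\ve U_\ve-H$ over $\ve$) the first two terms are real for real $x$, so $I^1(x,\ve)$ is identically $-\ve^4\im\langle\cdots\rangle$; boundedness of the bracket follows from Lemmas~\ref{P}/\ref{smooth}, and $|I^1(x^1(\ve),\ve)|\lesssim\ve^4$ is automatic. No further limiting-absorption input is needed beyond what Lemma~\ref{smooth} already provides.
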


The proofs of both theorems rely heavily  on a careful estimate of
the integrals in the r.h.s. of \eqref{Main1} and \eqref{Main0}, respectively. 
The following technical lemma provides an abstract setting which can
be  used directly in both theorems (note that $\int_0^1\frac{\omega_{\theta}}{x}dx <\infty$ for all $\theta >0$).

\begin{lemma}\label{Techest}
Consider the function
\begin{equation}\label{Fz}
F(z,\ve)= a(\ve)-z -\gamma(\ve)f(z,\ve),
\end{equation}
for which the following five conditions hold true:
\begin{itemize}
\item[\rm(i)] $f(z,\ve)$ is an analytic function on $\{z\in\bC: \im z>0\}$, and  $\overline{f(\overline{z},\ve)}=f(z,\ve)$;
\item[\rm(ii)] Let $a(\ve)$ be real valued, $\gamma(\ve)>0$  and $\lim_{\ve
  \rightarrow 0} (|a(\ve)| +\gamma(\ve))=0$;
\item[\rm(iii)] $\im f(z,\ve) \geq 0$ for $\im z >0$;
\item[\rm(iv)] $\lim_{\eta \searrow 0}\frac{1}{\pi} \int_{\bR}\frac{\im F(x +i\eta, \ve)}{|F(x +i\eta, \ve)|^2}dx=1$;
\item[\rm(v)]  $f(\cdot , \ve) \in C^{1,\omega}(D_b;\bC)$ for some $b>0$
and $\int_0^1\frac{\omega(x)}{x}dx <\infty$.
\end{itemize}
Let $R(x,\ve)$, $I(x, \ve)$ be the real and imaginary part of $\lim_{\eta \searrow 0}F(x + i\eta,\ve)$ for $x\in J_b$.
Then the following statements hold true:
\begin{itemize}
\item[\rm(a)] For a sufficiently small $\ve$, the equation
\begin{equation}\label{x}
R(x,\ve)=0
\end{equation}
has a unique solution $x(\ve)\in J_b$, obeying: 
\begin{equation}\label{xve}
|x(\ve)| \lesssim |a(\ve)|+ \gamma(\ve).
\end{equation}
\item[\rm(b)]  If 
\begin{equation}\label{Eve}
E_{\ve}:=x(\ve)+iI(x(\ve),\ve),
\end{equation}
then for sufficiently small $\ve$ we have:
\begin{align}\label{Main}
\limsup_{\eta \searrow 0}\bigl|\frac{1}{2\pi i}\int_{\bR} e^{-ixt}
&\Bigl(\frac{1}{F(x+i\eta,\ve)}\notag\\
&\qquad-\frac{1}{F(x-i\eta,\ve)}\Bigr)dx-e^{-itE_{\ve}}
\bigr| \lesssim \gamma(\ve).
\end{align}
\end{itemize}
\end{lemma}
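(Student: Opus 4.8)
The plan is to reduce the statement of Lemma~\ref{Techest} to a careful contour-integral analysis of the boundary values of $1/F(x+i\eta,\ve)$ as $\eta\searrow0$. First I would establish part (a): by condition (v) the boundary values $F(x,\ve)=\lim_{\eta\searrow0}F(x+i\eta,\ve)$ exist and lie in $C^{1,\omega}(J_b;\bC)$, so $R(x,\ve)=a(\ve)-x-\gamma(\ve)\,\re f(x,\ve)$ is $C^1$ in $x$ with derivative $\partial_x R=-1-\gamma(\ve)\,\partial_x\re f$, which is uniformly close to $-1$ for small $\ve$ since $\gamma(\ve)\to0$ and $\partial_x f$ is bounded on $J_b$ by (v). Hence $R(\cdot,\ve)$ is strictly monotone on $J_b$; since $R(0,\ve)=a(\ve)-\gamma(\ve)\re f(0,\ve)=\cO(|a(\ve)|+\gamma(\ve))$ by (ii) and (v), the intermediate value theorem gives a unique zero $x(\ve)$ with $|x(\ve)|\lesssim|a(\ve)|+\gamma(\ve)$, which is \eqref{xve}.

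Next, for part (b), I would rewrite the integral using the analyticity of $f$ in the upper half plane (condition (i)) together with the symmetry $\overline{f(\bar z,\ve)}=f(z,\ve)$. The integrand $\frac1{F(x+i\eta,\ve)}-\frac1{F(x-i\eta,\ve)}$ equals $\frac1{F(x+i\eta,\ve)}-\overline{\frac1{\overline{F(x+i\eta,\ve)}}}$, i.e.\ twice $i\,\im\frac1{F(x+i\eta,\ve)}=-2i\,\frac{\im F(x+i\eta,\ve)}{|F(x+i\eta,\ve)|^2}$, so the quantity in \eqref{Main} is $-\frac1\pi\lim_{\eta\searrow0}\int_\bR e^{-ixt}\,\frac{\im F(x+i\eta,\ve)}{|F(x+i\eta,\ve)|^2}\,dx$. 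Condition (iv) says the $t=0$ value of this (without the sign and phase, i.e.\ the total mass) is $1$, so $-\frac1\pi\,\frac{\im F(x+i\eta,\ve)}{|F(x+i\eta,\ve)|^2}\,dx$ is (in the limit) a probability measure $\mu_\ve$ on $\bR$; the claim \eqref{Main} becomes $\bigl|\int_\bR e^{-ixt}\,d\mu_\ve(x)-e^{-itE_\ve}\bigr|\lesssim\gamma(\ve)$ uniformly in $t\geq0$. The standard route now is to deform the contour: $\int_\bR e^{-ixt}\,d\mu_\ve(x)=\frac1{2\pi i}\int_{\bR+i0}\frac{e^{-ixt}}{F(x,\ve)}\,dx-\frac1{2\pi i}\int_{\bR-i0}\frac{e^{-ixt}}{F(x,\ve)}\,dx$, and one pushes the first contour down into the lower half plane through the (unique, by the monotonicity argument of part (a) applied to the full complex $F$) zero $z_\ve$ of the analytic continuation of $F$ near $x(\ve)$. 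One shows $z_\ve=E_\ve+\cO(\gamma(\ve)^2)$ or better by plugging \eqref{Eve} into $F$ and Taylor expanding, using $\im F(x(\ve),\ve)=I(x(\ve),\ve)=\cO(\gamma(\ve))$ from (iii)–(v). The residue at $z_\ve$ produces $e^{-itz_\ve}\cdot\frac1{F'(z_\ve,\ve)}$; since $F'=-1-\gamma(\ve)f'=-1+\cO(\gamma(\ve))$, the residue is $e^{-itz_\ve}(1+\cO(\gamma(\ve)))=e^{-itE_\ve}+\cO(\gamma(\ve))$ uniformly in $t\geq0$ (here one needs $\im z_\ve\leq0$, guaranteed by (iii), so $|e^{-itz_\ve}|\leq1$).

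The remaining and genuinely hard part is controlling the deformed contour integral, i.e.\ the ``background'' term $\frac1{2\pi i}\int_\Gamma\frac{e^{-ixt}}{F(x,\ve)}\,dx$ over a contour $\Gamma$ that dips into the lower half plane near $\re x=x(\ve)$ and stays on the real axis (or just below) away from the resonance. This is where the smoothness hypothesis (v) and the Dini-type condition $\int_0^1\frac{\omega(x)}{x}\,dx<\infty$ enter: to bound this integral uniformly in $t$ one integrates by parts once in $x$, which moves a $1/t$ onto the prefactor but produces $F'/F^2=\partial_x(1/F)$; near $x(\ve)$ the denominator $|F(x,\ve)|$ is comparable to $|x-x(\ve)|+\gamma(\ve)$ (by the monotonicity of $R$ and $|\im F|\lesssim\gamma(\ve)$), so one needs to show $\int\frac{|\partial_x F(x,\ve)-\partial_x F(x(\ve),\ve)|}{(|x-x(\ve)|+\gamma(\ve))^2}\,dx$ and similar expressions are bounded — and this is exactly controlled by $\omega$ and the Dini condition, via the type of Cauchy-integral / Privalov estimates the authors refer to in their first appendix. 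One must also handle the truncation of $J_b$ to the whole line: outside $J_b$ the function $F(x,\ve)\approx a(\ve)-x$ is large and smooth, so those pieces of the integral are $\cO(1/t)$ or better and contribute within the allowed $\cO(\gamma(\ve))$ error after a harmless cutoff and an elementary estimate. I expect the contour-deformation bookkeeping near the resonance — making the $|F|\gtrsim|x-x(\ve)|+\gamma(\ve)$ lower bound precise on the shifted contour and extracting the $\gamma(\ve)$ rate rather than merely $o(1)$ — to be the main obstacle; the algebraic identifications in the first two paragraphs are routine once the symmetry $\overline{f(\bar z,\ve)}=f(z,\ve)$ is exploited.
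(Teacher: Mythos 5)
Your part (a) matches the paper's argument. But your part (b) has a fatal gap.

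The core of your strategy is to ``push the first contour down into the lower half plane through the (unique \dots) zero $z_\ve$ of the analytic continuation of $F$ near $x(\ve)$.'' No such analytic continuation exists under the hypotheses of the lemma. Condition (i) gives analyticity of $f(\cdot,\ve)$ \emph{only in the open upper half plane}; condition (v) gives only $C^{1,\omega}$ smoothness up to the real axis. The values of $F$ in the lower half plane are forced by the symmetry $\overline{f(\bar z,\ve)}=f(z,\ve)$, which is the Schwarz-reflection \emph{formula}, not the Schwarz reflection \emph{principle}: it does not glue the two half-plane branches into a single analytic function across $J_b$ unless $\im f$ vanishes there, which is precisely what fails in the interesting (resonance) case. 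In other words, there is no pole, no residue, and no $z_\ve$; this is exactly why the lemma defines $E_\ve$ through limit values rather than as a zero of a continued function, and why the whole point of the smooth (as opposed to dilation-analytic) setting is to do without such a continuation. Your later appeal to $\im z_\ve\leq 0$ ``guaranteed by (iii)'' also does not follow: (iii) constrains $\im f$ in the upper half plane only.

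The paper's proof is a real-variable argument that bypasses contour deformation entirely. It introduces the Lorentzian $L(x,\ve,\eta)=-(x-x(\ve,\eta))-i\Gamma(\ve,\eta)$ with $\Gamma(\ve,\eta)=-I(x(\ve,\eta),\ve,\eta)$, restricts to the window $J_{\ve,\eta}$ of half-width $C\Gamma(\ve,\eta)/\gamma(\ve)$, and proves the two-sided estimate
\begin{equation*}
\Bigl|\int_{J_{\ve,\eta}}e^{-ixt}\Bigl(\frac{1}{F(x+i\eta,\ve)}-\frac{1}{L(x,\ve,\eta)}\Bigr)dx\Bigr|\lesssim\gamma(\ve),
\end{equation*}
using the algebraic identity $\frac{1}{F}-\frac1L=\frac{L-F}{L^2}+\frac{(L-F)^2}{L^2F}$ together with the Taylor expansion of $L-F$ about $x(\ve,\eta)$; here is where condition (v) and the Dini integrability $\int_0^1\omega(x)/x\,dx<\infty$ are actually used (in estimate \eqref{L-F3}). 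The $L$-integral over $J_{\ve,\eta}$ then reproduces $e^{-it(x(\ve,\eta)-i\Gamma(\ve,\eta))}$ up to $\cO(\gamma(\ve))$ by an explicit Lorentzian computation. The tail $\bR\setminus J_{\ve,\eta}$ is handled by Hunziker's trick: since $-\frac1\pi\frac{\im F}{|F|^2}\geq 0$ (which \emph{is} where condition (iii) enters), the tail contribution $I_2(\ve,\eta,t)$ satisfies $|I_2(\ve,\eta,t)|\leq I_2(\ve,\eta,0)$, and the latter is $\lesssim\gamma(\ve)$ by condition (iv) combined with the near-$J_{\ve,\eta}$ estimate at $t=0$. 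Your proposal does not use this positivity-plus-normalization mechanism at all; your replacement (``outside $J_b$ the function $F\approx a(\ve)-x$ is large, so those pieces are $\cO(1/t)$'') is both insufficient (you need a bound uniform in $t\geq0$, including small $t$) and does not address the part of $J_b$ away from the resonance, which is not covered by $J_{\ve,\eta}$.

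Even granting, hypothetically, an analytic continuation, you would still need to justify that the deformed-contour ``background'' integral is $\cO(\gamma(\ve))$ uniformly in $t\geq0$; the integration-by-parts sketch you give produces a $1/t$ factor, which is useless for $t\lesssim 1$, and you offer no alternative there. So the proposal has two independent gaps: the nonexistent analytic continuation, and the missing uniform-in-$t$ control of the tail that Hunziker's trick provides.
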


In both Theorem~\ref{JN06} and Theorem~\ref{ve4} 
two functions $F^0(z,\ve)$ and $F^1(z,\ve)$ appear, and
the above lemma has to be applied to each of them. 

It is important to remember
that up to an application of the SLFG formula, we have that
$1/F^0(z,\ve)=\langle \Psi_0, (H+\ve W-z)^{-1}\Psi_0\rangle$ and 
$1/F^1(z,\ve)=\langle \Psi_0, (H+\ve
\widetilde{W}_{\ve}-z)^{-1}\Psi_0\rangle$. 
In this case, the limit $\lim_{\eta \searrow 0}\frac{1}{2\pi i}\int_{\bR} e^{-ixt}
(\frac{1}{F(x+i\eta,\ve)}-\frac{1}{F(x-i\eta,\ve)})dx$ exists due to the Stone formula (see~\eqref{Main0} and~\eqref{Main1}). Moreover, the same Stone formula
shows that the condition (iv) in the
lemma is nothing but the fact that when $t=0$, the evolution group
equals the identity, hence $A^0(\ve,0)=A^1(\ve,0)=1$. 

We will see that in the case of Theorem \ref{JN06} we have  
$\gamma(\ve) =\ve^2$, while 
in the case of Theorem \ref{ve4} we have $\gamma(\ve) =\ve^4$. 
An outline of the proof of Lemma \ref{Techest} for the case $\omega(x)=x^{\theta}$ has been given in \cite{JN2}. 
For completeness, we shall provide the proof in the next section.

There are many open questions in this area and we close this section by mentioning a few of  them.
\begin{itemize}
\item[\rm(i)] The condition $\Gamma_{FGR}=0$ is crucial in proving Theorem~\ref{ve4}. A natural question is whether
 in case $\Gamma_{FGR}\neq 0$ one can still find a better resonance function
 than $\Psi_0$ (e.g. by truncating the
 corresponding Gamov vector~\cite{Sk})  leading to an error term of order $\ve^q$, $q>2$. 
 We believe that this is not possible.

\item[\rm(ii)] Theorem \ref{ve4} gives $\im E^1_{\ve} = -\ve^4 \Gamma_{FGR}^1
+\cO(\ve^5)$ and moreover, $\Gamma_{FGR}^1$ can be  computed explicitly. 
At the formal level, one can show that via the result in \cite{Ho1},
$\Gamma_{FGR}=\Gamma_{FGR}^1=0$ insures that the R-S 
expansion is well 
defined up to $N=2$ so that the procedure can be iterated, leading to an
even smaller error term. Unfortunately, the proof of the 
 required smoothness 
properties becomes prohibitively complex.  

Indeed, notice that while for $F^0$ the required smoothness
properties follow directly from the smoothness of $Q_0(H-z)^{-1}Q_0$, in the case of $F^1$ the proof of the required smoothness properties constitutes
most of the technicalities of this paper, see Lemmae~\ref{W}--\ref{smooth}. Moreover, the proof of some crucial properties of 
 $\widetilde{W}_{\ve}$ listed in Lemma~\ref{W} are long and somewhat tricky explicit computations. While there is no doubt that with a 
large amount
 of computations and technicalities one can prove the analogue of Theorem~\ref{ve4} for $N=2$, for $N>2$ one has to invent a clever
recurrence  scheme. If that can be achieved, the outcome will be a fully developed R-S perturbation theory, as well as a control of dynamics of
the corresponding metastable states for embedded
 eigenvalues in the smooth context.

\item[\rm(iii)] There is a large body of literature on resonances of Liouvilleans governing the dynamics of open quantum systems,
see e.g \cite{Me} and references given there. In particular, it has been shown in \cite{Me} that if the Hamiltonian describing the 
small component of the open quantum system has degenerate eigenvalues, then the corresponding FGR constant vanishes, and this  leads to
return to equilibrium with exponentially fast convergence of order $e^{-\xi\lambda^4t}$, where $\xi>0$ and $\lambda$ is the 
strength of  the coupling between the small system and the reservoir. It would be very interesting to see whether some of the 
techniques developed in the present paper could be adapted to the Liouvillean framework.

\end{itemize}

\section{Proofs}\label{section3}
\subsection { Proof of Proposition \ref{lowb}}
Let $P_{\ve}$ be the orthogonal projection corresponding to
$\Psi_{\ve}$. Since
$\langle\Psi_0,\Psi_1\rangle=0$, we have $P_0\Psi_1=0$ and
$Q_0\Psi_1=\Psi_1$.  

From \eqref{Psi0Psi} we get:
$$A
^0(\ve,t)-e^{-itE_{\ve}}=\ve^2\langle\Psi_1,(e^{-itH_{\ve}}-e^{-itE_{\ve}})\Psi_1 \rangle +o(\ve^2).
$$
Let $Q_{\ve}=1-P_{\ve}$. We have that 
$\|P_{\ve}-P_0\|=\cO(\ve)$, $\|P_{\ve}Q_0\|=\cO(\ve)$ and $\|P_{\ve}\Psi_1\| =\cO(\ve)$. Then
$$
\langle\Psi_1,(e^{-itH_{\ve}}-e^{-itE_{\ve}})\Psi_1 \rangle=
\langle Q_{\ve}\Psi_1, e^{-itH_{\ve}} Q_{\ve}\Psi_1\rangle -e^{-itE_{\ve}}\|\Psi_1\|^2 +\cO(\ve).
$$
Now, as $Q_{\ve}\cH$ maps into the subspace of absolute continuity for $H_{\ve}$
the first term in the r.h.s. vanishes in the limit $t \rightarrow
\infty$. Summing up:
$$
\liminf_{t \rightarrow \infty}| \langle\Psi_1,(e^{-itH_{\ve}}-e^{-itE_{\ve}})\Psi_1 \rangle| = \|\Psi_1\|^2 +\cO(\ve)
$$
and the proof is finished as $\|\Psi_1\| \neq 0$.

\subsection{ Proof of Proposition \ref{bdd}} 
If $g \in  \cH$ with $\|g\|=1$, then 
$h(\lambda):=(\widetilde{\Gamma} WP_0g)(\lambda)\in \sfh$. We see that $h(\lambda)=\Gamma(\lambda)WP_0g$ because of \eqref{Gammachar} and Assumption \ref{SM}. In addition 
\begin{equation}\label{Lip}
h(0)=\Gamma(0)W\Psi_0 \langle \Psi_0,g\rangle,\quad \| h(\lambda)-h(0)\|_{\sfh} \leq  C|\lambda|,\quad \lambda\in J_a,
\end{equation}
where $C <\infty$ is a constant which is independent of $g$. From \eqref{DS<} and \eqref{S<} we see that $WP_0g$ belongs to the domain of $S_<$ if and only if 
$$\int_{J_a}\frac{\|h(\lambda)\|^2_{ \sfh}}{\lambda^2} d\lambda <\infty.
$$
From \eqref{Lip} it follows that if  Assumption \ref{Gamma(0)} also holds true, then $h(0)=0$ and $S_<WP_0$ is bounded. If Assumption \ref{Gamma(0)} does not hold true, then $W\Psi_0$ does not belong to the domain of $S_<$.

Now let us assume that both Assumptions  \ref{SM} and \ref{Gamma(0)} hold. 
Since $S_>WP_0$ is also bounded, it follows that $SWP_0$ is
bounded. On the other hand, for $f\in \cD(S_<)$, we have by duality
\begin{align*}
 \|P_0WS_<f\| &= \sup_{\|g\|=1}|\langle S_<WP_0g,f\rangle| \\
&\leq (\sup_{\|g\|=1}\|S_<WP_0g\|)\|f\|\leq \|S_<WP_0\| \|f\|,
\end{align*}
and the proof is finished.

\subsection{Proof of Lemma \ref{W}}

The proof requires tedious computations using \eqref{tHve}, \eqref{tW}, \eqref{Pve}, and \eqref{Uve}. We have to prove that $\widetilde{W}_{\ve}$ 
is uniformly bounded as 
$\ve \rightarrow 0$, even though $H$ is not supposed to be
bounded. We will show that $\widetilde{W}_{\ve}$ has a norm expansion in $\ve$ and compute this expansion up to errors of order
$\cO(\ve^3)$.

To proceed we need to expand the term
 \begin{equation}\label{Dve}
 \Delta_{\ve}:=P_{\ve}-P_0. 
 \end{equation}
 From now 
on we shall take $\ve$ sufficiently small such that Proposition~\ref{bdd}, \eqref{T1},
 \eqref{Tve}, and \eqref{Deltave} imply
\begin{equation}\label{ve<1}
 \|T_{\ve}^2-T_{\ve}\| =\ve^2\|T_1\|^2<\tfrac{1}{4},
\end{equation}
and then one can expand in powers of $T_{\ve}^2-T_{\ve}$ in the r.h.s. of \eqref{Pve} and obtain
\begin{equation}\label{Pve3}
 \Delta_{\ve}= \ve T_1 +\ve^2\bigl(1-2P_0\bigr)T_1^2 +\ve^3 \bigl(P_0 E_0(\ve) +T_1E_1(\ve)\bigr)T_1
\end{equation}
with $E_j(\ve)$ uniformly bounded as 
$\ve \rightarrow 0$.

The following identities  follow from the fact that $P_{\ve}$ and $P_0$ are projections.
\begin{equation}\label{id1}
[ \Delta_{\ve}^2,P_{\ve}]=[ \Delta_{\ve}^2,P_{0}]=0,
\end{equation}
\begin{equation}\label{id2}
 P_{\ve}P_0+(1-P_{\ve})(1-P_0)=1-(2P_{\ve}-1)\Delta_{\ve}=1+\Delta_{\ve}(2P_0-1).
\end{equation}
For $\ve$ sufficiently small such that in addition to
\eqref{ve<1} we have 
$\|\Delta_{\ve}\|<1$. 
Then
\begin{equation}\label{expD}
 (1-\Delta_{\ve}^2)^{-\frac{1}{2}}=:
 1+\Delta_{\ve}^2N_{\ve}=:
1+\tfrac{1}{2}\Delta_{\ve}^2 +\Delta_{\ve}^2\widetilde{N}_{\ve}\Delta_{\ve}^2,
\end{equation}
with $N_{\ve}$, $\widetilde{N}_\ve$ uniformly bounded as $\ve \rightarrow
0$, and commuting with both $P_0$ and $P_{\ve}$. 
Inserting \eqref{Dve} in \eqref{Uve} and using \eqref{expD}, \eqref{id1},   \eqref{id2} one obtains
\begin{equation}\label{expU}
U_{\ve}=1+\tfrac{1}{2}\Delta_{\ve}^2+\Delta_{\ve}(2P_0-1)+\Delta_{\ve}^2\widetilde{N}_{\ve}\Delta_{\ve}^2
-\Delta_{\ve}^2N_{\ve}(2P_{\ve}-1)\Delta_{\ve}.
\end{equation}
 Then we write
\begin{equation}\label{Bve}
 B_{\ve}:=U_{\ve}-1=:\Delta_{\ve}(2P_0-1)+\tfrac{1}{2}\Delta_{\ve}^2 +\Delta_{\ve}^2M_{\ve}\Delta_{\ve}
\end{equation}
with $M_{\ve}$ uniformly bounded as $\ve \rightarrow 0$. 

By direct computation we obtain (see \eqref{tHve} and \eqref{tW}):
\begin{align}\label{Wve}
 \widetilde{W}_{\ve} = B_{\ve}^*H+HB_{\ve} +B_{\ve}^*HB_{\ve} +\ve\bigl(W+B_{\ve}^*W +WB_{\ve} +B_{\ve}^*WB_{\ve}\bigr).
\end{align}
To proceed further, we list the following identities which follow from $E_0=0$ and \eqref{T1}.
\begin{equation}\label{id3}
 HP_0=P_0H=0,\;\;\; HT_1=-Q_0WP_0, \;\;\; T_1Q_0=P_0T_1=-P_0WS.
\end{equation}
From \eqref{Pve3} and \eqref{id3} we get
\begin{equation}\label{HPve}
H\Delta_{\ve}= HP_{\ve}=-\ve Q_0WP_0+\ve^2Q_0WP_0WS -\ve^3 Q_0WP_0E_1(\ve)T_1,
\end{equation}
\begin{equation}\label{PveQ0}
\Delta_{\ve}Q_0=-\ve P_0WS+\ve^2\bigl(1-2P_0\bigr)T_1^2Q_0 -\ve^3 \bigl(P_0E_0(\ve)+T_1E_1(\ve)\bigr)P_0WS.
\end{equation}
From this point onwards the proof of Lemma \ref{W} is a somewhat long but straightforward computation using \eqref{id3}. 
Consider for example the term $HB_{\ve}$. From \eqref{Bve} and \eqref{HPve}
we get
 \begin{multline}
 HB_{\ve}=\bigl(-\ve Q_0WP_0+\ve^2Q_0WP_0WS -\ve^3 Q_0WP_0E_1(\ve)T_1\bigr)
 \\
\cdot\bigl(2P_0-1 +\tfrac{1}{2}\Delta_{\ve}+ \Delta_{\ve}M_{\ve}\Delta_{\ve}\bigr),
 \end{multline}
and using  \eqref{Pve3} one can see by inspection that it has the structure in \eqref{matrixW} and also compute
 explicitly its  expansion up to terms of order $\cO(\ve^2)$. 

\subsection{ Proof of Lemma \ref{P}}

Take $\im z> 0$. We want to write $\bigl(Q_0\widetilde{H}_{\ve}Q_0-zQ_0\bigr)^{-1}$
in the Hilbert space $Q_0\cH$ in a different way. Using
\eqref{horia1}, by a standard re-summation argument we obtain
\begin{align}\label{horia2}
 \bigl(Q_0\widetilde{H}_{\ve}Q_0-zQ_0\bigr)^{-1}&=\bigl(Q_0HQ_0-zQ_0\bigr)^{-1}
 -\ve \bigl(Q_0HQ_0-zQ_0\bigr)^{-1}Q_0X^*\notag\\
&\quad\cdot \bigl\{1+\ve \mathcal{V}_{\ve} XQ_0
(Q_0HQ_0-zQ_0)^{-1}Q_0X^*\bigr\}^{-1}\notag\\
&\quad\cdot\mathcal{V}_{\ve}XQ_0\bigl(Q_0HQ_0-zQ_0\bigr)^{-1}.
\end{align}
 Note that both $X^*$ and $X$
contain localizing factors. If $z\in D_a$ and $\ve$ is
small enough, then due to our hypothesis on the four families of
operators we have the uniform bound
$$\ve \| \mathcal{V}_{\ve} XQ_0
  \bigl(Q_0HQ_0-zQ_0\bigr)^{-1}Q_0X^*\|\leq \tfrac12,$$
hence the representation \eqref{horia2} makes sense in
$D_a$ up to the real line. Now by standard resolvent identities we can transfer
the smoothness of $XQ_0
  (Q_0HQ_0-zQ_0)^{-1}Q_0X^*$ to the inverse
$$ \bigl\{ 1+\ve \mathcal{V}_{\ve}XQ_0
  (Q_0HQ_0-zQ_0)^{-1}Q_0X^*\bigr\}^{-1}.$$
Also, from \eqref{HPve}, \eqref{Wve}, \eqref{Bve}, and \eqref{Pve3} we
conclude that $P_{0} D_{\ve}Q_{0}$ contains localizing factors,
uniformly in $\ve$. Using again \eqref{horia2}, we obtain:
$$
\langle\Psi_{0},P_{0} D_{\ve}Q_{0}\bigl(Q_{0}(H+\widetilde{W}_{\ve})Q_{0}-zQ_0\bigr)^{-1}Q_{0} D_{\ve}^*P_{0}\Psi_{0}\rangle \in
 C^{1,\omega_{\theta}}(D_{a/2};\bC).
 $$

\subsection{Proof of Lemma \ref{smooth}}

It is sufficient to consider the operators in \eqref{smth} with
$S$ and $S(z)$ replaced by $S_<$ and $S_<(z)$, respectively. We only consider 
$P_0WS_<S_<(z)Q_0A^*$, the others can be treated similarly.  From  \eqref{SzGnorm}, \eqref{S<}, \eqref{Gammachar} and the fact that $A^*$ maps $\cK$ into $\cD$ we have:
\begin{equation*}
 \langle f,P_0WS_<S_<(z)Q_0A^*g\rangle=\int_{J_a}\frac{1}{\lambda(\lambda-z)} \langle \Gamma(\lambda)WP_0f,\Gamma(\lambda)A^*g\rangle_{\sfh} d\lambda.
\end{equation*}
Let us consider (we also use Assumption \ref{Gamma(0)}):
\begin{align*}
\Phi(\lambda)&=\frac{1}{\lambda}\bigl\langle [\Gamma(\lambda)-\Gamma(0)]WP_0f,\Gamma(\lambda)A^*g\bigr \rangle_{\sfh}\\
&=\int_0^1\left \langle [\Gamma'(u\lambda)WP_0]f,\Gamma(\lambda)A^*g\right \rangle_{\sfh} du.
\end{align*}
Thus there exists a constant $C>0$ such that 
\begin{equation}\label{smf}
 \max\{|\Phi(\lambda_1)-\Phi(\lambda_2)|, |\Phi'(\lambda_1)-\Phi'(\lambda_2)|\}\leq C \|f\|\|g\|\; \omega_{\theta}(|\lambda_1-\lambda_2|), 
\end{equation}
for all $\lambda_1,\lambda_2\in J_a$.
Let  $\chi \in C^{\infty}({J_a})$, $0 \leq \chi(\lambda) \leq 1$,  $\chi(\lambda)=1$ for $|\lambda| <\frac{3}{4}a$,
  $\chi(\lambda)=0$ for $|\lambda| >\frac{7}{8}a$, and write
\begin{multline}\label{chif}
\langle f,P_0WS_<S_<(z)Q_0A^*g\rangle=\int_{J_a}\frac{\Phi(\lambda)}{\lambda-z} d\lambda \\
=\int_{J_a}(1-\chi(\lambda))\frac{\Phi(\lambda)}{\lambda-z} d\lambda +
\int_{J_a}\frac{\Phi(\lambda)\chi(\lambda)}{\lambda-z} d\lambda.
\end{multline}
The first term in the r.h.s. of \eqref{chif} is analytic in $|z| < \frac{3}{4}a$, while for the second one we use that the Cauchy integral transform a compactly supported 
function from $C^{n, \omega_{\theta}}(J_{a};\bC)$ is continuously mapped to $C^{n, \omega_{\theta}}(D_{a/2};\bC)$ (see Appendix A). Due to \eqref{smf} we can lift the weak estimate to a norm estimate, and the 
proof is over. 

\subsection{ Proof of Lemma \ref{Techest}}

Write for $z=x+i\eta \in D_b$:
\begin{align} \label{F}
 F(x+i\eta, \ve) &= \re  F(x+i\eta, \ve) +i \im  F(x+i\eta, \ve)\notag\\
&=: R(x, \ve, \eta)+ iI(x, \ve, \eta).
\end{align}
It follows from \eqref{Fz} and v. that for $x\in J_b$
\begin{equation}
F(x, \ve):= \lim_{\eta \searrow 0}F(x + i\eta,\ve)
\end{equation}
exists and belongs to $ C^{1,\omega}(J_b;\bC)$.
It follows from \eqref{Fz}, ii. and v. (notice that $\frac{d}{dx}R(x, \ve, \eta)$ is close to $-1$ and
 $F(\cdot , \ve) \in C^{1,\omega}(D_b;\bC)$). Thus 
for $|x|<b/2$ and $\ve, \eta \geq 0$ sufficiently small the equation $
 R(x, \ve, \eta)=0$ 
has  a unique solution $x(\ve, \eta)$ in $|x| <b/2$, $|x(\ve, \eta)| \lesssim |a(\ve)|+ \gamma(\ve)$. In addition, $
 \lim_{\eta \searrow 0}x(\ve, \eta)=x(\ve)$. 
It follows from condition (iii) that $I(x, \ve, \eta) <0$.
Notice also that on $D_b$
\begin{equation}\label{<I<}
0<-I(x, \ve, \eta)\lesssim \eta +\gamma(\ve).
\end{equation}
 Now fix $\ve$ sufficiently small. Due to \eqref{<I<} one can find $C>0$ such that for $\eta$ sufficiently small
\begin{equation}\label{interval}
J_{\ve,\eta}=\bigl[x(\ve, \eta) - C \frac{\Gamma (\ve, \eta)}{\gamma(\ve)},\; x(\ve, \eta) + C \frac{\Gamma (\ve, \eta)}{\gamma(\ve)}\bigr]
\subset (-b/2,b/2)
\end{equation}
where
\begin{equation}\label{Gammaveeta}
 \Gamma (\ve, \eta):=-I(x(\ve, \eta), \ve, \eta).
\end{equation}
Define
\begin{equation}\label{Lveeta}
L(x, \ve, \eta):=-(x- x(\ve, \eta))-i\Gamma (\ve, \eta).
\end{equation}
The technical core of the proof is to show that uniformly in $t\in \bR$ and $\eta \searrow 0$ we have
\begin{equation}\label{mainest}
 \Bigl|\int_{J_{\ve,\eta}}e^{-ixt}\Bigl(\frac{1}{ F(x+i\eta, \ve)}-\frac{1}{L(x, \ve, \eta )}\Bigr)dx \Bigr|\lesssim \gamma(\ve). 
\end{equation}
Let us prove this. By construction we have $L(x(\ve, \eta), \ve, \eta)= F(x(\ve, \eta)+i\eta, \ve)$, thus:
\begin{align}\label{L-F}
L(x, \ve, \eta)-F(x+i\eta, \ve)&=\int_{x(\ve, \eta)}^x\frac{d}{du}( L(u, \ve, \eta)-F(u+i\eta, \ve))du \notag\\
&=(x-x(\ve, \eta))\frac{d}{du}( L(u, \ve, \eta)-F(u+i\eta, \ve))|_{u=x(\ve, \eta)}\notag\\
&\quad+\int_{x(\ve, \eta)}^x\Bigl\{\frac{d}{du}( L(u, \ve, \eta)-F(u+i\eta, \ve))\notag\\
&\qquad-\frac{d}{du}( L(u, \ve, \eta)-F(u+i\eta, \ve))|_{u=x(\ve, \eta)}\Bigr \}du.
\end{align}
From \eqref{Fz} and \eqref{Lveeta} we get
\begin{equation}\label{L-F2}
 \Bigl|\frac{d}{du}( L(u, \ve, \eta)-F(u+i\eta, \ve))|_{u=x(\ve, \eta)}\Bigr| \lesssim \gamma(\ve)
\end{equation}
and  from\eqref{Fz}, \eqref{Lveeta} and (v) we then get
\begin{align}\label{L-F3}
\Bigl|\int_{x(\ve, \eta)}^x\Bigl\{\frac{d}{du}( L(u, \ve, \eta)&-F(u+i\eta, \ve))\notag\\
&-\frac{d}{du}( L(u, \ve, \eta)-F(u+i\eta, \ve))|_{u=x(\ve, \eta)}\Bigr\}du \Bigr|\notag\\ 
 &\qquad
 \lesssim \gamma(\ve)\int_{x(\ve, \eta)}^x\omega(|u-x(\ve, \eta)|)du
 \notag\\
 &\qquad \lesssim \gamma(\ve) |x-x(\ve, \eta)|\omega(|x-x(\ve, \eta)|).
\end{align}
Now we write
\begin{align}\label{L/F}
 \frac{1}{F(x+i\eta, \ve)}-\frac{1}{ L(x, \ve, \eta)}&=\frac{ L(x, \ve, \eta)-F(x+i\eta, \ve)}{ L(x, \ve, \eta)^2}\notag\\
 &\quad+
\frac{\bigl(L(x, \ve, \eta)-F(x+i\eta, \ve)\bigr)^2}{ L(x, \ve, \eta)^2F(x+i\eta, \ve)},
\end{align}
and estimate the terms in the r.h.s. of \eqref{L/F}. From the first equality in \eqref{L-F}, \eqref{Lveeta}, \eqref{L-F2}
and the fact that
\begin{equation}\label{F>}
 |x-x(\ve, \eta)| \lesssim |R(x, \ve, \eta)| \leq |F(x+i\eta, \ve)|,
\end{equation}
we have
\begin{equation*}
 \Bigl|\frac{( L(x, \ve, \eta)-F(x+i\eta, \ve))^2}{ L(x, \ve, \eta)^2F(x+i\eta, \ve)}\Bigr| \lesssim 
\gamma(\ve)^2\frac{|x-x(\ve, \eta)|}{|x-x(\ve, \eta)|^2+\Gamma (\ve, \eta)^2}
\end{equation*}
and then from \eqref{interval} (remember that $\lim_{\ve \searrow 0}\gamma(\ve)=0$):
\begin{align}\label{Con1}
 \Bigl|\int_{J_{\ve,\eta}}&\frac{( L(x, \ve, \eta)-F(x+i\eta, \ve))^2}{ L(x, \ve, \eta)^2F(x+i\eta, \ve)}dx\Bigr| \notag\\
 &\lesssim
\gamma(\ve)^2\int_0^{C\frac{\Gamma (\ve, \eta)}{\gamma(\ve)}}\frac{y}{y^2+\Gamma (\ve, \eta)^2}dy 
\lesssim \gamma(\ve)^2 \ln(1+\gamma(\ve)^{-2}).
\end{align}
We now come to the first term in the r.h.s. of \eqref{L/F}. We claim that
\begin{equation}\label{est1}
\Bigl| \int_{J_{\ve,\eta}}e^{-ixt}\frac{x(\ve, \eta)-x}{ L(x, \ve, \eta)^2}dx \Bigr| \lesssim 1.
\end{equation}
Indeed, write
\begin{equation}\label{x/L}
 \frac{x(\ve, \eta)-x}{ L(x, \ve, \eta)^2}=\frac{1}{ L(x, \ve, \eta)} +\frac{i\Gamma (\ve, \eta)}{ L(x, \ve, \eta)^2}.
\end{equation}
Using
$ \frac{1}{\pi}\int_{\bR}\frac{\Gamma (\ve, \eta)}{x^2 +\Gamma (\ve, \eta)^2}dx =1$ one has
\begin{equation}\label{est2}
 \Bigl| \int_{J_{\ve,\eta}}e^{-ixt}\frac{x(\ve, \eta)-x}{ L(x, \ve, \eta)^2}dx \Bigr| \leq 
\Bigl| \int_{J_{\ve,\eta}}e^{-ixt}\frac{1}{L(x, \ve, \eta)}dx \Bigr|
+\pi.
\end{equation}
Further (see \eqref{interval})
\begin{equation}\label{est3}
\Bigl| \int_{J_{\ve,\eta}}e^{-ixt}\frac{1}{L(x, \ve, \eta)}dx\Bigr|=\Bigl |\int_{-C/\gamma(\ve)}^{ C/\gamma(\ve)}\frac{e^{-it\Gamma (\ve, \eta)u}}{u+i}
du \Bigr| \lesssim 1,
\end{equation}
where the last inequality is obtained by closing the contour in the upper complex plane with a semicircle. Putting together \eqref{x/L}, \eqref{est2},
\eqref{est3} one obtains \eqref{est1}.

From \eqref{est1} and \eqref{L-F2}:
\begin{align}\label{Con2}
\Bigl| \int_{J_{\ve,\eta}}e^{-ixt} \frac{(x-x(\ve, \eta))[\frac{d}{dx}( L(x, \ve, \eta)-F(x+i\eta, \ve))]_{x=x(\ve, \eta)}}
{ L(x, \ve, \eta)^2}&dx \Bigr|\notag\\
&\lesssim \gamma(\ve)
\end{align}
while from \eqref{L-F3}:
\begin{align}\label{Con3}
\Bigl| \int_{J_{\ve,\eta}}e^{-ixt}\frac{1}{ L(x, \ve, \eta)^2} \int_{x(\ve, \eta)}^x\Bigl\{&\frac{d}{du}\bigl( L(u, \ve, \eta)-F(u+i\eta, \ve)\bigr)\notag\\
&-\frac{d}{du}
\bigl( L(u, \ve, \eta)-F(u+i\eta, \ve)\bigr)\bigr|_{u=x(\ve, \eta)}\Bigr\}du dx \Bigr|\notag\\ 
&\leq \gamma(\ve)\int_{J_{\ve,\eta}}\frac{|x-x(\ve, \eta)|\omega(|x-x(\ve, \eta)|)}{|L(x, \ve, \eta)^2|}dx\notag\\
&\lesssim \gamma(\ve).
\end{align} 
Finally,  gathering together \eqref{L/F},\eqref{Con1}, \eqref{L-F}, \eqref{Con2} and \eqref{Con3} one obtains \eqref{mainest}.

By taking the complex conjugate in \eqref{mainest} and replacing $t$ with $-t$ one also obtains
\begin{equation}\label{mainestcc}
 \Bigl|\int_{J_{\ve,\eta}}e^{-ixt}(\frac{1}{ \overline{F(x+i\eta, \ve)}}-\frac{1}{\overline{L(x, \ve, \eta)}})dx \Bigr|\lesssim \gamma(\ve). 
 \end{equation} 
We now claim that if $t\geq 0$ we have
\begin{equation}\label{Lint}
 \Bigl|\frac{1}{\pi}\int_{J_{\ve,\eta}}e^{-ixt}\frac{\Gamma (\ve, \eta)}{(x-x(\ve, \eta))^2 +\Gamma (\ve, \eta)^2}dx-e^{-it(x(\ve, \eta)-i\Gamma (\ve, \eta))}\Bigr|
 \lesssim \gamma(\ve).
 \end{equation}
 Indeed, by direct computation we get
\begin{align}\label{CLint}
 \Bigl|\Bigl(\int_{\bR}-\int_{J_{\ve,\eta}}\Bigr)e^{-ixt}&\frac{\Gamma (\ve, \eta)}{(x-x(\ve, \eta))^2 +\Gamma (\ve, \eta)^2}dx\Bigr|\notag\\
 &\leq 2\int_{C\frac{\Gamma (\ve, \eta)}{\gamma(\ve)}}^\infty 
 \frac{\Gamma (\ve, \eta)}{x^2 +\Gamma (\ve, \eta)^2}dx \lesssim \gamma(\ve).
 \end{align}
 On the other hand, by exact integration, we get
\begin{equation*}
 \frac{1}{\pi}\int_{\bR}e^{-ixt}\frac{\Gamma (\ve, \eta)}{(x-x(\ve, \eta))^2 +\Gamma (\ve, \eta)^2}dx=e^{-it(x(\ve, \eta)-i\Gamma (\ve, \eta))},
 \end{equation*}
which together with \eqref{CLint} gives \eqref{Lint}.

From \eqref{mainest}, \eqref{mainestcc}, and \eqref{CLint} one obtains for all $t\geq0$
\begin{multline}\label{mainest1}
\Bigl|\frac{1}{2\pi i}\int_{J_{\ve \eta}}e^{-ixt}\Bigl(\frac{1}{F(x+i\eta, \ve)} \\-\frac{1}{ \overline{F(x+i\eta, \ve)}}\Bigr)dx -
e^{-it(x(\ve, \eta)-i\Gamma (\ve, \eta))}
\Bigr|\lesssim \gamma(\ve). 
 \end{multline}

 We now finish the proof of Lemma~\ref{Techest} by using a trick going back to Hunziker \cite{Hu}.  Write
 \begin{align*}
I(\ve, \eta, t)&=\frac{1}{\pi} \int_{\bR}e^{-itx}\frac{\im F(x +i\eta, \ve)}{|F(x +i\eta, \ve)|^2}dx\\ 
&=\frac{1}{\pi}\Bigl(\int_{J_{\ve,\eta}}+\int_{\bR \setminus J_{\ve,\eta}}\Bigr )e^{-itx}\frac{\im F(x +i\eta, \ve)}{|F(x +i\eta, \ve)|^2}dx\\
&=:I_1(\ve, \eta, t)+I_2(\ve, \eta, t).
\end{align*}
By assumption (iv) we have that
\begin{equation}\label{ass}
\lim_{\eta \searrow 0} I(\ve, \eta, 0)=1,
\end{equation}
while from \eqref{mainest1}, uniformly in $\eta \searrow 0$,
\begin{equation}
 |I_1(\ve, \eta, 0)-1| \lesssim \gamma(\ve).
\end{equation}
It follows that, uniformly in $\eta \searrow 0$
\begin{align}
|I_2(\ve, \eta, 0)|&\leq |I(\ve, \eta, 0)-1|+ |I_1(\ve, \eta, 0)-1| \notag\\
 &\lesssim
 |I(\ve, \eta, 0)-1|+\gamma(\ve),
\end{align}
and then from \eqref{ass} and the fact that $|I_2(\ve, \eta, t)|\leq |I_2(\ve, \eta, 0)|$:
\begin{equation}\label{I2t}
 \limsup_{\eta \searrow 0}|I_2(\ve, \eta, t)| \lesssim \gamma(\ve).
\end{equation}
Finally, using \eqref{I2t} and the fact that 
$$\lim_{\eta \searrow 0}\bigl(x(\ve, \eta)-i\Gamma (\ve, \eta)\bigr)=E_{\ve},$$
we obtain for an arbitrary $t\geq 0$
\begin{align*}
\limsup_{\eta \searrow 0}|I(\ve, \eta, t)-e^{-itE_{\ve}}| &\leq \limsup_{\eta \searrow 0}|I_1(\ve, \eta, t)-e^{-it
(x(\ve, \eta)-i\Gamma (\ve, \eta)) }|\\
&\quad+\limsup_{\eta \searrow 0}|I_2(\ve, \eta, t)| \\
&\lesssim \gamma(\ve),
\end{align*}
and the proof is finished.

\section{Application to two-channel Schr\"{o}dinger operators}\label{section4}
We apply the abstract theory developed in the previous sections to a certain class of   two-channel Schr\"{o}dinger operators in arbitrary dimensions, 
as they appear for example in the theory of Feshbach resonances in atomic physics; see e.g. \cite{NSJ,KGJ} and references given there. The model
has been considered also in \cite{JN2} and \cite{DJN1} in connection with FGR at thresholds. We follow the setting and notation in Section 5 of 
\cite{JN2}.
 
Our two-channel Schr\"{o}dinger operator has a non-degenerate bound state in
the `closed' channel, whose Hilbert space is modelled with $\mathbf{C}$. The total Hilbert space is $\mathcal{H}=L^{2}(\mathbf{R}^{d}) \oplus \mathbf{C}$. As the unperturbed Hamiltonian we take
\begin{equation}\label{H2}
    H=\begin{bmatrix}
  -\Delta +V -E_0& 0 \\
  0 & 0
\end{bmatrix},\quad  E_0 >0,
\end{equation}
where $V$ satisfies for some $\gamma >0 $ to be specified later
\begin{equation}\label{V}
    \jap{\cdot}^{\gamma}V \in L^{\infty}(\bR^{d}).
\end{equation}
Here $\jap{x}=(1+x^2)^{1/2}$ as usual.
The perturbation is
\begin{equation}\label{Wtc}
W=\begin{bmatrix}
  W_{11} & \ket{W_{12}}\bra{1} \\
  \ket{1}\bra{W_{12}} & b
\end{bmatrix},
\end{equation}
which is a shorthand for
\begin{equation}\label{W1}
W\begin{bmatrix}
  f(x) \\
  \xi 
\end{bmatrix}
=
\begin{bmatrix}
  W_{11}(x)f(x)+W_{12}(x)\xi \\[5pt]
  \int_{\bR^d}\overline{W_{12}(x)}f(x)dx+b\xi 
\end{bmatrix}.
\end{equation}
Here we assume
\begin{equation}\label{2W}
\jap{\cdot}^{\gamma}W_{11}
\in L^{\infty}(\mathbf{R}^{d}), 
\quad 
\jap{\cdot}^{\gamma/2}W_{12}
\in L^{\infty}(\mathbf{R}^{d}), 
\end{equation}
and furthermore that $W_{11}$ is  real-valued and $b \in \bR$ .
 We introduce the weight function
\begin{equation}\label{weight}
\rg=\jap{\cdot}^{-\gamma/2}
\end{equation}
the weight operator
\begin{equation}\label{B}
    B=\begin{bmatrix}
  \rmg & 0 \\
  0 & 1
\end{bmatrix}
,
\end{equation}
and define the bounded self-adjoint operator, $C$,  and its polar decomposition with $D=D^* =D^{-1}$
\begin{equation}\label{C}
C=BWB=|C|^{1/2}D|C|^{1/2}.
\end{equation}
In the factorization (see \eqref{fact}) of $W$ we take $\cK=\cH$ and
\begin{equation}\label{A2}
A=|C|^{1/2}B^{-1},
\end{equation}
 so that
\begin{equation}\label{fW}
    W=B^{-1}|C|^{1/2}D|C|^{1/2}B^{-1}.
\end{equation}
Notice that in our case
 \begin{equation}\label{PQ00}
    P_{0}=\begin{bmatrix}
  0 & 0 \\
  0 & 1 
\end{bmatrix},\quad 
 Q_{0}= 1-P_0=\begin{bmatrix}
  1 & 0 \\
  0 & 0 
\end{bmatrix}
,
\end{equation}
i.e. $Q_0$ is the orthogonal projection onto $L^{2}(\mathbf{R}^{d})$. The key point of the above factorization is that
\begin{equation}\label{com}
 [B,Q_0]=0,
\end{equation}
so that  (see \eqref{w}):
\begin{align}\label{wtc}
G(z)&=AQ_{0}(H-z)^{-1}Q_{0}A^{*}\notag\\
&=|C|^{1/2}\begin{bmatrix}
\rg(-\Delta +V-E_0-z)^{-1}\rg & 0\\
0&0 
\end{bmatrix}
 |C|^{1/2}.                                  
\end{align}

Concerning Assumption \ref{SM}, notice that in our case
\begin{equation}
 Q(J_a)=
 \begin{bmatrix}
  E(J_a) & 0 \\
  0 & 0 
\end{bmatrix}
,
\end{equation}
where $E(\Delta)$ is the spectral measure of $-\Delta +V$ in $L^{2}(\mathbf{R}^{d})$. Thus the verification of Assumption \ref{SM} boils down to the verification of the corresponding assumption for
$-\Delta +V-E_0$ in $L^{2}(\mathbf{R}^{d})$ with $A^*$ replaced by $\jap{\cdot}^{-\gamma/2}$ and $\Gamma(\lambda)$ replaced by 
the corresponding operator for $-\Delta +V$ in $L^{2}(\mathbf{R}^{d})$.

We summarize the above discussion in the following lemma.
\begin{lemma}\label{Asstc}
\begin{itemize}
\item[\rm(i)] Assumption \ref{smG} is implied by
 \begin{equation}
 \rg(-\Delta +V-E_0-z)^{-1}\rg \in C^{1,\omega}(D_a;\cB(L^2(\bR^d))) 
 \end{equation}
and $\int_0\frac{\omega(x)}{x}dx <\infty$.

\item[\rm(ii)] Assumption  \ref{SM} is implied by
$$ \Gamma(\cdot) \jap{\cdot}^{-\gamma/2}\in 
C^{2,\omega_{\theta}}(J_a;\cB(L^2(\bS^{d-1})))$$
and 
 for some $\theta \in (0,1)$. Here $\bS^{d-1}$ is the unit sphere in $\bR^d$ with the induced measure 
 and $\Gamma(\cdot)$ is the trace operator
of $-\Delta +V$ corresponding to $J_a$.
\end{itemize}
\end{lemma}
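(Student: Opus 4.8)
The plan is to reduce both parts to the explicit two-channel formulas already at hand --- \eqref{wtc} for $G(z)$, \eqref{A2} for $A$, and the block form \eqref{PQ00} of $P_0,Q_0$ --- using two elementary observations. First, each space $C^{1,\omega}(D_a;\cB)$ (resp.\ $C^{2,\omega_\theta}(J_a;\cB)$) is a Banach space on which left or right multiplication by a fixed bounded operator, and the canonical isometric block inclusion $T\mapsto\left[\begin{smallmatrix}T&0\\0&0\end{smallmatrix}\right]$ induced by $\cH=L^2(\bR^d)\oplus\bC$, act as bounded linear maps, hence preserve membership. Second, $B^{-1}=\operatorname{diag}(\rg,1)$ is bounded and commutes with $Q_0$ (and with the spectral projections of $H$) by \eqref{com}. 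Granting these, the one-body smoothness transports to the two-channel objects essentially verbatim, and what remains is bookkeeping.

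For (i), I would note that by \eqref{wtc} the operator $G(z)$ is obtained from $\rg(-\Delta+V-E_0-z)^{-1}\rg\in\cB(L^2(\bR^d))$ by first applying the isometric block inclusion into $\cB(\cH)$ and then multiplying on both sides by the fixed bounded operator $|C|^{1/2}$ of \eqref{C}. Since these operations are bounded linear on $C^{1,\omega}(D_a;\cdot)$, the hypothesis $\rg(-\Delta+V-E_0-z)^{-1}\rg\in C^{1,\omega}(D_a;\cB(L^2(\bR^d)))$ forces $G(\cdot)\in C^{1,\omega}(D_a;\cB(\cK))$; the second requirement of Assumption~\ref{smG}, $\int_0\frac{\omega(x)}{x}\,dx<\infty$, is assumed outright. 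This part is routine.

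For (ii), the first step is to identify the abstract data $(\cD,\sfh,\Gamma(\cdot))$ for $H$ in terms of those for $-\Delta+V$. Since $\phi(H)=\operatorname{diag}\bigl(\phi(-\Delta+V-E_0),\phi(0)\bigr)$ and $\Psi_0=\left[\begin{smallmatrix}0\\1\end{smallmatrix}\right]$, I would take $\sfh:=L^2(\bS^{d-1})$, $\cD:=\cD^{(1)}\oplus\bC$ with $\cD^{(1)}:=\jap{\cdot}^{-\gamma/2}L^2(\bR^d)$ (dense, since it contains $C_c^\infty(\bR^d)$), and $\Gamma(\lambda)\left[\begin{smallmatrix}f\\\xi\end{smallmatrix}\right]:=\Gamma^{(1)}(\lambda)f$, where $\Gamma^{(1)}$ is the trace operator of $-\Delta+V$ relabelled to $J_a$ via the shift by $E_0$ (well defined on $\cD^{(1)}$ precisely because of the hypothesis). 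Then \eqref{Gammaop'} for $H$ follows from the corresponding one-body identity together with the fact that, for $a$ small, $-\Delta+V$ has purely absolutely continuous spectrum near $E_0$ (standard one-body input, cf.\ Appendix~B); in particular $\widetilde\Gamma$ annihilates $P_0\cH=\bC$, so $\Gamma(\lambda)=[\,\Gamma^{(1)}(\lambda)\;\;0\,]$ as a map $L^2(\bR^d)\oplus\bC\to\sfh$. Next, $A^*=(|C|^{1/2}B^{-1})^*=B^{-1}|C|^{1/2}$, whence $A^*\cK\subseteq \rg L^2(\bR^d)\oplus\bC=\cD$, giving the first clause of Assumption~\ref{SM}; and $\Gamma(\lambda)A^*=[\,\Gamma^{(1)}(\lambda)\rg\;\;0\,]\,|C|^{1/2}$, which lies in $C^{2,\omega_\theta}(J_a;\cB(\cK,\sfh))$ by the hypothesis $\Gamma^{(1)}(\cdot)\jap{\cdot}^{-\gamma/2}\in C^{2,\omega_\theta}$ and the same ``block inclusion plus multiply by a fixed bounded operator'' principle as in (i).

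The lemma is therefore genuinely a reduction, and the only points deserving care --- the (mild) obstacles --- are bookkeeping ones: making the identification of $(\cD,\Gamma(\cdot))$ for $H$ precise, correctly absorbing the spectral shift by $E_0$ into the relabelling of $\Gamma^{(1)}$, and invoking the one-body facts (density of $\cD^{(1)}$, absence of eigenvalue and singular continuous spectrum of $-\Delta+V$ near $E_0$, and $C^{2,\omega_\theta}$-dependence of $\Gamma^{(1)}(\lambda)$ on $\cD^{(1)}$) supplied by Appendix~B. No analytic difficulty arises beyond the one-body input.
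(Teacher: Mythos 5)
Your proposal is correct and follows essentially the same route the paper takes: the paper's ``proof'' of Lemma~\ref{Asstc} is really the discussion preceding it (formulas \eqref{wtc}, \eqref{A2}, \eqref{PQ00}, \eqref{com}, and the observation that $Q(J_a)$ is block-diagonal), which reduces both assumptions to the corresponding one-body statements by block inclusion and left/right multiplication by the fixed bounded operator $|C|^{1/2}$. You spell out the bookkeeping more explicitly than the paper does --- in particular the identification of $(\cD,\sfh,\Gamma(\cdot))$ for the two-channel $H$, the formula $\Gamma(\lambda)A^*=[\,\Gamma^{(1)}(\lambda)\rg\;\;0\,]\,|C|^{1/2}$, and the spectral shift by $E_0$ --- all of which are correct and match what the authors clearly have in mind. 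One small point worth flagging: the target space $\cB(L^2(\bS^{d-1}))$ in part (ii) of the lemma as printed is presumably a typo for $\cB(L^2(\bR^d),L^2(\bS^{d-1}))$, since $\Gamma(\lambda)\jap{\cdot}^{-\gamma/2}$ maps $L^2(\bR^d)$ into $\sfh=L^2(\bS^{d-1})$; your argument reads it the right way.
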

Both these facts are well known in spectral theory of $d$-dimensional Schr\"{o}d\-inger operators with rapidly decaying potential  and for 
the convenience of the reader we shall recall some of these results in Appendix~\ref{appendix2}. In particular, from these results 
one has that  the conclusion of Theorem~\ref{JN06} holds true for $\gamma >3$ and if the FGR constant vanishes  the conclusion of Theorem~\ref{ve4} 
holds true for $\gamma >5$.

\appendix
\section{H\"older continuity for Cauchy integral transform}\label{appendix1}

The Cauchy integral transform preserves $\theta$-H\"older continuity for $\theta \in (0,1)$. Even though the result is known, we prove it here in a form which is appropriate for our needs. The argument is a generalization of the one in \cite[Ch. I, \S 5]{Gak}.

Let $\Phi(\cdot)$ be a complex valued function satisfying
\begin{equation}\label{Phi}
 {\rm supp} \;\Phi \subset (-1,1), \quad\Phi \in C^{n, \omega_{\theta}}(\bR;\bC),\; \theta \in  (0,1).
\end{equation}
There exists a constant $C_{n,\theta}$ such that:
$$|\Phi^{(n)}(x)-\Phi^{(n)}(y)|\leq C_{n,\theta}\; |x-y|^\theta,\quad \forall x,y\in(-1,1). $$
Define 
$$\||\Phi|\|_{n,\theta}:=\max\{\|\Phi\|_\infty, \|\Phi^{(1)}\|_\infty,\dots, \|\Phi^{(n)}\|_\infty, C_{n,\theta}\}.$$
Define for $z=x+i\eta \in \bC \setminus [-1,1]$ the Cauchy transform 
\begin{equation}\label{Psi}
 \Psi (z)= \int_{\bR}\frac{\Phi(x)}{x-z}dx.
\end{equation}
\begin{proposition}\label{G}
The map $\Psi$ is holomorphic on $\bC \setminus [-1,1]$. For every $k=0,1,...,n$  the limits  
$\Psi^{(k)} (x)= \lim_{\eta \searrow 0}\Psi^{(k)} (x+i\eta)$ exist. Moreover, there exists a constant $C$ such that uniformly in $x,y \in (-2,2)$, $\eta>0$ and $0\leq k\leq n$ we have:
\begin{equation}\label{H}
|\Psi^{(k)} (x+i\eta)-\Psi^{(k)} (y+i\eta)| \leq C \;\||\Phi|\|_{n,\theta}\; |x-y|^{\theta}.
\end{equation}
\end{proposition}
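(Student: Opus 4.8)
The plan is: first get holomorphy by differentiating under the integral; then reduce the statement to the scalar Plemelj--Privalov estimate for a single $\theta$-H\"older density; then prove that estimate, the delicate part being a bookkeeping of ``leftover'' terms that are bounded but not of size $\delta^{\theta}$. Holomorphy of $\Psi$ on $\bC\setminus[-1,1]$ is immediate: for $z$ in a compact subset of $\bC\setminus[-1,1]$ the kernel $\Phi(x)/(x-z)$ and all its $z$-derivatives are bounded uniformly in $x\in\operatorname{supp}\Phi$, so one may differentiate under the integral, obtaining $\Psi^{(k)}(z)=k!\int_{\bR}\Phi(x)(x-z)^{-(k+1)}\,dx$. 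The decisive observation is that $\operatorname{supp}\Phi$ is a \emph{compact} subset of the \emph{open} interval $(-1,1)$, so $\Phi$ and all its derivatives vanish near $\pm1$; using $(x-z)^{-(k+1)}=\tfrac{(-1)^k}{k!}\partial_x^{\,k}(x-z)^{-1}$ and integrating by parts $k$ times (no boundary terms) one gets
\begin{equation}
\Psi^{(k)}(z)=\int_{\operatorname{supp}\Phi}\frac{\Phi^{(k)}(x)}{x-z}\,dx,\qquad 0\le k\le n.
\end{equation}
For $k<n$ the density $\Phi^{(k)}$ is $C^{1}$ and compactly supported, hence Lipschitz and a fortiori $\theta$-H\"older, while for $k=n$ it is $\theta$-H\"older by hypothesis; in every case its H\"older seminorm is $\lesssim\||\Phi|\|_{n,\theta}$. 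Hence it suffices to treat $n=0$, and this is where the restriction $\theta<1$ enters.

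So let $g$ be $\theta$-H\"older with $\operatorname{supp}g=[a,b]\subset(-1,1)$ (in particular $g(a)=g(b)=0$) and set $G(z)=\int_{a}^{b}g(x)(x-z)^{-1}\,dx$. For $z=x_{0}+i\eta$ write $G(z)=\int_{a}^{b}\frac{g(x)-g(x_{0})}{x-z}\,dx+g(x_{0})\int_{a}^{b}\frac{dx}{x-z}$; the first integrand is dominated by $\||g|\|_{0,\theta}|x-x_{0}|^{\theta-1}$, integrable on $[a,b]$, so by dominated convergence its boundary value exists and is continuous in $x_{0}$, and the second term is an explicit difference of logarithms -- this gives existence of $\Psi^{(k)}(x):=\lim_{\eta\searrow0}\Psi^{(k)}(x+i\eta)$. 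For the bound \eqref{H} one may assume $\delta:=|x_{1}-x_{2}|$ small (for $\delta$ bounded below, $\|G\|_{\infty}\lesssim\||g|\|_{0,\theta}$ follows from the same splitting, $g(a)=0$ taming the logarithm near the endpoints). I would then split $[a,b]$ into a near part $\{\,|x-x_{1}|<2\delta\,\}$ and a far part and estimate, on each,
\begin{equation}
G(x_{1}+i\eta)-G(x_{2}+i\eta)=\int_{a}^{b}g(x)\Bigl(\frac{1}{x-x_{1}-i\eta}-\frac{1}{x-x_{2}-i\eta}\Bigr)dx .
\end{equation}
On the near part one subtracts $g(x_{1})$ from the $\tfrac{1}{x-x_{1}-i\eta}$ piece and $g(x_{2})$ from the $\tfrac{1}{x-x_{2}-i\eta}$ piece, so each singular factor $|x-x_{j}|^{-1}$ is paired with $|x-x_{j}|^{\theta}$: this yields integrable $|x-x_{j}|^{\theta-1}$ and a contribution $\lesssim\||g|\|_{0,\theta}\delta^{\theta}$, plus two bounded leftover terms of the form $g(x_{j})\int_{|x-x_{1}|<2\delta}(x-x_{j}-i\eta)^{-1}\,dx$, $j=1,2$. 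On the far part one uses $\frac{1}{x-x_{1}-i\eta}-\frac{1}{x-x_{2}-i\eta}=\frac{x_{1}-x_{2}}{(x-x_{1}-i\eta)(x-x_{2}-i\eta)}$, subtracts $g(x_{1})$, and bounds the main piece by $\delta\int_{|x-x_{1}|\ge2\delta}|x-x_{1}|^{\theta-2}\,dx\lesssim\delta^{\theta}$, again leaving a bounded leftover $g(x_{1})(x_{1}-x_{2})\int_{|x-x_{1}|\ge2\delta}(x-x_{1}-i\eta)^{-1}(x-x_{2}-i\eta)^{-1}\,dx$.

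The step I expect to be the genuine obstacle is controlling these leftovers, since each is only $O(\|g\|_{\infty})$, not $O(\delta^{\theta})$. Evaluating the elementary integrals $\int(x-x_{j}-i\eta)^{-1}\,dx$ and $\int(x-x_{1}-i\eta)^{-1}(x-x_{2}-i\eta)^{-1}\,dx$ over the relevant intervals \emph{exactly} (partial fractions; the $\pm i\pi$'s and the $\log 2\delta$'s cancel telescopically) one finds that the three leftovers combine to $\bigl(g(x_{1})-g(x_{2})\bigr)$ times a bounded quantity, plus a term of the form $g(x_{1})\cdot O\bigl(\delta/\operatorname{dist}(x_{1},\{a,b\})\bigr)$. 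The first is $O(\||g|\|_{0,\theta}\delta^{\theta})$; so is the second, because $g$ vanishes at $a,b$ and hence $|g(x_{1})|\le\||g|\|_{0,\theta}\operatorname{dist}(x_{1},\{a,b\})^{\theta}$, whence $|g(x_{1})|\,\delta/\operatorname{dist}(x_{1},\{a,b\})\le\||g|\|_{0,\theta}\,\delta\operatorname{dist}(x_{1},\{a,b\})^{\theta-1}\lesssim\||g|\|_{0,\theta}\delta^{\theta}$ when $\operatorname{dist}(x_{1},\{a,b\})\gtrsim\delta$, while when $x_{1},x_{2}$ lie within $O(\delta)$ of an endpoint one simply uses $|g(x_{j})|\lesssim\||g|\|_{0,\theta}\delta^{\theta}$ on every term. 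The same vanishing at $a,b$ also handles the configurations in which $x_{0}$ is near or outside an endpoint: e.g.\ for $x_{0}$ just outside $[a,b]$,
\begin{equation}
G(x_{0}+i0)-G(a+i0)=(x_{0}-a)\int_{a}^{b}\frac{g(x)\,dx}{(x-x_{0})(x-a)},\qquad \Bigl|\int_{a}^{b}\frac{g(x)\,dx}{(x-x_{0})(x-a)}\Bigr|\lesssim\||g|\|_{0,\theta}\,|x_{0}-a|^{\theta-1},
\end{equation}
the last bound obtained by splitting the integral at $x-a=|x_{0}-a|$, so the sharp power $\delta^{\theta}$ survives. Keeping track of all constants yields \eqref{H} with $C=C(n,\theta)$. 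In short, the whole point is to arrange the subtractions and the near/far split so that the logarithmic singularity of the finite Hilbert transform cancels and never degrades the H\"older exponent $\theta$.
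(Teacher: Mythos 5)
Your proposal is correct, and structurally close to the paper's argument — same reduction to $k=0$ by integrating by parts (using that $\Phi$ and its derivatives vanish near $\pm1$), and same near/far decomposition together with the subtraction of $g(x_j)$ from the numerator to tame the $|x-x_j|^{-1}$ singularity — but you arrange the steps in a different order, and this costs you some nontrivial bookkeeping that the paper avoids. The paper performs the subtraction \emph{once, globally}, replacing $\Psi(x+i\eta)$ by $\widetilde\Psi(x+i\eta)=\int_{-11}^{11}\frac{\Phi(\tau)-\Phi(x)}{\tau-x-i\eta}\,d\tau$; the correction term $\Phi(x)\ln\frac{11-x-i\eta}{-11-x-i\eta}$ is a H\"older function times a function smooth on $(-2,2)$, hence obviously $\theta$-H\"older, and the logarithmic ``leftovers'' you must track simply never appear when one carries out the near/far split on $\widetilde\Psi$. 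Your route — split first, subtract per region, then show that the three leftover log terms telescope to $(g(x_1)-g(x_2))\cdot O(1)+g(x_1)\cdot O(\delta/\mathrm{dist}(x_1,\{a,b\}))$ — is also valid, and you correctly identify that it is the vanishing of $g$ at the endpoints which rescues the second term via $|g(x_1)|\lesssim\mathrm{dist}(x_1,\{a,b\})^\theta$. This is a genuinely extra use of the compact-support hypothesis: in the paper's version the support condition is only needed to keep $\operatorname{supp}\Phi$ away from $\pm11$ so that the log correction is smooth on the relevant range; you additionally need $g(a)=g(b)=0$ to kill the endpoint singularity of the finite Hilbert transform in your leftover terms. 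So the paper buys a cleaner argument by front-loading the subtraction; your version is more elementary in flavor but requires exact evaluation of the log integrals and the explicit cancellation you describe, which is the delicate step you yourself flag. Both give the stated uniform constant $C=C(n,\theta)$ with the right dependence $\||\Phi|\|_{n,\theta}$.
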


\begin{proof}
A finite number of constants appearing during the proof will be denoted by $C$.
If $z\notin [-1,1]$ we have $\Psi^{(k)}(z)=\int_{\bR}\frac{\Phi(x)}{(x-z)^{k+1}}dx$; integrating by parts, we can write 
$\Psi^{(k)}(z)=\int_{\bR}\frac{\Phi^{k}(x)}{x-z}dx$, thus it is sufficient to prove the proposition for $n=0$.
The argument  for the existence of limit values is the standard principal value argument and it will not be repeated here.
The argument for H\"older continuity is more elaborated. Consider
\begin{align*}
 \widetilde{\Psi}(x+i\eta)&=\int_{-11}^{11}\frac {\Phi(\tau)-\Phi(x)}{\tau-x-i\eta}d\tau\\
 &=
 \Psi(x+i\eta)-\Phi(x)\ln \frac{11-x-i\eta}{-11-x-i\eta}.
\end{align*}
Since the second term in the r.h.s. satisfies \eqref{H}, it is sufficient to consider $\widetilde{\Psi}(x+i\eta)$. In what follows, 
$x,y \in (-2,2)$ and $ \eta \in (0,1)$. 

Denote by $L:=[-11,11]$. For a given pair $x_1<x_2$  in $(-2,2)$,  we define 
$$l := (x_1-2|x_1-x_2 |,x_2+2|x_1-x_2 |)
=:(a,b)\subset [-10,10]\subset L.$$

For  $z_j =x_j+i\eta$, $j=1,2$ we have to estimate $\widetilde{\Psi}(z_2)-\widetilde{\Psi}(z_1)$. We write
\begin{align}\label{Psi12}
\widetilde{\Psi}(z_2)-\widetilde{\Psi}(z_1)&= \int_l \frac {\Phi(\tau)-\Phi(x_2)}{\tau-z_2}d\tau -
\int_l \frac {\Phi(\tau)-\Phi(x_1)}{\tau-z_1}d\tau\notag \\ 
&\quad+\int_{L\setminus l} \Bigl(\frac {\Phi(\tau)-\Phi(x_2)}{\tau-z_2}-\frac {\Phi(\tau)-\Phi(x_1)}{\tau-z_1}\Bigr)d\tau.
\end{align}
The first two integrals are easily estimated
\begin{align}\label{int1}
 \Bigl| \int_l \frac {\Phi(\tau)-\Phi(x_2)}{\tau-z_2}d\tau \Bigr| & \leq C\int_l \frac{|\tau-x_2|^{\theta}}{|\tau-x_2|}d\tau\notag\\
 & \leq C\int_0^{3|x_1-x_2|}u^{\theta -1}d\tau = C\frac{|x_1-x_2|^{\theta}}{\theta},
\end{align}
and similarly for the second one. In the third integral of \eqref{Psi12} one uses the following identity
\begin{align}\label{idphi}
\frac{\Phi(\tau)-\Phi(x_2)}{\tau-z_2}&-\frac {\Phi(\tau)-\Phi(x_1)}{\tau-z_1}\notag\\
&=\frac {\Phi(x_1)-\Phi(x_2)}{\tau-z_1}+ \frac {(\Phi(\tau)-\Phi(x_2))(z_2-z_1)}{(\tau-z_1)(\tau-z_2)}.
\end{align}
For the integral involving the first term in the r.h.s. of \eqref{idphi} we observe that $|a-z_1|=|b-z_1|$  and
\begin{align*}
 \int_{L\setminus l}\frac{1}{\tau-z_1}d\tau= \ln\Bigl |\frac{11-z_1}{11+z_1}\Bigr|+ i\arg\Bigl (\frac{11-z_1}{z_1-b}\frac{a-z_1}{11+z_1}\Bigr),
\end{align*}
which is uniformly bounded for $x_1 \in (-2,2)$ and $ \eta \in (0,1)$, hence
\begin{equation}\label{int2}
  \Bigl|\int_{L\setminus l}\frac{\Phi(x_1)-\Phi(x_2)}{\tau-z_1}d\tau \Bigr| \leq C\; |x_1-x_2|^{\theta}.
\end{equation}
We are left  with estimating the integral involving the second term in the r.h.s. of \eqref{idphi}:
\begin{align}\label{int3}
 \Bigl|\int_{L\setminus l}&\frac{(\Phi(\tau)-\Phi(x_2))(z_2-z_1)}%
 {(\tau-z_1)(\tau-z_2)}d\tau \Bigr|\notag\\
 &\leq
 C|x_1-x_2|\int_{L\setminus l}\frac{1}{|\tau-x_1\|\tau-x_2|^{1-\theta}}
 d\tau \notag\\ 
 &=C |x_1-x_2|\int_{L\setminus l}\Bigl|\frac{\tau-x_1}{\tau-x_2}\Bigr |^{1-\theta}\frac{1}{|\tau-x_1|^{2-\theta}}d\tau.
\end{align}

Because $\frac{\tau-x_1}{\tau-x_2}$ is piecewise monotone as a function of $\tau$, we have
$$
\sup_{x_1 \in (-2,2), \tau \in L\setminus l}\Bigl |\frac{\tau-x_1}{\tau-x_2}\Bigr | < \infty,
$$
where the maximum is attained in the set $ \{-11, 11, a, b\}$.
Using this in \eqref{int3} one obtains
\begin{align}\label{int4}
|x_1-x_2|\int_{L\setminus l}
&\frac{|\tau-x_1|}{|\tau-x_2|}^{1-\theta}\frac{1}
{|\tau-x_1|^{2-\theta}}d\tau \notag\\
 &\leq C |x_1-x_2|
 \int_{2|x_1-x_2|}^{11}u^{\theta -2}d\tau \leq \frac{C}{1- \theta}|x_1-x_2|^{\theta}.
\end{align}
Putting together \eqref{Psi12}, \eqref{int1} - \eqref{int4}, the proof is finished.
\end{proof}

\section{Resolvent smoothness and $\Gamma$ operator for one body $d$-dimensional Schr\"odinger operators}\label{appendix2}

A convenient formalism for stationary scattering theory is given by the 
$\Gamma$ operators or trace operators. These were introduced in \cite{kuroda} in an abstract setting. A presentation of applications to Schr\"{o}dinger operators can be found in \cite{kuroda-aarhus}. Extensive results on stationary scattering theory can be found in \cite{yafaevI,yafaevII}, both in an abstract framework, and applied to a number of differential operators. Trace operators are used in many cases in these monographs.

We now describe the trace operators for Schr\"{o}dinger Hamiltonians $-\Delta+V$ in $L^2(\bR^d)$ for the convenience of the reader.

Let $H_0=-\Delta$ on $\cH=L^2(\bR^d)$, with the domain $\cD(H_0)=H^2(\bR^d)$, the usual Sobolev space. 

We  need the weighted $L^2$-spaces. We have
\begin{equation}
L^{2,s}(\bR^d)=\{f\,|\, \jap{\cdot}^sf\in L^2(\bR^d)\},\quad s\in\bR.
\end{equation}
Here $\jap{\cdot}$ denotes multiplication by $\jap{x}=(1+x^2)^{1/2}$, $x\in\bR^d$. We use the following convention for the Fourier transform.
\begin{equation}
\cF\colon L^2(\bR^d)\to L^2(\bR^d),\quad 
(\cF f)(\xi)=\hat{f}(\xi)=\frac{1}{(2\pi)^{d/2}}\int e^{-ix\xi}f(x)dx.
\end{equation}
We also use the Fourier transform between other spaces. For example we have 
$\cF(L^{2,s}(\bR^d))=H^s(\bR^d)$, $s\in\bR$, the Sobolev spaces.

We let $J=(0,\infty)$.

\begin{definition}
The free $\Gamma$ operator is defined for $f\in L^{2,s}(\bR^d)$, $s>\frac12$, as
\begin{equation}\label{B.3}
(\Gamma_0(\lambda)f)(\omega)=2^{-1/2}\lambda^{(d-2)/4}(\cF f)(\lambda^{1/2}\omega),\quad \lambda\in J, \; \omega\in\bS^{d-1}.
\end{equation}
\end{definition}
We record some of the properties of $\Gamma_0$. We use the notation $\sh=L^2(\bS^{d-1})$. We also use the notation $\bsh$ for the bounded operators. Furthermore, we use the H\"older space 
\begin{equation}
C^{n,\theta}(J,\bsh), \quad n\in\bN,\; 0<\theta<1,
\end{equation}
and also the local H\"older space $C^{n,\theta}_{\rm loc}(J,\bsh)$, which means that the functions are H\"older continuous on any relatively compact open subinterval of $J$.
\begin{proposition}
For $\lambda\in J$ and $s>\frac12$
we have $\Gamma_0(\lambda)\in\bsh$. If $s=\frac12+n+\theta$, $n\in\bN$, $0<\theta<1$, then $\Gamma_0\in C^{n,\theta}_{\rm loc}(J,\bsh)$. If $s=\frac12+n+1$, then
$\Gamma_0\in C^{n,\theta}_{\rm loc}(J,\bsh)$ for all $0<\theta<1$.
\end{proposition}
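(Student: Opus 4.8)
The plan is to peel off everything smooth and nonvanishing on $J=(0,\infty)$ and reduce to a single fact about restricting Sobolev functions to concentric spheres. Since $\lambda\mapsto\lambda^{1/2}$ is a smooth diffeomorphism of $J$ and $\lambda\mapsto 2^{-1/2}\lambda^{(d-2)/4}$ is smooth and nowhere zero on $J$, composing with the former and multiplying by the latter preserve $C^{n,\theta}_{\rm loc}(J;\,\cdot\,)$, so it suffices to treat the operator-valued map $\mu\mapsto T(\mu)$, $(T(\mu)f)(\omega)=(\cF f)(\mu\omega)$, on each compact subinterval of $J$. As $\cF$ is an isomorphism of $L^{2,s}(\bR^d)$ onto $H^s(\bR^d)$ (recalled above), this is the same as studying the restriction family $U(\mu)g=g(\mu\,\cdot)|_{\bS^{d-1}}$ on $H^s(\bR^d)$. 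Moreover the third assertion follows from the second, since $\jap{\cdot}^{1/2+n+1}\ge\jap{\cdot}^{1/2+n+\theta}$ gives continuous inclusions $L^{2,1/2+n+1}(\bR^d)\hookrightarrow L^{2,1/2+n+\theta}(\bR^d)$ for all $\theta\in(0,1)$. It remains to prove: (A) $U(\mu)\in\cB(H^s(\bR^d),\sh)$ uniformly for $\mu$ in compacts of $J$, when $s>1/2$; and (B) $\mu\mapsto U(\mu)$ is $C^{n,\theta}_{\rm loc}$ into $\cB(H^s(\bR^d),\sh)$ when $s=1/2+n+\theta$, $0<\theta<1$.

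I would reduce (B) to its $n=0$ version by differentiating under the trace: for $k\le n$,
\begin{equation*}
\frac{d^k}{d\mu^k}\bigl(T(\mu)f\bigr)(\omega)=\sum_{|\alpha|=k}\binom{k}{\alpha}\,\omega^\alpha\,(\partial^\alpha\cF f)(\mu\omega),\qquad\partial^\alpha\cF f=\cF\bigl((-ix)^\alpha f\bigr)\in H^{s-k}(\bR^d),
\end{equation*}
the last membership because $|x^\alpha|\le\jap{x}^k$ for $|\alpha|=k$. Since $s-k\ge 1/2+\theta>1/2$ for $k\le n$, each summand is a smooth bounded angular factor times the restriction to the sphere of radius $\mu$ of a \emph{fixed} function in $H^{1/2+\theta}(\bR^d)$. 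Granting the $n=0$ statements — norm continuity in $\mu$ for $H^\sigma$, $\sigma>1/2$, and $\theta$-H\"older dependence for $H^{1/2+\theta}$ — the candidate derivatives above are norm-continuous in $\mu$, and the usual difference-quotient argument for Banach-space-valued maps upgrades this to $T(\cdot)\in C^n$ with these derivatives and with $\theta$-H\"older $n$-th derivative. Thus everything reduces to the \emph{model claim}: for $s>1/2$ the map $g\mapsto\bigl(\mu\mapsto g(\mu\,\cdot)|_{\bS^{d-1}}\bigr)$ is bounded from $H^s(\bR^d)$ into $C^0(J;\sh)$, and is locally $\theta$-H\"older-continuous when $s=1/2+\theta$. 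For this, fix a compact $[\mu_0,\mu_1]\subset J$; multiplying $g$ by a $\chi\in C_c^\infty(\bR^d)$ that is $1$ on $\{\mu_0/2\le|\xi|\le2\mu_1\}$ changes nothing on the relevant spheres and costs only a constant in $H^s$. A finite partition of unity on $\bS^{d-1}$ and, in each chart, the coordinate change $\xi\mapsto(\rho,y)$ with $\rho=|\xi|$ and $y$ a local coordinate on $\bS^{d-1}$ — a diffeomorphism off the annulus boundary that carries $H^s$-functions supported in the annulus to $H^s(\bR_\rho\times\bR^{d-1}_y)$ with equivalent norms, takes $\{|\xi|=\mu\}$ to $\{\rho=\mu\}$, and turns the surface measure into a smooth positive density times $dy$ — reduces the matter to: if $G\in H^\sigma(\bR_\rho\times\bR^{d-1}_y)$, then $\rho\mapsto G(\rho,\cdot)\in L^2(\bR^{d-1}_y)$ is bounded for $\sigma>1/2$ and satisfies $\|G(\rho_1,\cdot)-G(\rho_2,\cdot)\|_{L^2}\lesssim\|G\|_{H^\sigma}|\rho_1-\rho_2|^\theta$ for $\sigma=1/2+\theta$. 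Taking a partial Fourier transform in $y$ (dual variable $\eta$) and letting $\widetilde G$ be the full Fourier transform, the slice $\widehat G(\cdot,\eta)$ obeys $\|\widehat G(\cdot,\eta)\|_{H^\sigma(\bR_\rho)}^2\le m(\eta):=\int_\bR(1+|\zeta|^2+|\eta|^2)^\sigma|\widetilde G(\zeta,\eta)|^2\,d\zeta$ with $\int m(\eta)\,d\eta\asymp\|G\|_{H^\sigma}^2$; the one-dimensional Sobolev embeddings $H^\sigma(\bR)\hookrightarrow L^\infty(\bR)$ for $\sigma>1/2$ and $H^{1/2+\theta}(\bR)\hookrightarrow C^{0,\theta}(\bR)$ for $0<\theta<1$ then give $|\widehat G(\rho,\eta)|\lesssim m(\eta)^{1/2}$ and $|\widehat G(\rho_1,\eta)-\widehat G(\rho_2,\eta)|\lesssim m(\eta)^{1/2}|\rho_1-\rho_2|^\theta$; squaring, integrating in $\eta$, and using Plancherel in $y$ closes the argument.

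The only genuinely non-routine point is the \emph{sharpness} of the H\"older exponent. A soft interpolation between the uniform boundedness at $s>1/2$ and the local Lipschitz bound available at $s>3/2$ would only yield an exponent slightly below $\theta$; to hit $\theta$ on the nose one must isolate the radial direction and invoke the \emph{endpoint} one-dimensional embedding $H^{1/2+\theta}(\bR)\hookrightarrow C^{0,\theta}(\bR)$ directly, as above. Everything else — the smooth reparametrisations, the cutoff and angular partition of unity, the diffeomorphism straightening the spheres, and the promotion from $n=0$ to general $n$ by differentiating under the trace — is bookkeeping.
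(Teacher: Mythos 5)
Your proof is correct and follows exactly the route the paper indicates: pass through $\cF\colon L^{2,s}\to H^{s}$ and invoke Sobolev trace/restriction estimates for concentric spheres. The paper disposes of this in one line by citing the trace theorem (with \cite{adams,yafaevII}); you supply a self-contained proof of the cited facts — the reduction to a flat one-dimensional restriction via a cutoff, an angular partition of unity, and a diffeomorphism straightening the spheres, followed by the partial Fourier transform and the endpoint embeddings $H^{\sigma}(\bR)\hookrightarrow L^\infty$ for $\sigma>1/2$ and $H^{1/2+\theta}(\bR)\hookrightarrow C^{0,\theta}$ — together with the reduction from general $n$ to $n=0$ by differentiating under the trace, and the deduction of the third statement from the second by weight monotonicity.
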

The result follows from the fact that $\cF(L^{2,s}(\bR^d))=H^s(\bR^d)$ and the trace theorem in Sobolev spaces, see for example \cite{adams,yafaevII}.

Using $\Gamma_0$ one then defines the spectral representation of $H_0$ as follows.
\begin{definition}
The spectral representation of $H_0$ is defined for $f\in L^{2,s}(\bR^d)$, $s>\frac12$, by 
\begin{equation}\label{B.5}
(\cF_0f)(\lambda)(\omega)=(\Gamma_0(\lambda)f)(\omega), \quad \lambda\in J,\quad \omega\in\bS^{d-1}.
\end{equation}
\end{definition}
\begin{proposition}
$\cF_0$ extends to a unitary map from $\cH$ to $L^2(J,\sh)$. Furthermore, we have $\cF_0 H_0 =M_{\lambda}\cF_0$, where $M_{\lambda}$ is the operator of multiplication by $\lambda$ in $L^2(J,\sh)$.
\end{proposition}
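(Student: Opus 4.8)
The plan is to realize $\cF_0$ as the composition of the Fourier transform with a ``polar coordinates'' unitary, and then read off both assertions.

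Since the pointwise formulas \eqref{B.3}, \eqref{B.5} defining $\cF_0$ are available a priori only on $L^{2,s}(\bR^d)$, $s>\tfrac12$, I would first work on the dense subspace $\cS(\bR^d)$ (where $\hat f$ is smooth, so the restriction to spheres is unambiguous; the general $L^{2,s}$ case is covered by the Sobolev trace theorem, which is also what underlies the preceding Proposition). On $\cS(\bR^d)$ one has $\cF_0=U\circ\cF$, where $U\colon L^2(\bR^d)\to L^2(J,\sh)$ is the map $(Ug)(\lambda)(\omega)=2^{-1/2}\lambda^{(d-2)/4}g(\lambda^{1/2}\omega)$. The key computation is the change of variables $\xi=\lambda^{1/2}\omega$, i.e.\ $\lambda=|\xi|^2$, $\omega=\xi/|\xi|$, under which $d\xi=\tfrac12\lambda^{(d-2)/2}\,d\lambda\,d\sigma(\omega)$ with $\sigma$ the surface measure on $\bS^{d-1}$; this gives
\begin{equation*}
\int_{\bR^d}|g(\xi)|^2\,d\xi=\int_J\int_{\bS^{d-1}}\bigl|2^{-1/2}\lambda^{(d-2)/4}g(\lambda^{1/2}\omega)\bigr|^2\,d\sigma(\omega)\,d\lambda=\|Ug\|_{L^2(J,\sh)}^2 .
\end{equation*}
Hence $U$ is an isometry on $\cS(\bR^d)$ and extends to an isometry of $L^2(\bR^d)$. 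For surjectivity I would exhibit the inverse directly: given $h\in L^2(J,\sh)$, the function $g(\xi):=2^{1/2}|\xi|^{-(d-2)/2}h(|\xi|^2)(\xi/|\xi|)$ lies in $L^2(\bR^d)$ by the same change of variables and satisfies $Ug=h$. Thus $U$ is unitary, and since $\cF$ is unitary by Plancherel and $\cF_0=U\cF$ on a dense set, $\cF_0$ extends to a unitary $\cH\to L^2(J,\sh)$.

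For the intertwining relation, recall that on $\cD(H_0)=H^2(\bR^d)$ one has $\cF H_0=M_{|\xi|^2}\cF$, where $M_{|\xi|^2}$ denotes multiplication by $|\xi|^2$. Since $|\lambda^{1/2}\omega|^2=\lambda$ for $\omega\in\bS^{d-1}$, one computes $(UM_{|\xi|^2}g)(\lambda)(\omega)=\lambda\,(Ug)(\lambda)(\omega)$, i.e.\ $UM_{|\xi|^2}=M_\lambda U$. Chaining, $\cF_0H_0=U\cF H_0=UM_{|\xi|^2}\cF=M_\lambda U\cF=M_\lambda\cF_0$ on $\cD(H_0)$, and since the unitary $\cF_0$ maps $\cD(H_0)$ onto $\cD(M_\lambda)$, this is precisely the asserted operator identity.

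Both computations are routine, so there is no serious obstacle; the only points deserving a word of care are (i) that $\cF_0$ is defined a priori only on $\bigcup_{s>1/2}L^{2,s}(\bR^d)$, so the isometry property and the factorization $\cF_0=U\cF$ must be proved on a dense subspace and then propagated by continuity, and (ii) that $\xi\mapsto(|\xi|^2,\xi/|\xi|)$ is, away from the Lebesgue-null point $\xi=0$, a smooth diffeomorphism of $\bR^d\setminus\{0\}$ onto $J\times\bS^{d-1}$ with the Jacobian used above, so the change of variables is a genuine measure-space isomorphism.
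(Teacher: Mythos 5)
Your argument is correct and complete: it is the standard polar-coordinates-plus-Plancherel proof of the spectral representation of $-\Delta$, and the Jacobian bookkeeping, the explicit inverse of $U$, and the domain handling for the intertwining relation all check out. The paper itself states this proposition without proof (deferring to the cited references such as Kuroda and Yafaev), and your write-up is essentially the proof those references give, so there is no genuine divergence of approach to report.
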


This result is then the starting point for obtaining $\Gamma$ operators for $H=H_0+V$. We only outline some of the basic results. The definition relies on the boundary values of the resolvent $R(z)=(H-z)^{-1}$.

We will not deal with local singularities of the perturbation $V$, so we use the following assumption. The compact operators between two Hilbert spaces are denoted by $C(\cH,\cK)$.
\begin{assumption}\label{SR}
Assume that $V$ is a bounded self-adjoint operator on $\cH$,
such that for some $\beta>\tfrac12$ it satisfies $V\in C(L^{2,-\beta}(\bR^d),L^{2,\beta}(\bR^d))$.
\end{assumption}

\begin{assumption}\label{LAP}
Assume that the limiting absorption principle holds for $R(z)$ on $J$ with boundary values 
\begin{equation*}
R(\lambda\pm i0)\in\cB(L^{2,s}(\bR^d),L^{2,-s}(\bR^d)), \quad\text{for some $s>\tfrac12$}.
\end{equation*}
Assume that the boundary values are locally H\"older continuous with exponent $\theta$, $0<\theta<s-\tfrac12$ and  $\theta<1$.
\end{assumption}

Note that this formulation excludes positive eigenvalues for $H$.
This assumption can be verified in different manners, for example by using Mourre theory, see \cite{GGM,ABG}. Differentiability of the boundary values and H\"older continuity of the highest derivative of the boundary values can also be verified, provided sufficiently strong assumptions are imposed on $V$.

Let us state a special case of the results in  \cite{GGM,ABG}.
\begin{proposition}
Let $V$ be a real-valued function. Assume that there exist $\beta>\frac12$ and $C>0$ such that
\begin{equation}
\abs{V(x)}\leq C\jap{x}^{-2\beta},\quad x\in\bR^d.
\end{equation}
Let $H=H_0+V$. 
If $\beta>\frac12+n+\theta$, $n\in\bN$, $0<\theta<1$, and $s>\beta$, then the boundary values exist and satisfy 
\begin{equation}
R(\cdot\pm i0)\in
C^{n,\theta}_{\rm loc}(J,\cB(L^{2,s}(\bR^d),L^{2,-s}(\bR^d))).
\end{equation}
\end{proposition}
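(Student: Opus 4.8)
The plan is to derive the statement from the corresponding (classical) regularity of the \emph{free} resolvent $R_0(z)=(H_0-z)^{-1}$ via the symmetrized second resolvent identity, with careful bookkeeping of the weights. First I would record the free-resolvent input: for every $\sigma>\tfrac12+m+\vartheta$ with $m\in\bN$, $0<\vartheta<1$, one has $R_0(\cdot\pm i0)\in C^{m,\vartheta}_{\rm loc}(J,\cB(L^{2,\sigma}(\bR^d),L^{2,-\sigma}(\bR^d)))$. This follows from the $\Gamma_0$-operator formalism together with the Sobolev trace theorem: by Stone's formula the imaginary part of $R_0(\lambda+i0)$ is $\pi\Gamma_0(\lambda)^*\Gamma_0(\lambda)$, whose regularity is exactly that of $\Gamma_0$ recorded above, while the real part is a Cauchy principal value integral of $\Gamma_0(\mu)^*\Gamma_0(\mu)$ against $(\mu-\lambda)^{-1}$, whose local H\"older regularity is handled by the Cauchy-transform estimate of Appendix~\ref{appendix1} (cf.\ also \cite{yafaevII,kuroda-aarhus}). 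Since $s>\beta>\tfrac12+n+\theta$, this applies with $\sigma=s$ and with $\sigma=\beta$.

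Next I would set up the symmetrized resolvent equation. Put $v(x)=\abs{V(x)}^{1/2}$ and $w(x)=\sign(V(x))\abs{V(x)}^{1/2}$, so $V=wv$ and $\abs{v(x)}+\abs{w(x)}\leq 2C^{1/2}\jap{x}^{-\beta}$. Then $v\colon L^{2,-\beta}(\bR^d)\to L^2(\bR^d)$ and $w\colon L^2(\bR^d)\to L^{2,\beta}(\bR^d)$ are bounded, $\lambda$-independent multiplication operators, and $K(z):=vR_0(z)w$, obtained as the composition $L^2\to L^{2,\beta}\to L^{2,-\beta}\to L^2$, is such that each boundary value $K(\lambda\pm i0)$ belongs to $C^{n,\theta}_{\rm loc}(J,\cB(L^2(\bR^d)))$ and is compact. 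One then has
\begin{equation*}
R(z)=R_0(z)-R_0(z)\,w\,(1+K(z))^{-1}\,v\,R_0(z).
\end{equation*}
The middle factor is invertible for every $\lambda\in J$: a non-trivial solution of $(1+K(\lambda\pm i0))\phi=0$ would, via the Birman--Schwinger correspondence and Agmon's uniqueness argument, produce an $L^2$-eigenfunction of $H$ at the positive energy $\lambda$, which is excluded since $\abs{V(x)}\leq C\jap{x}^{-2\beta}=o(\abs{x}^{-1})$ (Kato's theorem; $2\beta>1$ is exactly the required decay), cf.\ \cite{GGM,ABG}. Hence $M(\lambda):=(1+K(\lambda\pm i0))^{-1}\in\cB(L^2(\bR^d))$ exists on $J$; from $M(\lambda)-M(\mu)=-M(\lambda)(K(\lambda)-K(\mu))M(\mu)$, differentiating up to order $n$ by the Leibniz rule and estimating the top-order difference quotient gives $M\in C^{n,\theta}_{\rm loc}(J,\cB(L^2(\bR^d)))$.

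Finally I would assemble the identity as a family of operators $L^{2,s}(\bR^d)\to L^{2,-s}(\bR^d)$. The first term $R_0(\cdot\pm i0)$ is already of class $C^{n,\theta}_{\rm loc}$ here, as $s>\tfrac12+n+\theta$. For the perturbation term, the inequality $s\geq\beta$ yields the continuous inclusions $L^{2,s}(\bR^d)\hookrightarrow L^{2,\beta}(\bR^d)$ and $L^{2,-\beta}(\bR^d)\hookrightarrow L^{2,-s}(\bR^d)$, so that $R_0(\cdot\pm i0)\colon L^{2,s}(\bR^d)\to L^{2,-\beta}(\bR^d)$ and $R_0(\cdot\pm i0)\colon L^{2,\beta}(\bR^d)\to L^{2,-s}(\bR^d)$ are still $C^{n,\theta}_{\rm loc}$; together with $v,w,M(\cdot)$ this exhibits $R_0(\cdot\pm i0)\,w\,M(\cdot)\,v\,R_0(\cdot\pm i0)$ as a product of finitely many $C^{n,\theta}_{\rm loc}$ operator-valued functions composed with fixed bounded maps, hence again $C^{n,\theta}_{\rm loc}$ --- this uses the Leibniz rule for the first $n$ derivatives and the inequality $\norm{f_1(\lambda)\cdots f_k(\lambda)-f_1(\mu)\cdots f_k(\mu)}\leq\sum_{j}\norm{f_1(\lambda)}\cdots\norm{f_j(\lambda)-f_j(\mu)}\cdots\norm{f_k(\mu)}$ for the top-order H\"older bound. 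This gives $R(\cdot\pm i0)\in C^{n,\theta}_{\rm loc}(J,\cB(L^{2,s}(\bR^d),L^{2,-s}(\bR^d)))$. I expect the main obstacle to be the sharp free-resolvent regularity, which rests on the Sobolev trace theorem and the Cauchy-transform estimate; the hypothesis $s>\beta$ enters only through the two continuous inclusions above, which let the free resolvents in the perturbation term act between the weaker weights $\pm\beta$ on which the multiplication operators built from $V$ are bounded.
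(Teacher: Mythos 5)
Your argument is correct in outline, and it is worth noting up front that the paper does not actually prove this proposition: it is presented as ``a special case of the results in \cite{GGM,ABG}'', i.e.\ imported from Mourre/commutator theory as a black box. What you supply instead is the classical stationary (Agmon--Kuroda) proof: free-resolvent regularity in weighted spaces from the trace theorem together with the Cauchy-transform estimate of Appendix~\ref{appendix1}, the symmetrized Birman--Schwinger identity $R=R_0-R_0w(1+vR_0w)^{-1}vR_0$, invertibility of $1+K(\lambda\pm i0)$ from compactness plus absence of positive eigenvalues (Kato's theorem, available since $2\beta>1$), and a Leibniz/telescoping bootstrap for the regularity of the inverse and of the full composition. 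This buys a self-contained argument that meshes with the $\Gamma_0$-machinery already set up in Appendix~\ref{appendix2} and makes visible exactly where each hypothesis enters ($s>\beta$ only through the weight inclusions, $\beta>\tfrac12+n+\theta$ only through the free resolvent); the price is that it is tied to the explicit Fourier representation of $R_0$, whereas the commutator route of \cite{GGM,ABG} needs no such representation and extends far beyond this setting. If you write it out in full, three points need more care than your sketch suggests: (i) the passage from $(1+K(\lambda\pm i0))\phi=0$ to an $L^2$-eigenfunction is Agmon's bootstrap --- one must first deduce $\Gamma_0(\lambda)(w\phi)=0$ using the real-valuedness of $V$ (note $v\neq w$ in general, so this is not a one-line positivity computation) and then iteratively upgrade the decay of $\psi=-R_0(\lambda\pm i0)w\phi$; (ii) the compactness of $K(\lambda\pm i0)$ on $L^2$ should be derived from compactness of $\jap{\cdot}^{-\sigma}R_0(\lambda\pm i0)\jap{\cdot}^{-\sigma}$ for $\sigma>\tfrac12$, not from compactness of $V\colon L^{2,-\beta}\to L^{2,\beta}$, which the pointwise bound $\abs{V(x)}\leq C\jap{x}^{-2\beta}$ alone does not give (take $V=\jap{\cdot}^{-2\beta}$); (iii) before running the difference identity for $M=(1+K)^{-1}$ you need $\sup_{\lambda\in I}\norm{M(\lambda)}<\infty$ on compact $I\subset J$, which follows from pointwise invertibility plus norm continuity of $K$ via a Neumann series. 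None of these affects the correctness of the overall plan.
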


With these preparations we state the following definition.
\begin{definition}
Assume that Assumption~\ref{SR} is verified for some $\beta>\tfrac12$ and that Assumption~\ref{LAP} is satisfied for $s=\beta$. For $f\in L^{2,\beta}(\bR^d)$ we define
\begin{equation}\label{B8}
\Gamma_{\pm}(\lambda)f=\Gamma_0(\lambda)(I-VR(\lambda\pm i0))f.
\end{equation}
\end{definition}
These operators then have the same H\"older continuity properties as $\Gamma_0$, with $s=\beta$.

\begin{definition}
The spectral representations are defined for $f\in L^{2,\beta}(\bR^d)$ by
\begin{equation}
(\cF_{\pm}f)(\lambda)(\omega)=(\Gamma_{\pm}(\lambda)f)(\omega),\quad \lambda\in J,\; \omega\in\bS^{d-1}.
\end{equation}
\end{definition}

We denote the projection onto the absolutely continuous subspace for $H$ by $\pac$. Then we can state the following result.
\begin{proposition}
The operators $\cF_{\pm}$ extend to unitary operators from $\pac\cH$ to $L^2(J,\sh)$.
Furthermore, we have $\cF_{\pm} H =M_{\lambda}\cF_{\pm}$, where $M_{\lambda}$ is the operator of multiplication by $\lambda$ in $L^2(J,\sh)$.
\end{proposition}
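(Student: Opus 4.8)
This is the statement that the generalized Fourier transforms $\cF_{\pm}$ diagonalize the absolutely continuous part of $H=H_0+V$; it is the stationary face of asymptotic completeness, and I would prove it by the classical route, combining Stone's formula with time-dependent scattering theory, within the Kuroda $\Gamma$-operator formalism of \cite{kuroda}. Introduce the wave operators $W_{\pm}=\slim_{t\to\mp\infty}e^{itH}e^{-itH_0}$. Under Assumption~\ref{SR}, which forces $\abs{V(x)}\lesssim\jap{x}^{-2\beta}$ with $2\beta>1$, these limits exist by Cook's argument applied to the dense set of $f$ with $\widehat{f}\in C_0^{\infty}(\bR^d\setminus\{0\})$; $W_{\pm}$ are then isometries with $HW_{\pm}=W_{\pm}H_0$ and $\range W_{\pm}\subseteq\pac\cH$. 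The one nontrivial external input is \emph{asymptotic completeness}, $\range W_{\pm}=\pac\cH$, classical for potentials of this decay (Kato--Kuroda trace-class scheme, or Mourre's commutator method); I would quote it from \cite{kuroda,yafaevI,yafaevII,kuroda-aarhus}. Granting it, $W_{\pm}\colon\cH\to\pac\cH$ are unitary.

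The core computation is a Parseval identity. For $f,g\in L^{2,\beta}(\bR^d)$ and $\varphi\in C_0(J)$, Stone's formula together with Assumption~\ref{LAP} yields
\begin{equation*}
\langle f,\varphi(H)\pac g\rangle=\int_J\varphi(\lambda)\,\bigl\langle f,\,\tfrac{1}{2\pi i}\bigl((H-\lambda-i0)^{-1}-(H-\lambda+i0)^{-1}\bigr)g\bigr\rangle\,d\lambda .
\end{equation*}
The density-of-states operator appearing here equals $\Gamma_{\pm}(\lambda)^{\ast}\Gamma_{\pm}(\lambda)$. This --- the heart of the matter --- follows from the free identity $\tfrac{1}{2\pi i}\bigl((H_0-\lambda-i0)^{-1}-(H_0-\lambda+i0)^{-1}\bigr)=\Gamma_0(\lambda)^{\ast}\Gamma_0(\lambda)$, which is a direct Fourier computation (restriction to the sphere $\abs{\xi}^2=\lambda$ composed with its adjoint), combined with the second resolvent identity written as $I-VR(\lambda\pm i0)=\bigl(I+V(H_0-\lambda\pm i0)^{-1}\bigr)^{-1}$ on $L^{2,\beta}(\bR^d)$; it is exactly here that the absence of positive eigenvalues built into Assumption~\ref{LAP} is used, through the Fredholm alternative for the compact operator $V(H_0-\lambda\pm i0)^{-1}$ (Assumption~\ref{SR}). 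Consequently $\langle f,\varphi(H)\pac g\rangle=\int_J\varphi(\lambda)\langle\Gamma_{\pm}(\lambda)f,\Gamma_{\pm}(\lambda)g\rangle_{\sh}\,d\lambda$; letting $\varphi\uparrow\mathbf{1}_J$ gives $\norm{\pac f}^2=\norm{\cF_{\pm}f}_{L^2(J,\sh)}^2$, so $\cF_{\pm}$ extends from the dense subspace $L^{2,\beta}(\bR^d)$ to a partial isometry with initial space $\pac\cH$, and using $\lambda\varphi(\lambda)$ in place of $\varphi(\lambda)$ gives the intertwining $\cF_{\pm}H\pac=M_{\lambda}\cF_{\pm}$. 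The same identity, compared with the corresponding one for $\cF_0$, shows $\cF_{\pm}=\cF_0 W^{\ast}$ on $L^{2,\beta}(\bR^d)$, hence everywhere, for the appropriate $W\in\{W_+,W_-\}$.

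Combining: $W\colon\cH\to\pac\cH$ is unitary by completeness, and $\cF_0\colon\cH\to L^2(J,\sh)$ is unitary by the proposition quoted in the excerpt, so $\cF_{\pm}=\cF_0 W^{\ast}$ maps $\pac\cH$ unitarily onto $\range\cF_0=L^2(J,\sh)$; with the intertwining this is precisely the claimed statement. The genuine obstacle is the completeness input $\range W_{\pm}=\pac\cH$; in the self-contained spirit of the appendix I would quote it rather than reprove it. Were one to insist on a purely stationary proof, one would establish surjectivity of $\cF_{\pm}$ by showing its adjoint injective on $L^2(J,\sh)$ --- that $\int_J\Gamma_{\pm}(\lambda)^{\ast}g(\lambda)\,d\lambda=0$ for compactly supported continuous $g$ forces $g=0$ --- which, via $\Gamma_{\pm}(\lambda)^{\ast}=\bigl(I+(H_0-\lambda\pm i0)^{-1}V\bigr)^{-1}\Gamma_0(\lambda)^{\ast}$ and the surjectivity of $\Gamma_0$, reduces to a boundary unique-continuation statement of the same depth as completeness. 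Everything else --- existence of $W_{\pm}$ by Cook, the free density-of-states identity, and the bookkeeping with weighted spaces and adjoints --- is routine under Assumptions~\ref{SR} and~\ref{LAP}.
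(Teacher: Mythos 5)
The paper does not actually prove this proposition. Appendix~B is a compilation of classical facts from stationary scattering theory, stated ``for the convenience of the reader,'' with the proofs delegated to the cited references (Kuroda's lecture notes and Yafaev's monographs). Your outline is a correct account of the standard argument, and each ingredient you name is the right one: the Parseval/diagonalization identity follows from Stone's formula once one establishes the density-of-states identity
\begin{equation*}
\Gamma_{\pm}(\lambda)^{*}\Gamma_{\pm}(\lambda)=\frac{1}{2\pi i}\bigl(R(\lambda+i0)-R(\lambda-i0)\bigr),
\end{equation*}
and your reduction of it via the free identity $\Gamma_0(\lambda)^*\Gamma_0(\lambda)=\frac{1}{2\pi i}(R_0(\lambda+i0)-R_0(\lambda-i0))$ together with $I-VR(\lambda\pm i0)=(I+VR_0(\lambda\pm i0))^{-1}$ is exactly how this is done (one can check directly that $(I-R(\lambda\mp i0)V)\,(R_0^{+}-R_0^{-})\,(I-VR(\lambda\pm i0))=R^{+}-R^{-}$ using the second resolvent identity). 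You correctly isolate the one genuinely deep input, surjectivity of $\cF_{\pm}$, as equivalent to asymptotic completeness, and correctly note it can either be imported from time-dependent theory (Cook plus Enss/Mourre or trace-class) or obtained by the dual stationary argument showing $\cF_{\pm}^{*}$ injective. The only very small imprecision is that Assumption~\ref{SR} as stated is a compactness condition between weighted spaces, which allows non-local $V$ and does not literally ``force'' pointwise decay $\abs{V(x)}\lesssim\jap{x}^{-2\beta}$; Cook's argument still works under the abstract hypothesis, so this is harmless. Since the paper gives no proof of its own, there is no competing method to compare against: your sketch is simply the standard proof that the paper's citations supply.
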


Let $E_0(\lambda)$ and $E(\lambda)$ denote the spectral families of $H_0$ and $H$ respectively. Then for $f,g\in L^{2,\beta}(\bR^d)$ we have for $\lambda\in J$
\begin{align}
\ip{f}{E_0^{\prime}(\lambda)g}&=\ip{\Gamma_0(\lambda)f}{\Gamma_0(\lambda)g},\\
\ip{f}{E^{\prime}(\lambda)g}&=\ip{\Gamma_{\pm}(\lambda)f}{\Gamma_{\pm}(\lambda)g}.
\end{align}

This framework is used for the derivation of the stationary scattering for the pair of operators $H_0$ and $H$.

\begin{remark}
The constructions above based on the Fourier transform can be applied to a number of constant coefficient pesudodifferential operators $H_0=f(-i\nabla)$, provided sufficient information is available on the energy surfaces $\{\xi\in\bR^d\,|\,f(\xi)=\lambda\}$. For example, one can construct free $\Gamma$ operators and spectral representations for the free Dirac operator and for the relativistic Schr\"{o}dinger operators $\sqrt{-\Delta+m^2}$, $m\geq0$.

Using a different transform one can also construct the free $\Gamma$ operator and the spectral representation for the free Stark operator $H_0=-\Delta+x\cdot\cE$, see~\cite{yajima} for the details and~\cite{yajima,jensen} for resonances in the analytic continuation framework based on this spectral representation.
\end{remark}

\section*{Acknowledgements}
H. Cornean and A. Jensen were partially supported by the Danish Council for Independent Research $|$ Natural Sciences, Grants 11-106598. G. Nenciu acknowledges support from a VELUX visiting professorship. G. Nenciu thanks the Department of Mathematical Sciences, Aalborg University, for its hospitality.
We thank the referees for their constructive comments.

%%%%%%%%%%%%% bibliography

\end{document}